\documentclass[a4paper,11pt]{article}

\usepackage[utf8]{inputenc}
\usepackage[T1]{fontenc}
\usepackage{lmodern}
\usepackage{microtype}
\usepackage{graphicx}
\usepackage{booktabs,tabularx}

\usepackage[margin=1in]{geometry}

\usepackage{mathtools}
\usepackage{amssymb}
\usepackage{amsthm}
\usepackage{complexity}
\numberwithin{equation}{section}
\usepackage{xcolor}

\usepackage{hyperref}

\hypersetup{colorlinks=true, linkcolor=blue, citecolor=blue, urlcolor=blue}
\usepackage[capitalise]{cleveref}

\newtheorem{theorem}{Theorem}[section]
\newtheorem{lemma}[theorem]{Lemma}

\newtheorem{corollary}[theorem]{Corollary}

\newtheorem{claim}[theorem]{Claim}

\theoremstyle{definition}
\newtheorem{definition}[theorem]{Definition}
\newtheorem{remark}[theorem]{Remark}
\newtheorem{example}[theorem]{Example}

\newenvironment{claimproof}{\begin{proof}[Proof of Claim]}{\end{proof}}

\crefname{claim}{Claim}{Claims}
\Crefname{claim}{Claim}{Claims}

\newcommand\pfun{\rightharpoonup}

\newcommand{\Bbound}{\ensuremath{B}}
\newcommand{\weight}{\omega}
\newcommand{\st}{\ensuremath{\mid}}
\newcommand{\cost}{\mathrm{cost}}
\newcommand{\costGED}{\mathrm{cost}}

\renewcommand{\phi}{\varphi}
\renewcommand{\epsilon}{\varepsilon}
\newcommand{\ged}{\ensuremath{\delta_{\mathsf{edit}}}}

\newcommand{\VC}{{\textup{VC}}}

\newcommand{\Bigmid}{\mathrel{\Big|}}
\newcommand{\ceil}[1]{\left\lceil#1\right\rceil}
\newcommand{\floor}[1]{\left\lfloor#1\right\rfloor}

\renewcommand{\tilde}{\widetilde}
\renewcommand{\hat}{\widehat}
\renewcommand{\bar}{\overline}

\newcommand{\symdiff}{\mathbin{\triangle}}

\usepackage[backend=bibtex,style=numeric,maxnames=99]{biblatex}
\renewbibmacro*{doi+eprint+url}{%
  \iftoggle{bbx:url}
  {\iffieldundef{doi}{\usebibmacro{url+urldate}}{}}
  {}%
  \newunit\newblock
  \iftoggle{bbx:eprint}
  {\usebibmacro{eprint}}
  {}%
  \newunit\newblock
  \iftoggle{bbx:doi}
  {\printfield{doi}}
  {}%
}

\newcommand{\defpromiseproblem}[4]{
  \vspace{1mm}
  \noindent\fbox{
  \begin{minipage}{0.96\textwidth}
  \begin{tabular*}{\textwidth}{@{\extracolsep{\fill}}lr} #1 \\ \end{tabular*}
  {\bf{Input:}} #2  \\
  {\bf{Promise:}} #3  \\
  {\bf{Question:}} #4
  \end{minipage}
  }
  \vspace{1mm}
}

\newcommand{\WL}[2]{\chi_{(\infty)}^{#1,#2}}
\newcommand{\WLit}[3]{\chi_{(#2)}^{#1,#3}}

\begin{document}

\title{Robust Graph Isomorphism,\\Quadratic Assignment and VC Dimension}
\author{Anatole Dahan, Martin Grohe, Daniel Neuen, Tomáš Novotný}
\maketitle

\begin{abstract}
We present an additive $\epsilon n^{2}$-approximation algorithm for the Graph
Edit Distance problem (GED) on graphs of~VC~dimension $d$ running in time
$n^{O(d/\epsilon^{2})}$.
In particular, this recovers a previous result by Arora, Frieze, and Kaplan
[Math.\ Program.\ 2002] who gave an~\mbox{$\epsilon n^{2}$-approximation}
running in time $n^{O(\log n/\epsilon^{2})}$.

Similar to the work of~Arora et al., we extend our results to arbitrary
Quadratic Assignment problems (QAPs) by introducing a notion of~VC~dimension
for QAP instances, and giving an $\epsilon n^{2}$-approximation for QAPs with
bounded weights running in time $n^{O(\epsilon^{-2}(d + \log\epsilon^{-1}))}$.

As a particularly interesting special case, we further study the problem
$\epsilon$-$\GI$, which entails determining if two graphs $G,H$ over $n$
vertices are isomorphic, when promised that if they are not, their graph edit
distance is at least $\epsilon n^{2}$.
We show that the standard Weisfeiler--Leman algorithm of~dimension
$O(\epsilon^{-1}d\log(\epsilon^{-1}))$ solves this problem on graphs of
VC~dimension $d$.
We also show that dimension $O(\epsilon^{-1}\log n)$ suffices on arbitrary
$n$-vertex graphs, while $k$-WL fails on instances at distance
$\Omega(n^{2}/k)$.
\end{abstract}

\section{Introduction}

Computing the~distance between two graphs is a~notoriously hard
algorithmic problem
\cite{ArvindKKV12,GervensG22,GroheRW18,KollaKMS18}. One of~the~most
natural distance measures is the~\emph{graph edit distance}, defined to~be
the~minimum number of~edges that have to~be flipped (added or deleted) in
one of~the~graphs to~make it isomorphic to~the~other graph (assuming
both graphs have the~same number of~vertices). Grohe, Rattan, and
Woeginger~\cite{GroheRW18} proved that the~problem of~computing the
edit distance between two graphs of~the~same order (denoted by GED in the
following) is NP-hard even if both input graphs are trees, and~Arvind,
Köbler, Kuhnert, and~Vasudev~\cite{ArvindKKV12} proved that there is
no PTAS for GED unless $\P =  \NP$. However, there is a~very interesting and
nontrivial upper bound. Arora, Frieze, and~Kaplan \cite{AroraFK02} give an
additive approximation algorithm for GED running in
quasipolynomial time: given $\epsilon>0$ and~two graphs $G,H$ of~order
$n$, the~algorithm approximates the~edit distance up to~an additive
error of~$\epsilon n^2$ in time $n^{O(\epsilon^{-2}\log n)}$. Note
that such an approximation is only interesting in the~\emph{dense regime}
of~graphs with $\Omega(n^2)$ edges.

We parametrise the~edit-distance problem by the~VC~dimension of~the
input graphs, where we define the~VC~dimension of~a graph $G$ to~be
the~VC~dimension of~the~neighbourhood set system
$\mathcal N_G\coloneqq\{ N_G(v)\mid v\in V(G)\}$.
As our first main result, we show that edit distance of~graphs
of~bounded VC~dimension can be approximated up to~an additive error
of~$\epsilon n^2$ in~polynomial~time.
\begin{theorem}
    \label[theorem]{thm:main-edit-distance}
    There is an algorithm that, given $\varepsilon > 0$ and~$n$-vertex
    graphs $G,H$ of~VC~dimension at~most $d$, outputs a~bijection
    $\phi\colon V(G) \to V(H)$ with $\cost(\phi) \leq \ged(G,H) + \epsilon n^2$
    in time $n^{O\left(\epsilon^{-2}d\right)}$.
\end{theorem}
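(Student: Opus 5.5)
We follow the exhaustive-sampling scheme of Arora, Frieze, and Kaplan, and use bounded VC dimension to replace their $O(\log n)$ sample size by $O(\epsilon^{-2} d)$. The objective is
\[
\cost(\phi)=\sum_{u,v} [\,uv\in E(G)\,]\oplus[\,\phi(u)\phi(v)\in E(H)\,]
\]
(each pair counted once). For a fixed vertex $u$ and a target $\phi(u)=x$, the contribution of $u$ is a linear function of the assignment of the other vertices. Write $\phi^*$ for an optimal bijection.

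The plan has four steps.

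\textbf{Step 1 (sampling and guessing).} Sample a set $S\subseteq V(G)$ of $k=O(\epsilon^{-2} d)$ vertices, with constants chosen so that the later steps work. Enumerate all injective guesses $\psi\colon S\to V(H)$. This costs $n^{k}=n^{O(\epsilon^{-2}d)}$, and one guess equals $\phi^*|_S$.

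\textbf{Step 2 (estimating costs).} For that correct guess, estimate for every $u\in V(G)$ and $x\in V(H)$ the quantity
\[
c(u,x)=\#\{v : [uv\in E(G)]\oplus[x\phi^*(v)\in E(H)]\}
\]
by the rescaled sample count
\[
\hat c(u,x)=\tfrac{n}{k}\,\#\{s\in S: [us\in E(G)]\oplus[x\psi(s)\in E(H)]\}.
\]
The key claim is that, with constant probability over $S$, we have $|\hat c(u,x)-c(u,x)|\le \epsilon n$ simultaneously for all $n^2$ pairs $(u,x)$. For a single pair, Chernoff gives this with $k=O(\epsilon^{-2})$; a union bound over all pairs would force $k=O(\epsilon^{-2}\log n)$, which is exactly the AFK bound. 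Instead we view the tested set as a range: it is the symmetric difference of $N_G(u)$ and $(\phi^*)^{-1}(N_H(x))$. Both families $\mathcal N_G$ and $(\phi^*)^{-1}(\mathcal N_H)$ have VC dimension at most $d$, since a bijection preserves VC dimension. The family of symmetric differences $\{A\symdiff B\}$ therefore has VC dimension $O(d)$ by the standard Sauer--Shelah composition bound. The $\epsilon$-approximation theorem of Vapnik--Chervonenkis then gives uniform error $\epsilon n$ from $O(\epsilon^{-2}(d+\log(1/\delta)))$ samples. Here $S$ is random while $\phi^*$ is fixed, so the uniform-convergence argument is legitimate.

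\textbf{Step 3 (linear relaxation and rounding).} For each guess we solve the LP relaxation of the assignment problem with linearised costs $\hat c(u,x)$, in the style of AFK's ``fix the linear part'' approach. Concretely, find a fractional assignment $z$ minimising $\sum_{u,x}\hat c(u,x)z_{ux}$ subject to the constraints that the induced counts approximately agree with the estimates, within $\epsilon n$ per vertex. Here $\phi^*$ is feasible and has value $\approx 2\,\cost(\phi^*)$. Round $z$ by randomized rounding, or simply solve the min-cost bipartite matching with weights $\hat c$ after adding the consistency constraints as in AFK; their rounding analysis loses only $O(\epsilon n^2)$. Output the best bijection found over all guesses, evaluating $\cost$ exactly for each candidate.

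\textbf{Step 4 (error accounting).} Each vertex incurs error $O(\epsilon n)$, giving $O(\epsilon n^2)$ in total. Rescaling $\epsilon$ gives the stated bound.

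The main obstacle is Step 3: showing that a bijection $\phi$ whose linearised costs match the estimates actually has true quadratic cost close to $\cost(\phi^*)$. The linear proxy $\hat c$ is tied to $\phi^*$, not to $\phi$. We would try AFK's argument: add the constraints $\sum_x\sum_{v}[\ldots]z_{vx}\approx\hat c$ so that the rounded $\phi$ approximately reproduces the neighbourhood statistics. Rounding error should then be controlled by Chernoff per vertex, which needs only $n^2$ events and so costs nothing in the exponent. The VC argument is needed only in Step 2, for the dependence on the unknown $\phi^*$.
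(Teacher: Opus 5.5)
Your proposal is correct and is essentially the paper's proof. The paper also reduces GED to a $0$-$1$ QAP and bounds the VC dimension of the symmetric-difference system $\{N_G(u)\symdiff(\phi^*)^{-1}(N_H(x))\}$ by $10d$ via Sauer--Shelah. It then uses the $\epsilon$-approximation lemma to get $\alpha\subseteq\graph(\phi^*)$ of size $O(d/\epsilon^2)$, enumerating all partial injections of that size deterministically instead of sampling. Finally it runs the AFK linear program with the two-sided constraints $\mathbf a(v,v')\cdot x\in[b_\alpha(v,v')\pm\frac13\epsilon n]$ and applies AFK's APEC rounding theorem, exactly as in your Steps 2--4.

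The one thing to tighten is your ``or simply solve the min-cost bipartite matching'' alternative. Once the consistency constraints are added, the problem is no longer a plain bipartite matching problem, so AFK's rounding theorem, not an exact matching solver, is genuinely needed there.
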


Here, $\ged(G,H)$ denotes the~graph edit distance between $G$ and~$H$,
and $\cost(\phi)$ denotes the~number of~edge mismatches between $G$
and $H$ under the~bijection $\phi$. Note that the~algorithm interpolates
smoothly between this case of~bounded VC~dimension and~Arora et al.'s
quasipolynomial time algorithm for arbitrary graphs \cite{AroraFK02},
as the~VC~dimension of~an $n$-vertex graph is at~most $\log n$.

In fact, our algorithm even works for edge-weighted graphs (with
rational edge weights bounded in their absolute value).  Here, an
isomorphism also needs to~preserve edge weights, and~changing the
weight of~an edge $e$ from $w$ to~$w'$ has cost $|w - w'|$ (where
non-edges are assumed to~have weight $0$).  Also, to~generalise the
notion of~VC~dimension, we employ the~natural approach of~defining
a neighbourhood set system for each weight
threshold $t \in \mathbb Q$.  More precisely, for every
$t \in \mathbb Q$, we require that the~set system
$\mathcal N_G^{t}\coloneqq\{ N_G^{t}(v)\mid v\in V(G)\}$ has VC
dimension at~most $d$, where $N_G^{t}(v)$ denotes all neighbours of
$v$ reachable via an edge of~weight $> t$.\footnote{ One might also be
  tempted to~consider the~set system
  $\mathcal N_G^* \coloneqq \{ N_G^{t}(v)\mid v\in V(G), t \in \mathbb
  Q\}$.  However, note that this is a~more restrictive definition: If
  $\mathcal N_G^*$ has VC~dimension $d$, then in particular
  $\mathcal N_G^{t}$ has VC~dimension at~most $d$ for every
  $t \in \mathbb Q$.}

While VC~dimension has played an important role in algorithmic
geometry at~least since the~late 1980s (see, for example,
\cite{Matousek99}), graph algorithms exploiting bounded VC~dimension
have only started to~appear fairly recently
\cite{ChanCGKLZ25,DucoffeHV22,DurajKP24,LeW24}. In this context, it is
worth mentioning that the~monadically dependent graph classes,
conjectured to~be the~limit for the~fixed-parameter tractability of
first-order model checking, can be defined in terms of~bounded VC
dimension of~set systems definable in first-order logic
(see, for example, \cite[Conjecture~1]{DreierEMMPT24}). However, all
these results and~the~conjecture are based on
more general set systems than just the~neighbourhood set system
$\mathcal N_G$. We are not aware of~algorithmic results just
exploiting a~bounded VC~dimension of~$\mathcal N_G$.
However, many other graph parameters that have received a~lot of
attention in the~graph theory literature, such as maximum degree,
tree width, clique width \cite{CourcelleO00}, twin width \cite{BonnetKTW22}
or merge width \cite{DreierT25}, can be bounded in terms of~VC
dimension, so algorithms for graph classes of~bounded VC~dimension
also apply to~classes where any one of~these parameters is bounded.

Actually, Arora et al.'s approximation algorithm \cite{AroraFK02}
is not just for edit distance, but for the~much wider family of~quadratic
assignment problems (QAP). In an assignment problem, we generally wish
to~minimise a~cost function over the~set of~all permutations on
the~set $[n] \coloneqq \{1,\dots,n\}$. Each permutation can be represented by 
$n^2$ variables $x(v,v')$ such that
\begin{align}
\label{eq:qap1}
 \sum_{v'} x(v,v') &=\sum_{v} x(v,v')= 1 &&\forall v,v' \in [n], \\
 \label{eq:qap2}
 x(v,v') &\geq 0 &&\forall v,v' \in [n].
\end{align}
An (integral) solution $x$ satisfying these constraints is called
a \emph{perfect matching}. In an instance of~QAP, we are given rational
coefficients $c(v,v',w,w')\in \mathbb Q$ for all $v,v',w,w'$ and~we wish
to~find a~perfect matching $x$ that minimises
\[\cost(x) = \sum_{v,v',w,w'} c(v,v',w,w') \cdot x(v,v') \cdot x(w,w').\]
It is well-known that GED is a~special case of~QAP \cite{AroraFK02,ArvindKKV12};
see~\cref{ex:ged_qap}.

To generalise our results to~QAP, we introduce a~notion of~VC~dimension
for QAP instances. Basically, the~idea is to~view the~coefficients
$c(v,v',w,w')$ as edge weights in a~graph with vertex set $[n] \times [n]$,
and~define the~VC~dimension of~the~QAP instance to~be the~VC~dimension of~the
resulting edge-weighted graph. We stress that for graph problems such as GED,
the~VC~dimension of~the~corresponding QAP instance coincides with the
VC~dimension of~the~underlying (edge-weighted) graphs (up to~a constant factor).
We obtain the~following result that extends the~results
of~Arora et al.\ \cite{AroraFK02}. A QAP instance is \emph{$\Bbound$-bounded}
if $|c(v,v',w,w')| \leq \Bbound$ for all $v,v',w,w' \in [n]$.

\begin{theorem}
    \label{thm:quadratic-assignment-bounded-vc-approximation-main}
    There is an algorithm that, given $\varepsilon > 0$ and~a $\Bbound$-bounded
    QAP instance of~VC~dimension at~most $d$, outputs a~perfect matching $x$
    such that $\cost(x) \leq \cost(x^*) + \epsilon n^2$ in~time
    $n^{O\left( \Bbound^2\epsilon^{-2}(d + \log \Bbound\epsilon^{-1})\right)}$,
    where $x^*$ denotes an optimum solution.
\end{theorem}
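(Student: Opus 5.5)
We follow the Arora--Frieze--Kaplan scheme of exhaustive sampling and linearisation, and use bounded VC dimension to shrink the sample from $O(\epsilon^{-2}\log n)$ to $O(\epsilon^{-2}(d+\log \Bbound\epsilon^{-1}))$ points. Write the cost of a perfect matching $x$ as
\[
\cost(x)=\sum_{v,v'} x(v,v')\, r_x(v,v'), \qquad r_x(v,v') = \sum_{w,w'} c(v,v',w,w')\,x(w,w').
\]
If we knew the ``linear coefficients'' $r_{x^*}(v,v')$ of an optimum $x^*$ up to additive error $\epsilon n$ for every pair $(v,v')$, we could proceed by linearisation. We would solve the linear assignment problem with costs $\tilde r(v,v')$, possibly adding the constraints $|\sum_{w,w'}c(v,v',w,w')x(w,w')-\tilde r(v,v')|\le \epsilon n$ to an LP relaxation of \eqref{eq:qap1}--\eqref{eq:qap2}, and then round. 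This would give a matching of cost at most $\cost(x^*)+O(\epsilon n^2)$, exactly as in \cite{AroraFK02}. We take that linearisation-and-rounding step (LP with $x^*$ feasible, followed by randomized or dependent rounding that preserves each row sum up to $O(\Bbound\sqrt{n\log n})$) as the black box, and focus on estimating the coefficients.

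To estimate $r_{x^*}$, we guess a random sample $S\subseteq[n]$ of size $s$ together with the restriction $x^*|_S$, that is, the images $x^*$ assigns to the elements of $S$. There are $n^{O(s)}$ such guesses. We then estimate
\[
\tilde r(v,v')=\frac{n}{s}\sum_{w\in S} c(v,v',w,x^*(w)).
\]
For a fixed $(v,v')$, Hoeffding's inequality gives error $\epsilon n$ with failure probability $\exp(-\Omega(s\epsilon^2/\Bbound^2))$. A union bound over all $n^2$ pairs is exactly what forces $s=\Theta(\Bbound^2\epsilon^{-2}\log n)$ in \cite{AroraFK02}.

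The VC-dimension step replaces this union bound with a uniform convergence bound. Fix $x^*$ and consider the ground set $\{(w,x^*(w)) : w\in[n]\}$ of size $n$. Each pair $(v,v')$ defines a function $f_{v,v'}(w)=c(v,v',w,x^*(w))\in[-\Bbound,\Bbound]$, and $r_{x^*}(v,v')/n$ is the mean of this function. Writing $f=\int \mathbf 1[f>t]\,dt$ and discretising $t$ into $O(\Bbound/\epsilon)$ thresholds, it suffices to approximate simultaneously the densities of the sets
\[
N^t(v,v')\cap\{(w,x^*(w))\}
\]
over all pairs and all thresholds. By the definition of QAP VC dimension, each threshold family has VC dimension at most $d$. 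The $\epsilon$-approximation theorem then says that a sample of size
\[
O\bigl((\epsilon/\Bbound)^{-2}(d+\log(\Bbound/\epsilon))\bigr)
\]
is, with constant probability, simultaneously an $(\epsilon/\Bbound)$-approximation for all of the threshold families, after a union bound over the $O(\Bbound/\epsilon)$ thresholds. Summing over thresholds yields $|\tilde r(v,v')-r_{x^*}(v,v')|\le O(\epsilon n)$ for all pairs at once. This gives $s=O(\Bbound^2\epsilon^{-2}(d+\log\Bbound\epsilon^{-1}))$ and therefore running time $n^{O(s)}$. The sample can be derandomised by trying all $s$-subsets, which costs only $n^{s}$ and fits within the same bound.

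The main obstacle is making sure that the restriction to the graph of $x^*$ is legitimate. The VC dimension is defined on $[n]\times[n]$, but we sample only along the $n$ points $(w,x^*(w))$. A restriction of a set system has VC dimension no larger than the original, so this is fine. A second concern is that the rounding error must also be kept at $O(\epsilon n^2)$ without reintroducing a $\log n$ into the exponent. Since the rounding is polynomial time and only its error matters, a $\sqrt{n\log n}$ deviation per row is $o(\epsilon n)$ for large $n$. Small $n$ is handled by brute force.
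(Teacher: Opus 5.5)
Your proposal is correct and is essentially the paper's proof. The paper likewise takes the Arora--Frieze--Kaplan enumeration/LP/rounding step as a black box (\cref{arora-lemma}), with $b_\alpha$ playing the role of your $\tilde r$, discretises $[-\Bbound,\Bbound]$ into $\lceil 24\Bbound/\epsilon\rceil$ thresholds, applies the $\epsilon$-approximation lemma to each restricted system $\mathcal H^t_{c,\phi^*}$ with a union bound over thresholds, and passes from $\mathcal H^t_c$ to its restriction via \eqref{eqn:htcphi_le_htc}. The one imprecision is your description of the rounding black box: in \cite{AroraFK02} it yields a near-perfect matching that satisfies each constraint up to a $(1-o(1))$ factor plus $\tilde O(\Bbound\sqrt n)$ additive error, not a perfect matching with $O(\Bbound\sqrt{n\log n})$ deviation; this still gives $o(n)$ error per pair after arbitrary completion, which is all you use.
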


Going back to~graphs, a~particularly interesting special case of~GED
is the~\emph{robust graph isomorphism} problem. For every $\epsilon>0$,
consider the~following promise problem $\epsilon$-$\GI$: given two
$n$-vertex graphs $G,H$ such that either $G$ and~$H$ are isomorphic
or $G$ and~$H$ have edit distance at~least $\epsilon n^2$, decide
if $G$ and~$H$ are isomorphic. Note that this is a~version of~robust
graph isomorphism for the~dense regime. For the~sparse regime where
the~promise is that the~edit distance is either $0$ or at~least
$\epsilon(n+m)$, where $m$ denotes the~number of~edges of~the~input
graph, the~problem has been studied by O'Donnell, Wright, Wu, and
Zhou~\cite{OdonnellWWZ14}, and~they obtained various hardness results.

It is an open problem whether $\epsilon$-$\GI$ can be solved in
polynomial time for every fixed $\epsilon>0$; we remark that O'Donnell
et al.'s lower bounds \cite{OdonnellWWZ14} do not yield direct insights
for the~dense regime. It follows from \cite{AroraFK02} that $\epsilon$-$\GI$
can be solved in time $n^{O(\epsilon^{-2}\log n)}$, slightly improving
over Babai's quasi-polynomial isomorphism test \cite{Babai16}
(which solves the~problem without making use of~the~promise). More generally,
\cref{thm:main-edit-distance} implies that $\epsilon$-$\GI$ can be
solved in time $n^{O(\epsilon^{-2}d)}$ on~graphs of~VC~dimension at~most $d$.
Observe that this regime is interesting since the~restriction of~(exact)
graph isomorphism is known to~be hard (as hard as the~general graph
isomorphism problem) for graph classes of~bounded expansion and~bounded
twin width, both of~which implies bounded VC~dimension.

We show that both runtime bounds can be slightly improved, and in fact this can be achieved by a~much simpler
algorithm than for \cref{thm:main-edit-distance}. The
\emph{Weisfeiler-Leman (WL) algorithm}, going back to~\cite{WeisfeilerL68},
is a~simple combinatorial algorithm originally designed as an (incomplete)
isomorphism test, and~which is commonly used for isomorphism testing
(see, e.g., \cite{Babai16,cai_optimal_1992,GroheN21,Kiefer20}).
Additionally, it has recently received a~lot of~attention due to~a number
of~surprising connections with combinatorial
optimisation \cite{AtseriasM13,GroheO15,Malkin14}, graph homomorphisms
\cite{Dvorak10,DellGR18}, and~even machine learning
\cite{MorrisRFHLRG19,ShervashidzeSLMB11,XuHLJ19}.
We prove that the~$O(\epsilon^{-1}d\log(\epsilon^{-1}))$-dimensional
WL~algorithms solves $\epsilon$-$\GI$ on~graphs of~VC~dimension $d$.
Also, for general $n$-vertex graphs, we show that
$O(\epsilon^{-1}\log n)$-dimensional WL~algorithm solves $\epsilon$-$\GI$.
Since the $k$-dimensional WL algorithm can be implemented in time $O(k^2 n^{k+1}\log n)$ \cite{ImmermanL90}, this implies that
$\epsilon$-$\GI$ can be solved in time $n^{O(\epsilon^{-1}d\log(\epsilon^{-1}))}$ on~graphs of~VC~dimension at~most $d$,
and in time $n^{O(\epsilon^{-1}\log n)}$ on general graphs.

Our upper bound on the WL dimension are in stark contrast to~well-known results stating that the
$\Omega(n)$-dimensional WL~algorithm is necessary to~solve graph
isomorphism \cite{cai_optimal_1992}. In fact, the~$\Omega(n)$ lower bound
for the~WL~dimension even holds for graphs of~maximum degree $3$
\cite{cai_optimal_1992}, which in particular have constant VC~dimension.
For coloured graphs, we also give an upper bound of~$O(\epsilon^{-1}s)$
on the~WL~dimension of~$\epsilon$-$\GI$, where $s$ denotes the~maximum
colour class size. Again, this is contrasted by the~fact that $\Omega(n)$-WL
is necessary to~solve graph isomorphism on~graphs
of~colour class size $4$ \cite{cai_optimal_1992}.

We complement these results by a~lower bound stating that
stating that for every $\epsilon$ there is a~$k=\Omega(\epsilon^{-1})$ such that
$k$-dimensional WL does not solve $\epsilon$-$\GI$. Moreover, for graphs
of~bounded VC~dimension and~for graphs of~bounded colour class size,
we obtain tight polynomial bounds: the~smallest $k$ such
that $k$-WL solves $\epsilon$-$\GI$ on~these respective classes
is polynomial in $\epsilon$.

\section{Preliminaries}
\label{sec:preliminaries}

\newcommand{\graph}{\mathrm{graph}}
\newcommand{\funct}{\mathrm{func}}

For $a,b\in\mathbb R$, we denote by~$[a, b]$  the~set
$\{ x\in\mathbb R\st a \leq x \leq b\}$ and~by~$[a\pm b]$ the~set $[a-b,a+b]$.
For sets $X,Y$, we denote their symmetric difference by $X\symdiff Y \coloneqq (X \cup Y) \setminus (X \cap Y)$.
We often consider partial injective functions $\alpha\colon X\to Y$ between two sets $X,Y$.
Depending on the~context, it may be convenient
to view them as sets of~pairs $\alpha\subseteq X\times Y$. We use
the~notations $\alpha(x)=y$ and~$(x,y)\in\alpha$ interchangeably.

\paragraph{Graphs}
We use standard graph notation (see, e.g., \cite{Diestel25}).
All graphs considered in~this paper are simple (i.e., no loops or multiedges)
and undirected. We write $V(G)$ and~$E(G)$ for the~vertex and
edge set of~a~graph $G$, respectively.
We use $vw$ as a~shorthand for an~edge $\{v,w\}$ in~$G$.
For~$v \in V(G)$, we write $N_G(v) \coloneqq \{w \in V(G)\mid vw \in E(G)\}$
to denote the~neighbourhood of~$v$. Let $G,H$ be two graphs with
$|V(G)| = |V(H)| = n$, and~let $\pi$ be a~bijection from $V(G)$ to~$V(H)$.
For~a~set of~2-element subsets $P \subseteq \binom{V(G)}{2}$,
we write $P^\pi \coloneqq \{ \{\pi(v),\pi(w)\}\st \{v,w\}\in P\}$.
The~\emph{edit cost of~$\pi$ with respect to~$G,H$}, denoted
by~$\costGED_{G,H}(\pi)$, is the~cardinality of~the~set
$E(G)^\pi \symdiff E(H)$, i.e., the~sum of~the~number of
edges of~$G$ sent by $\pi$ to~non-edges of~$H$
and~the~number of~non-edges of~$G$ sent by~$\pi$ to~edges of~$H$.
We usually write $\costGED(\pi)$ if the~graphs $G,H$ are clear from context.
The \emph{graph edit distance from $G$ to~$H$}, denoted by~$\ged(G,H)$,
is the~minimum edit cost amongst all bijections between $V(G)$ and~$V(H)$.

An \emph{edge-weighted} graph is a~tuple $(G,\weight_G)$ where
$\weight_G\colon E(G) \to \mathbb R$ is~a~weight function.
Let \mbox{$(H,\weight_H)$} be a~second edge-weighted graph.
For a~bijection $\pi\colon V(G) \to V(H)$, we define
$\costGED_{G,H}(\pi)
    \coloneqq \sum_{v,w \in V(G)} |\weight_G(vw) - \weight_H(\pi(v)\pi(w))|$
(as a~convention, we define $\weight(vw) \coloneqq 0$ if~$vw \notin E(G)$).
Then the~graph edit distance is defined as~before.

A \emph{coloured} graph is a~tuple $(G,\lambda_G)$ with
$\lambda_G\colon V(G) \to C$, where $C$ is a~finite set of~colours.
The sets $\lambda_G^{-1}(c)$, $c \in C$, are the~\emph{colour classes}
of~$\lambda_G$. The graph edit distance over coloured graphs is defined
by~considering only those bijections that preserve the~colour of~the~vertices.

\paragraph{Quadratic Assignment Problems.}
Recall the~definition of~the~quadratic assignment problem (QAP) from the
introduction. We call the~domain size $n$ the~\emph{order} of~an instance.
Also recall that a~QAP instance is $B$-bounded if the~absolute value of~all
coefficients is at most $B$. We write $\mathcal A$ to~denote integral feasible
solutions $x$ (to the~constraints \eqref{eq:qap1}, \eqref{eq:qap2}); each such
solution corresponds to~a~bijection $\phi_x\colon[n]\to[n]$ mapping every
$v \in [n]$ to~the~unique $v' \in [n]$ such that $x(v,v') = 1$. In this case,
we have
\[ \cost(x) = \sum_{v\in[n]}\sum_{w\in[n]} c(v,\varphi_x(v),w,\varphi_x(w)).\]

\begin{example}[\cite{AroraFK02}]
  \label[example]{ex:ged_qap}
  Let $G,H$ be two graphs with $V(G) = V(H) = [n]$.
  Consider the~coefficient function
  \begin{equation}
    \label{eq:1}
        c(v,v',w,w') \coloneqq\begin{cases}
      1&\text{if }(v,w)\in E(G)\iff (v',w')\not\in E(H)\\
      0&\text{otherwise}
    \end{cases}
  \end{equation}
  An optimal integral solution to~the~QAP defined by~$c$ is an
  assignment $x^*$ such that the~bijection $\varphi_{x^*}: V(G)\to
  V(H)$ has minimal edit cost. As such, $\cost(x^*) = 2 \cdot \ged(G,H)$.
  The factor 2 is necessary because in~the~QAP cost, each edge is counted twice.
  Similarly, we can formulate a~QAP for weighted graphs $(G,\omega_G)$
  and~$(H,\omega_H)$ by~letting
  \begin{equation}
    \label{eq:graph-qap-cost-function}
     c(v,v',w,w')\coloneqq|\omega_G(v,w)-\omega_H(v',w')|.
  \end{equation}
\end{example}

\section{VC~Dimension}
\label{sec:vc_dimension}

In this section, we introduce and compare various set systems for (weighted) graphs
and QAP instances, which give rise to corresponding notions of~VC~dimension.
We start by recalling basic definitions and~standard tools in~this~context.

\paragraph{Basics.}
A \emph{set system on V} is a~set $\mathcal H\subseteq 2^V$
(where $2^V$ denotes the~power set of~$V$).
A subset $X\subseteq V$ is \emph{shattered by $\mathcal H$} if
$\mathcal H\cap X \coloneqq \{ H\cap X \mid H\in\mathcal H\} = 2^X$.
The \emph{VC~dimension} $\VC(\mathcal H)$ of~$\mathcal H$ is
the~maximal cardinality of~a subset $X\subseteq V$ shattered by $\mathcal H$.
Let $\varepsilon > 0$. An \emph{$\varepsilon$-net} for $\mathcal H$
is a~set $S\subseteq V$ such that for all $H\in\mathcal H$
with $|H| > \varepsilon\cdot |V|$, we have $H\cap S \ne\emptyset$.
Furthermore, we say that $S$ is an~\textit{$\epsilon$-approximation
for $\mathcal H$} if for all
$H \in \mathcal H$,
\[ |H| \in \left[ \frac{n}{|S|} |H \cap S|  \pm \epsilon n\right].\]
Let us recall a~few combinatorial lemmas concerning VC~dimension.

\begin{lemma}[\cite{Matousek99}]
  \label[lemma]{lem:epsilon-net-general}
  For every set system $\mathcal H$ and $\epsilon > 0$,
  there is an~$\epsilon$-net $S$ for $\mathcal H$ with
  $|S| \leq \frac{\ln |\mathcal H|}{\epsilon}$.
\end{lemma}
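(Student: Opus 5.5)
The plan is a greedy construction, the standard argument behind \cref{lem:epsilon-net-general}. Let $n \coloneqq |V|$ and let $\mathcal H' \coloneqq \{H \in \mathcal H \mid |H| > \epsilon n\}$ be the sets that must be hit. Sets of size at most $\epsilon n$ impose no condition, so it suffices to find a small set $S$ meeting every member of $\mathcal H'$. If $\mathcal H' = \emptyset$, take $S = \emptyset$. We may also assume $|\mathcal H| \geq 2$, since the degenerate case $|\mathcal H| = 1$ (where $\ln|\mathcal H| = 0$) has to be handled separately anyway.

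The core step is to build $S$ one point at a time. At each step, pick a point $v \in V$ lying in as many still-unhit members of $\mathcal H'$ as possible. Let $m$ be the number of unhit sets before the step. Double counting incidences gives
\[ \sum_{v \in V} |\{H \in \mathcal H' \text{ unhit} \mid v \in H\}| = \sum_{H \text{ unhit}} |H| > m\,\epsilon n. \]
Hence some point lies in more than $\epsilon m$ of the unhit sets. After adding it, at most $(1-\epsilon)m$ sets remain unhit.

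Iterating, after $t$ steps at most $|\mathcal H'|(1-\epsilon)^t < |\mathcal H| e^{-\epsilon t}$ sets remain unhit. This drops below $1$, so equals $0$, as soon as $t \geq \ln|\mathcal H|/\epsilon$. So $|S| \leq \lceil \ln|\mathcal H|/\epsilon \rceil$.

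An alternative is to sample $t$ uniform random points. Each $H \in \mathcal H'$ is missed with probability $(1-\epsilon)^t < e^{-\epsilon t}$. A union bound over at most $|\mathcal H|$ sets then gives the same threshold.

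The only delicate point, which I expect to be the main obstacle, is rounding. The greedy argument yields $\lceil \ln|\mathcal H|/\epsilon\rceil$, whereas the statement asks for the bound without the ceiling. I would first try to sharpen the counting in the last step. The strict inequality $|H| > \epsilon n$ gives a strict decrease $m \mapsto < (1-\epsilon)m$. Also, when few sets remain, one point hits at least one set, so the final step removes all remaining sets once fewer than about $1/\epsilon$ are left. Together these should absorb the rounding loss.

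If that fails, I would read the bound as holding up to rounding, as is standard and as in \cite{Matousek99}. This is harmless for all later applications, where only $O(\ln|\mathcal H|/\epsilon)$ matters. The degenerate case $|\mathcal H| = 1$ would need the same kind of convention, since there a nonempty net may be required while the bound is $0$.
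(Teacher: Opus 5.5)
Your core argument is correct. The double-counting step (some point lies in more than $\epsilon m$ of the $m$ still-unhit sets of size $>\epsilon n$, so fewer than $(1-\epsilon)m$ survive) gives an $\epsilon$-net of size at most $\lceil \epsilon^{-1}\ln|\mathcal H|\rceil$ whenever $|\mathcal H|\ge 2$. The random-sampling/union-bound variant gives the same bound. The paper has no proof of its own and only cites \cite{Matousek99}. Your sampling argument is the standard one behind that citation, and the greedy version is an equally valid deterministic alternative.

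However, your first plan for the ``delicate point'' cannot succeed, because the bound without the ceiling is false, and not only when $|\mathcal H|=1$. Take $V=\{1,2,3\}$, $\mathcal H=\binom{V}{2}$ and $\epsilon=3/5$. Each pair has size $2>\epsilon n=9/5$, and no single point meets all three pairs, so every $\epsilon$-net has at least $2$ elements. Yet $\epsilon^{-1}\ln|\mathcal H|=\frac53\ln 3\approx 1.83$. More simply, $\mathcal H=\{\emptyset,V\}$ with $\epsilon=9/10$ forces $|S|\ge 1>\frac{10}{9}\ln 2$.

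The specific sharpening you propose is also wrong on its own terms. Having fewer than $1/\epsilon$ unhit sets only guarantees that the greedy point hits at least one of them, not all of them. For instance, with $\epsilon=1/3$, two disjoint $3$-sets in a $6$-element ground set need two points.

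So drop that attempt and state the lemma as $|S|\le\lceil\epsilon^{-1}\ln|\mathcal H|\rceil$ for $|\mathcal H|\ge 2$, with the trivial bound $1$ when $|\mathcal H|=1$; this is exactly what your argument proves. As you anticipated, this costs nothing in the paper. The lemma is used only in \cref{thm:log-wl-solves-robust-iso}, applied to $\mathcal M_G$ with $|\mathcal M_G|\le n^2$, where only the order $O(\epsilon^{-1}\log n)$ matters.
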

\begin{lemma}[\cite{haussler_e-nets_1987}]
  \label[lemma]{lem:epsilon-net}
  For every set system $\mathcal H$ of~VC~dimension $d$
  and every $\epsilon > 0$, there is an~$\epsilon$-net
  $S$ for $\mathcal H$ of~size $|S| \leq O\left(\frac{d}{\epsilon} \log \frac{1}{\epsilon} \right).$
\end{lemma}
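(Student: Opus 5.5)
The statement in question is \cref{lem:epsilon-net}, the Haussler--Welzl $\epsilon$-net theorem. The natural route is the probabilistic double-sampling argument combined with the Sauer--Shelah lemma.

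Throughout, set $n=|V|$ and let $\mathcal H_\epsilon=\{H\in\mathcal H : |H|>\epsilon n\}$. We draw a random multiset $S$ of $m$ independent uniform samples from $V$, with $m=C\frac{d}{\epsilon}\log\frac1\epsilon$ for a suitable constant $C$. The goal is to show that
\[
\Pr[E_1]<1, \qquad E_1 \text{ being the event that some } H\in\mathcal H_\epsilon \text{ satisfies } H\cap S=\emptyset .
\]
Then some outcome of $S$ is an $\epsilon$-net, and discarding repetitions only shrinks it.

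The key steps, in order, are as follows.

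\begin{enumerate}
\item \textbf{Double sampling.} Draw a second independent sample $T$, also of size $m$. Let $E_2$ be the event that some $H\in\mathcal H_\epsilon$ has $H\cap S=\emptyset$ and $|H\cap T|\ge \epsilon m/2$. Fix any witness $H$ for $E_1$. Then $|H\cap T|$ is binomial with mean greater than $\epsilon m$, so by Chebyshev it is at least $\epsilon m/2$ with probability at least $1/2$ once $\epsilon m\ge 8$. Hence $\Pr[E_2]\ge \Pr[E_1]/2$.

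\item \textbf{Symmetrisation.} Condition on the combined multiset $U=S\cup T$ of $2m$ points. The split into $S$ and $T$ is then a uniformly random partition into two halves. For a fixed trace $H\cap U$ containing at least $k=\epsilon m/2$ points, the probability that all of them fall into $T$ is at most
\[
\binom{m}{k}\Big/\binom{2m}{k}\le 2^{-k}=2^{-\epsilon m/2}.
\]

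\item \textbf{Union bound via Sauer--Shelah.} The event $E_2$ depends on $H$ only through its trace $H\cap U$. Since $\mathcal H$ has VC dimension $d$, the Sauer--Shelah lemma bounds the number of distinct traces on $2m$ points by $\sum_{i\le d}\binom{2m}{i}\le (2em/d)^d$. A union bound therefore gives
\[
\Pr[E_2]\le (2em/d)^d\, 2^{-\epsilon m/2}.
\]

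\item \textbf{Choice of $m$.} With $m=C\frac d\epsilon\log\frac1\epsilon$ and $C$ large, the factor $2^{-\epsilon m/2}=\epsilon^{\Theta(Cd)}$ beats $(2em/d)^d=(O(C\epsilon^{-1}\log\epsilon^{-1}))^d$. This makes $\Pr[E_2]<1/2$, and hence $\Pr[E_1]<1$.
\end{enumerate}

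The main obstacle is the symmetrisation step. One has to condition correctly on the multiset $U$, which is what lets the union bound range over polynomially many traces instead of the possibly huge family $\mathcal H$. After that, the only remaining care is the final calculation, making sure the $\log\frac1\epsilon$ factor suffices; using $\log(x)\le \sqrt{x}$-type bounds on $\log(m/d)$ handles it. Sauer--Shelah itself, the one ingredient not stated earlier in the paper, I would cite as standard (Sauer 1972; Shelah 1972).
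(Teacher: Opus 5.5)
Your proposal is correct. It is the standard Haussler--Welzl double-sampling argument (Chebyshev for the ghost sample $T$, symmetrisation over the $2m$ sample positions, and a union bound over at most $(2em/d)^d$ traces via Sauer--Shelah); the paper only cites this result and gives no proof of its own, so there is nothing different to compare against. The one point to make explicit is that your final calculation and the condition $\epsilon m \ge 8$ both need $d \ge 1$ and $\epsilon$ bounded away from $1$ (say $\epsilon \le 1/2$), which the statement tacitly assumes anyway, since $\log\frac{1}{\epsilon} \to 0$ as $\epsilon \to 1$.
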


\begin{lemma}[\protect{\cite{li_improved_2001}, \cite[Theorem 47.2.2]{goodman2025handbook}}]
  \label[lemma]{lem:epsilon-approximation-whp}
  For every set system $\mathcal H$ of~VC~dimension $d$
  and~every $\epsilon, \gamma > 0$, a~uniform sample of~size
  $O\left( \frac{1}{\epsilon^2} \left( d + \log\frac{1}{\gamma} \right) \right)$
  is an~$\epsilon$-approximation with probability at~least $1-\gamma$.
\end{lemma}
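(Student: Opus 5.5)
This is the $\epsilon$-approximation theorem of Li, Long and Srinivasan, quoted from the literature, so the plan is the standard symmetrisation-plus-chaining argument rather than anything specific to this paper. Set $m = C\epsilon^{-2}(d+\log\gamma^{-1})$ and let $S$ be an i.i.d.\ uniform sample of size $m$ from $V$, $|V|=n$. The goal is
\[
\Pr\Bigl[\sup_{H\in\mathcal H}\Bigl|\tfrac{|H\cap S|}{m}-\tfrac{|H|}{n}\Bigr|>\epsilon\Bigr]\le\gamma ,
\]
which is exactly the statement that $S$ is an $\epsilon$-approximation. If a sample without repetition is wanted, sampling without replacement only concentrates better, by Hoeffding's comparison.

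The first step is symmetrisation. Draw a ghost sample $S'$ of size $m$ and introduce Rademacher signs $\sigma_i$. For $m\ge 2/\epsilon^2$, the deviation probability above is at most
\[
2\Pr\Bigl[\sup_H \tfrac1m\Bigl|\sum_i \sigma_i\bigl(\mathbf 1_H(s_i)-\mathbf 1_H(s'_i)\bigr)\Bigr|>\epsilon/2\Bigr].
\]
Fix $(S,S')$ and condition on it. By the Sauer--Shelah lemma, $\mathcal H$ restricted to $S\cup S'$ has at most $(2em/d)^d$ traces. A union bound over these traces, together with Hoeffding's inequality, already gives sample size $O(\epsilon^{-2}(d\log\epsilon^{-1}+\log\gamma^{-1}))$, which is the classical Vapnik--Chervonenkis bound.

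To remove the $\log\epsilon^{-1}$ factor I would replace the single union bound with Dudley/Talagrand chaining over the restriction of $\mathcal H$ to the $2m$ points, using normalised Hamming distance. Haussler's packing lemma states that the $\delta$-packing number of a VC-dimension-$d$ system in this metric is at most $(c/\delta)^{d}$, with no dependence on $m$. The entropy integral $\int_0^1\sqrt{d\log(c/\delta)}\,d\delta=O(\sqrt d)$ then gives $\mathbb E\sup_H\le O(\sqrt{d/m})$ for the Rademacher process. McDiarmid's bounded-difference inequality, where changing one coordinate moves the supremum by at most $2/m$, then supplies a tail of $\sqrt{\log(1/\gamma)/m}$. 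Setting $O(\sqrt{d/m})+O(\sqrt{\log(1/\gamma)/m})\le\epsilon/2$ yields $m=O(\epsilon^{-2}(d+\log\gamma^{-1}))$, as required.

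The main obstacle is the chaining step, specifically Haussler's packing bound without a logarithmic loss; Haussler's proof uses a delicate shifting argument. Since the paper cites this result, I would invoke it (and the resulting uniform deviation bound, Theorem 47.2.2 of the handbook) directly instead of reproving it. The remaining pieces, symmetrisation and concentration, are routine.
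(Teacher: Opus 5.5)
Your proposal is correct and ends where the paper does. The paper gives no proof of this lemma and simply imports it from Li--Long--Srinivasan and the Handbook, while your outline (symmetrisation, chaining on the $2m$-point restriction, bounded differences) is the standard argument behind that citation. One correction to where you place the obstacle: Haussler's sharp packing bound is not needed, because the elementary estimate $\log M(\delta)=O(d\log(1/\delta))$ already makes the entropy integral $O(\sqrt d)$; this estimate follows from Sauer--Shelah once you observe that a random subsample of $O(\delta^{-1}\log M)$ points separates all members of a $\delta$-separated family.
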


\paragraph{Graphs.}
Next, let us define the~VC~dimension of~(weighted) graphs.
Let $G$ be a~graph.
The~\emph{neighbourhood set system} of~$G$ is defined by
\begin{equation}
  \label{eqn:neighbourhood_set_system}
   \mathcal N_G \coloneqq \{ N_G(v)\mid v\in V(G)\}.
\end{equation}
We define the~\emph{VC~dimension} of~$G$ to be the~VC~dimension
of $\mathcal N_G$. Let us remark at this point that various possible
definitions for the~VC~dimension of~a graph have been considered
in~the~literature by associating different set systems (see,
e.g., \cite{ChanCGKLZ25,DucoffeHV22,DurajKP24,LeW24}).
However, those set systems are usually supersets of~$\mathcal N_G$,
i.e., our definition of~VC~dimension is the~most relaxed (and thus,
our algorithmic results are applicable more widely).

Let $(G, \weight)$ be an~edge-weighted graph with weight function
$\weight\colon E(G) \to \mathbb R$ (recall that we define
$\weight(v,w) \coloneqq 0$ if $vw\not\in E(G)$ as a~convention).
We consider a~family of~\emph{threshold set systems} over $V(G)$ as follows.
For each $t \in \mathbb R$, let $N^t_G(v) \coloneqq \{ w\in V(G)\mid
\weight(v,w) > t\}$, and~let~$\mathcal N^t_G \coloneqq \{N^t_G(v)\mid
v\in V(G)\}$.
This is simply the~neighbourhood set system for
the~\emph{threshold graph} $G^t$, which is the~graph with the~same
vertex set as $G$ and~edge set consisting of~all edges of~weight $>t$.
We define the~VC~dimension of~$(G,\weight)$ to be the~maximum VC
dimension of~$\mathcal N_G^t$ over all $t \in \mathbb R$.
Observe that the~maximum exists because the~threshold graphs form
an~increasing sequence of~subgraphs of~$G$, which implies that there are
at most $|E(G)|+1$ distinct threshold graphs. 
Furthermore, if all edge-weights are equal to a~fixed constant
$a \neq 0$, then the~VC~dimension of~$(G,\weight)$ is identical to the~VC
dimension of~the~unweighted graph $G$.

\paragraph{QAPs.}
For QAPs, we can define the~VC~dimension in~a~similar spirit.
Recall that a~QAP instance of~order $n$ is given by the
coefficients $c(v,v',w,w')$ for all $v,v',w,w' \in [n]$.
The basic idea is to view the~coefficient function as an
edge-weighted graph over vertex set $[n] \times [n]$
(i.e.,~the~weight of~an edge between $(v,v')$ and~$(w,w')$ is
$\weight((v,v'),(w,w')) \coloneqq c(v,v',w,w')$).
With this in~mind, we again define a~threshold set system
for every $t \in \mathbb R$ similar to above as
\begin{align}
  \label{eqn:def-of-htc}
  \begin{split}
  \mathcal H^t_{c} &\coloneqq \{ M^t_c(v,v')\mid (v,v')\in [n]\times[n]\},\\
  \text{where } M^t_c(v,v') &\coloneqq \{(w,w')\mid c(v,v',w,w') > t\}.
  \end{split}
\end{align}
In other words, each $(v,v')$ induces a~hypothesis, made of~exactly
those associations $(w,w')$ which, taken together with $(v,v')$,
would not incur a~cost above our fixed threshold.

Let us point out that there is a~second, more relaxed definition
for the~VC~dimension of~a~QAP instance that is actually sufficient
for all our algorithmic applications. Indeed, while the~definition
above does capture the~combinatorial complexity of~the~subsets
of $[n]\times [n]$ definable from the~coefficient function
$c(v,v',w,w')$, it does not account
for one additional feature of~the~problem at hand: a~solution to
a QAP is not an~arbitrary subset of~$[n]\times [n]$, but a~perfect
matching, i.e., the~graph of~a bijection $\varphi\colon [n]\to[n]$.
Yet, the~set systems $\mathcal H^t_c$ contain many sets which are not subsets
of such a~matching. As such, there may be some large subsets of~$[n]\times [n]$
which are irrelevant to the~complexity of~the~solution space of~the~QAP at hand,
while being shattered by $\mathcal H^t_c$. In~contrast, for a~fixed bijection
$\varphi : [n]\to[n]$, let
\begin{align}
\begin{split}
  \label{eqn:def-of-htcphi}
  \mathcal H^t_{c,\phi}
        &\coloneqq \{ M^t_{c,\phi}(v,v')\mid (v,v')\in [n]\times[n]\},\\
  \text{where } M^t_{c,\phi}(v,v')
        &\coloneqq \{(w,\phi(w))\mid c(v,v',w,\phi(w)) > t\}.
\end{split}
\end{align}
Note that $\mathcal H^t_{c,\varphi}$ is the~restriction of~$\mathcal
H^t_c$ to $\graph(\varphi) := \{(v,\varphi(v))\mid v\in[n]\}$. As
such,
\begin{equation}
  \label{eqn:htcphi_le_htc}
\VC(\mathcal H^t_{c,\phi}) \le \VC(\mathcal H^t_c).
\end{equation}
(We omit the~index $c$ in~$M^t_c,\mathcal H^t_c,M^t_{c,\phi}$ and
$\mathcal H^t_{c, \phi}$ if the
coefficient function $c$ is clear from  context.)

We say that a~QAP represented by the~coefficient function $c$
has a~\emph{VC~dimension weakly bounded by $d$}, if for all $t\in\mathbb R$
and all bijections $\phi:[n] \to [n]$ it holds that
$\VC\left(\mathcal H^t_{c,\phi}\right) \leq d$.

\begin{remark}
  Having VC~dimension weakly bounded by $d$
  is testable in~time $n^{O(d)}$. Namely, for each $t \in \mathbb R$, 
  the~set system $\mathcal H^t_{c,\varphi}$ has VC~dimension at most $d$
  for any $\varphi$ if and~only if for all injective partial function
  $\alpha: [n]\pfun [n]$ with $|\alpha| = d+1$ there is a~restriction
  $\alpha' \subseteq \alpha$ such that there are no $v,v'$ with
  $\alpha' = \{(w,\alpha(w)) \in \alpha \mid t < c(v,v',w,\alpha(w)) \}$.
  Because the~function $c$ takes at most $n^4$ different values, we only
  need to consider $n^4 + 1$ different values of~$t$.
\end{remark}

Note that by \eqref{eqn:htcphi_le_htc}, if a~QAP has VC~dimension
bounded by $d$, then it has VC~dimension weakly bounded by $d$.
However, the~other direction is not true in~general.

\begin{lemma}
  \label[lemma]{lem:htcphi_lt_htc}
  For each $n\geq 4$, there is a~cost function
  $c: [n]\times [n] \times [n]\times [n] \to \{0,1\}$,
  such that $\VC(\mathcal H^{0}_{c}) = \floor{\log n}$, whereas for each
  bijection $\phi$ we have $\VC(\mathcal H^{0}_{c,\phi}) \leq 1$.
\end{lemma}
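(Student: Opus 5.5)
The idea is to place a large shattered set entirely inside a single ``row'' $\{1\}\times[n]$ of $[n]\times[n]$. Such a set is shattered by $\mathcal H^0_c$, but the graph of any bijection meets it in at most one point.

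\textbf{Construction.} Let $k\coloneqq\floor{\log n}$, so that $2^k\le n$. Fix a surjection $v'\mapsto S_{v'}$ from $[n]$ onto $2^{[k]}$; for instance, let $S_{v'}$ be the set of positions of $1$-bits in the binary expansion of $(v'-1)\bmod 2^k$. Define
\[
 c(v,v',w,w')\coloneqq\begin{cases}1 & \text{if } w=1,\ w'\le k \text{ and } w'\in S_{v'},\\ 0&\text{otherwise.}\end{cases}
\]
Then $M^0_c(v,v')=\{(1,w')\mid w'\in S_{v'}\}$ for every $(v,v')$.

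\textbf{Lower bound on $\VC(\mathcal H^0_c)$.} Let $X\coloneqq\{(1,1),\dots,(1,k)\}$. Every subset $Y\subseteq X$ equals $\{(1,w')\mid w'\in S\}$ for some $S\subseteq[k]$. Since $S=S_{v'}$ for some $v'$, we get $Y=M^0_c(1,v')\cap X$. Hence $X$ is shattered, and $\VC(\mathcal H^0_c)\ge k$.

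\textbf{Upper bound on $\VC(\mathcal H^0_c)$.} Every hypothesis is contained in $X$. A shattered set must have each of its elements lying in some hypothesis, so any shattered set is a subset of $X$. Therefore $\VC(\mathcal H^0_c)\le|X|=k$, and equality holds.

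\textbf{The weak bound $\VC(\mathcal H^0_{c,\phi})\le 1$.} Fix a bijection $\phi$. Each $M^0_{c,\phi}(v,v')$ is contained in $X\cap\graph(\phi)$. Because $\graph(\phi)$ contains exactly one pair with first coordinate $1$, namely $(1,\phi(1))$, this intersection has at most one element. So every set in $\mathcal H^0_{c,\phi}$ has at most one element, and by the same argument as in the upper bound, no set of size $2$ is shattered.

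\textbf{Main obstacle.} The only delicate point is choosing the encoding $v'\mapsto S_{v'}$ so that every subset of $[k]$ is realised; this needs $2^k\le n$, which holds by the choice of $k$. The hypothesis $n\ge 4$ only guarantees $k\ge 2$, so that the separation between $k$ and $1$ is non-trivial.
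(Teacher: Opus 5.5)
Your proof is correct and follows the paper's construction. In both, the cost is $1$ only at entries $(v,v',1,w')$ with $w'\in[k]$, with the subset of $[k]$ selected by $v'$; so the hypotheses are subsets of $\{1\}\times[k]$ realising all of them, while the graph of any bijection meets this row only in $(1,\phi(1))$. The difference is that you also spell out the upper bound $\VC(\mathcal H^0_c)\le k$ (every hypothesis lies inside $X$), which the paper leaves implicit.
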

\begin{proof}
  Let $k = \floor{\log n}$. We define the~cost function $c$ as follows.
  Let $v_{A} \in [n]$ be distinct vertices for all sets $A \subseteq [k]$;
  such vertices exist, since there are $2^k \leq n$ subsets of~$[k]$. Define
  \[ c(n,v_A,1,x) \coloneqq \begin{cases}
    1 & \text{if }x \in A\\
    0 & \text{otherwise,}
  \end{cases}\]
  and~all other undefined values are $0$.
  Then the~VC~dimension of~$\mathcal H^{0}_{c}$ is $\floor{\log n}$,
  since the~set $\{(1, x) \mid x \in [k]\}$ is shattered.

  Furthermore, let $\phi:[n] \to [n]$ be an~arbitrary bijection.
  Note that each set $M^0_{c,\phi}(v,v')$ contains at most one element,
  namely $(1,\phi(1))$. This means that any set of~size $\geq 2$
  cannot be shattered.
\end{proof}

\paragraph{Comparison.}

One of~main problems considered in~this paper is the~GED, which can be naturally
translated into a~QAP, as seen in~\cref{ex:ged_qap}.
Towards this end, we show that two weighted graphs of~bounded VC
dimension translate to a~QAP instance of~bounded VC~dimension (in~fact,
the~VC~dimension is linearly bounded).

The following lemma can be proved by standard arguments (see, e.g., \cite{Matousek99}).
  
\begin{lemma}
  \label[lemma]{lem:general-10d-lemma}
  Let $G = (V,E, \weight_G), H = (V, E, \weight_H)$ be edge-weighted graphs
  of~VC~dimensions $d_G, d_H \geq 1$, respectively,
  and~let $d = \max\{d_G,d_H\}$.
  Let $c$ be the~coefficient function defined
  in~\eqref{eq:graph-qap-cost-function}.
  Then $\VC(\mathcal H^t_{c})\leq 10d$ for every $t \in \mathbb R$.
\end{lemma}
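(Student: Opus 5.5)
The set $M^t_c(v,v')=\{(w,w') \mid |\omega_G(v,w)-\omega_H(v',w')|>t\}$ can be written as a Boolean combination of a bounded number of threshold neighbourhoods of $v$ in $G$ and of $v'$ in $H$, lifted to $V\times V$; the VC bound then follows from the standard estimate on Boolean combinations via the Sauer--Shelah lemma.

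We may assume $t\ge 0$: since $c\ge 0$, for $t<0$ we have $M^t_c(v,v')=V\times V$ and the set system has VC dimension $0$. For $t \geq 0$ the condition $|a-b|>t$ is equivalent to $a-b>t$ or $b-a>t$. Since the weights take only finitely many values, the condition $a-b>t$ is a finite disjunction over pairs of ``levels'' $a>s$, $b\le s-t$. This would give an unbounded number of terms, so instead we use the shatter function directly.

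Fix a set $X\subseteq V\times V$ with $|X|=m$, and let $X_1,X_2\subseteq V$ be its projections, each of size at most $m$. The trace of $M^t_c(v,v')$ on $X$ is determined by two functions. The first is the map $w\mapsto \omega_G(v,w)$ restricted to $X_1$, taken up to the order type relevant at thresholds. More precisely, the trace is determined by the family of traces $N^s_G(v)\cap X_1$ for $s\in\mathbb R$, together with the analogous family $N^{s}_H(v')\cap X_2$.

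The key step is to bound the number of distinct ``profiles'' $w\mapsto\omega_G(v,w)$ on $X_1$. These are not arbitrary: each such profile is a chain $\{N^s_G(v)\cap X_1\}_s$, and each member of the chain lies in a system of VC dimension at most $d$. However, the chain can use any threshold, so a naive count gives $m^{O(d)}$ choices per threshold and a number of thresholds that is unbounded.

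To avoid this, I would instead view the trace as determined, for each $(w,w')\in X$, by the comparison $\omega_G(v,w)-\omega_H(v',w')>t$. Here I plan to use the following trick. Choose $s$ among the at most $2m$ relevant values $\omega_H(v',w')+t$ for $(w,w')\in X$. Then $(w,w')\in M$ if and only if $w\in N^{s_{w'}}_G(v)$ with $s_{w'}=\omega_H(v',w')+t$. As $v$ varies, for each of the at most $m$ thresholds the trace of $N^{s}_G(v)$ on $X_1$ has at most $O(m^{d})$ possibilities by Sauer--Shelah. The thresholds, however, depend on $v'$, and this dependence is the main obstacle.

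I would handle it by counting jointly. For a fixed $v'$, the ordering of the values $\omega_H(v',w')$ on $X_2$ determines everything. Combining this with thresholds of $H$ bounds the profile count of $v'$ by $\prod$ over thresholds. Since a chain of traces in a VC-$d$ system over an $m$-set has at most $m^{O(d)}$ shapes per level, and there are at most $m$ levels, a crude bound would be $m^{O(dm)}$, which is too large.

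So the fallback I would actually pursue is this. Since the weights are finitely many in any instance, for the purpose of \cref{lem:general-10d-lemma} I expect the intended argument to treat the relevant weight levels as bounded, with the constant $10$ arising from combining at most a fixed number of threshold systems. Concretely, I would write $M^t_c(v,v')$ as a union of two sets $\{(w,w')\mid \omega_G(v,w)>\omega_H(v',w')+t\}$ and its mirror image. I would then bound the shatter function of each by $m^{O(d)}$ using a dual sweep: the points of $X$ are ordered by $\omega_H(v',\cdot)$, the traces form a monotone family in the $G$-threshold, and this produces at most $(2m)\cdot m^{d}$ patterns per side. Multiplying the two sides gives at most $m^{4d+2}$ traces, which is less than $2^m$ once $m\ge 10d$, so $X$ cannot be shattered and $\VC(\mathcal H^t_c)\le 10d$.
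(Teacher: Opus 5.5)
The step that fails is the ``dual sweep'' bound of $(2m)\cdot m^{d}$ traces per side. Fix $v'$. A point $(w,w')\in X$ lies in $\{(w,w')\mid\omega_G(v,w)>\omega_H(v',w')+t\}$ iff $w\in N^{s}_G(v)$ with $s=\omega_H(v',w')+t$. That threshold changes from point to point, so the trace is assembled from traces of $N^{s}_G(v)$ at up to $m$ different thresholds. The hypothesis bounds $\VC(\mathcal N^{s}_G)$ for each $s$ separately. This says nothing about such joint patterns: it does not bound the pseudo-dimension of the functions $\omega_G(v,\cdot)$, and there is no monotonicity in $v$ to sweep over. You named exactly this obstacle earlier in the proposal and then asserted it away. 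Separately, $m^{4d+2}<2^m$ is false at $m=10d$ (already $10^6>2^{10}$ for $d=1$). A count of the form $m^{O(d)}$ only gives $O(d\log d)$; the constant $10$ requires Sauer--Shelah in the form $(em/d)^{d}$ with total exponent $2d$.

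In fact nothing can fill this gap, because the statement is false for general weights. Let $W=\{w_1,\dots,w_m\}$, $A=\{a_b\mid b\in\{0,1\}^m\}$ and $V=A\cup W$. Let $G,H$ both be complete bipartite between $A$ and $W$, with $\omega_G(a_b,w_i)=2i+b_i$ and $\omega_H(a_b,w_i)=2i$.
\begin{itemize}
\item For $s<0$, every threshold neighbourhood is $V$.
\item For $s\ge 0$, the sets $N^s(a_b)$ take at most two nested values inside $W$.
\item Each $N^s(w_i)$ is $\emptyset$, $A$, or $\{a_b\mid b_j=1\}$ for a single $j$ depending on $s$.
\end{itemize}
Hence $d_G=d_H=1$. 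Yet for $X=\{(w_i,w_i)\mid i\in[m]\}$ we have $M^{1/2}_c(a_b,a_{b'})\cap X=\{(w_i,w_i)\mid b_i=1\}$. So $X$ is shattered by $\mathcal H^{1/2}_c$, and even by $\mathcal H^{1/2}_{c,\mathrm{id}}$, so the VC dimension is at least $m\approx\log_2|V|$.

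The same example refutes the key step of the paper's proof, namely that $X\cap M^t(v,v')$ depends only on $\pi_1(X)\cap N^t_G(v)$ and $\pi_2(X)\cap N^t_H(v')$. That step is valid for $0/1$ weights. For $t\in[0,1)$ one has $M^t(v,v')=\{(w,w')\mid w\in N_G(v)\nLeftrightarrow w'\in N_H(v')\}$, so the trace is a function of two traces, and Sauer--Shelah gives $2^s\le(es/d)^{2d}$, hence $s<10d$.

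What can actually be proved is therefore the following:
\begin{itemize}
\item The unweighted case, which is your opening ``Boolean combination'' idea with just two lifted sets, gives the bound $10d$.
\item For weights taking $\kappa$ distinct values, your chain-of-traces count gives $(em/d)^{d(\kappa-1)}$ profiles per side, and hence a bound $O(d\kappa\log\kappa)$.
\end{itemize}
No bound independent of the weights holds.
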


Towards the~proof, we will need the~following standard result about
VC~dimension, known as the~Sauer-Shelah lemma.

\begin{lemma}[\cite{Sauer72}]
  \label[lemma]{Sauer-Shelah}
  Let $\mathcal H$ be a~set system on $V$ of~VC~dimension $d$. Then
  \[ \max_{X\subseteq V, |X| = s} |\{H \cap X \mid H \in \mathcal H\}|
          \leq \left(\frac{es}{d}\right)^d\]
  for all $s \geq d$.
\end{lemma}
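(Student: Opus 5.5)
The plan is the classical two-step argument. First bound the number of traces by $\Phi_d(s)\coloneqq\sum_{i=0}^{d}\binom{s}{i}$. Then bound $\Phi_d(s)$ by $(es/d)^d$.

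For the first step, fix $X\subseteq V$ with $|X|=s$. Let $\mathcal F\coloneqq\{H\cap X\mid H\in\mathcal H\}$, a set system on $X$. Any subset of $X$ shattered by $\mathcal F$ is shattered by $\mathcal H$, so $\VC(\mathcal F)\le d$. I will prove the stronger statement (Pajor's lemma) that every set system $\mathcal F$ on a finite ground set shatters at least $|\mathcal F|$ subsets of that ground set. Since every shattered set has size at most $d$, this gives $|\mathcal F|\le\Phi_d(s)$.

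Pajor's lemma goes by induction on $|\mathcal F|$. The base case $|\mathcal F|\le 1$ holds because $\emptyset$ is shattered by any nonempty family. For the inductive step with $|\mathcal F|\ge 2$, pick an element $x$ lying in some but not all members of $\mathcal F$, and split $\mathcal F=\mathcal F_0\cup\mathcal F_1$ according to whether $x\notin F$ or $x\in F$. Both parts are nonempty and strictly smaller, so by induction they shatter families $\mathcal S_0,\mathcal S_1$ with $|\mathcal S_i|\ge|\mathcal F_i|$, and no member of $\mathcal S_0\cup\mathcal S_1$ contains $x$. Every set in $\mathcal S_0\cup\mathcal S_1$ is shattered by $\mathcal F$. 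Every $S\in\mathcal S_0\cap\mathcal S_1$ gives the additional shattered set $S\cup\{x\}$, because traces with $x$ come from $\mathcal F_1$ and traces without $x$ from $\mathcal F_0$. Hence $\mathcal F$ shatters at least $|\mathcal S_0\cup\mathcal S_1|+|\mathcal S_0\cap\mathcal S_1|=|\mathcal S_0|+|\mathcal S_1|\ge|\mathcal F|$ sets. (Induction on $s$ with the standard split $\mathcal F|_{X\setminus\{x\}}$ plus the family of traces arising with and without $x$ would work equally well.)

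For the second step, use $s\ge d$, so that $d/s\le 1$ and $(d/s)^d\le(d/s)^i$ for all $i\le d$. This gives
\[
\left(\frac{d}{s}\right)^d\Phi_d(s)\le\sum_{i=0}^{d}\binom{s}{i}\left(\frac{d}{s}\right)^i\le\left(1+\frac{d}{s}\right)^s\le e^d,
\]
and rearranging yields $\Phi_d(s)\le(es/d)^d$. The case $d=0$ is trivial or excluded by convention, since then $|\mathcal F|\le 1$.

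The only genuinely nontrivial step is the combinatorial bound $|\mathcal F|\le\Phi_d(s)$. Within it, the delicate point is checking that $S\cup\{x\}$ is shattered for $S\in\mathcal S_0\cap\mathcal S_1$. This holds because $S$ avoids $x$, so traces on $S$ can be realised independently inside $\mathcal F_0$ and inside $\mathcal F_1$. Everything else is routine.
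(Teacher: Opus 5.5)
The paper does not prove this lemma; it quotes it from Sauer, so there is no internal argument to compare against. Your proof is correct and complete. Passing to the trace family $\mathcal F$ on $X$ cannot increase the VC~dimension.

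Your induction for Pajor's lemma is sound, and you handle the two points that matter:
\begin{itemize}
\item no set shattered by $\mathcal F_0$ or by $\mathcal F_1$ contains $x$, so the sets $S\cup\{x\}$ for $S\in\mathcal S_0\cap\mathcal S_1$ are new and pairwise distinct;
\item each such $S\cup\{x\}$ is shattered by $\mathcal F$.
\end{itemize}
The final estimate $(d/s)^d\Phi_d(s)\le(1+d/s)^s\le e^d$ uses $s\ge d$ exactly where it is needed: once for $(d/s)^d\le(d/s)^i$, and once for extending the partial binomial sum to the full one.

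Your route differs from the classical proof. That proof inducts on $s$ and $d$ together via $\Phi_d(s)=\Phi_d(s-1)+\Phi_{d-1}(s-1)$. It removes a point $x$ and observes that the traces on $X\setminus\{x\}$ realised both with and without $x$ form a family of VC~dimension at most $d-1$. Your induction on $|\mathcal F|$ instead proves the stronger statement that the number of traces is at most the number of shattered subsets, and never has to carry the dimension parameter through the induction. The closing binomial estimate is the same in both routes.

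A small bonus: your argument only uses $\VC(\mathcal F)\le d$, not equality. That is the form the paper actually needs in the proof of \cref{lem:general-10d-lemma}, where the two neighbourhood systems have VC~dimensions $d_G,d_H$ that may be strictly smaller than $d=\max\{d_G,d_H\}$.
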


\begin{proof}[Proof of~\cref{lem:general-10d-lemma}]
  Suppose a~set $X \subseteq V(G) \times V(H)$ of~size $s$ shattered by
  $\mathcal H^t_c$. Note that for each $v_1, v_2 \in V(G)$ and~each
  $v'\in V(H)$ such that $M^t(v_1, v') = M^t(v_2,v')$, then
  $N^t_G(v_1) = N^t_G(v_2)$. Therefore, $X \cap M^t(v,w)$ only depends on
  $\pi_1(X) \cap N^t_G(v)$ and~$\pi_2(X) \cap N^t_H(w)$.
  (Here, $\pi_i(X)$ is the~projection onto the~$i$-th coordinate for $i = 1,2$.)
  \begin{align*}
    2^s &= \left| \left\{
                  X \cap M^t(v,w) \mid v\in V(G), w \in V(H) \right\} \right|\\
    &\leq \left| \left\{\pi_1(X) \cap N^t(v) \mid v\in V(G) \right\} \right|
          \cdot \left| \left\{
                    \pi_2(X) \cap N^t_H(w) \mid w\in V(H) \right\} \right|\\
    &\leq \left(\frac{es}{d}\right)^{2d}.
  \end{align*}
  Let $c \coloneqq \frac{s}{d}$. Then we have $2^{c} \leq (ec)^2$, which implies
  $c < 10$. Therefore, any shattered set has size less than $10d$.
\end{proof}

For unweighted graphs, we can prove that all above mentioned notions
of VC~dimensions are linearly dependent.
\begin{lemma}
  \label{lem:Htc-lower-bound}
  Let $G, H$ be unweighted graphs
  of~VC~dimensions $d_G, d_H \geq 1$, respectively and~let
  $d = \max\{d_G,d_H\}$.
  Let $c$ be the~coefficient function defined in~\eqref{eq:1}
  and~let $\phi: V(G)\to V(H)$ be an~arbitrary bijection.
  Then $d \leq \VC(\mathcal H^0_{c,\phi}) \leq \VC(\mathcal H^0_{c})\leq 10d$.
\end{lemma}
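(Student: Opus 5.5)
The chain has three inequalities, which I would prove separately. The middle one, $\VC(\mathcal H^0_{c,\phi}) \leq \VC(\mathcal H^0_c)$, is just \eqref{eqn:htcphi_le_htc}. For the upper bound I would observe that the coefficient function \eqref{eq:1} is a special case of \eqref{eq:graph-qap-cost-function}. Give every edge of $G$ and $H$ weight $1$, and non-edges weight $0$ by convention. Then
\[ c(v,v',w,w') = |\weight_G(v,w) - \weight_H(v',w')| \]
for all $v,v',w,w'$, including the degenerate cases $v=w$ or $v'=w'$, since loops are non-edges. With all weights equal to the constant $1$, the VC~dimension of the weighted graphs equals that of the unweighted ones, as remarked in the paragraph on weighted graphs. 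Hence \cref{lem:general-10d-lemma} applies with $t=0$ and gives $\VC(\mathcal H^0_c)\le 10d$.

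The real content is the lower bound $d \leq \VC(\mathcal H^0_{c,\phi})$ for an \emph{arbitrary} bijection $\phi$. Note first that
\[ M^0_{c,\phi}(v,v') = \{(w,\phi(w)) \mid vw\in E(G) \text{ xor } v'\phi(w)\in E(H)\}. \]
The plan is an XOR-shift trick: fix one coordinate and use shattering in the other graph to compensate for whatever that fixed coordinate contributes.

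Suppose $d = d_G$, and let $X\subseteq V(G)$ with $|X|=d$ be shattered by $\mathcal N_G$. I claim $\{(w,\phi(w))\mid w\in X\}$ is shattered by $\mathcal H^0_{c,\phi}$.
\begin{enumerate}
\item Fix any $v'_0\in V(H)$ and set $Z \coloneqq \{w\in X \mid v'_0\phi(w)\in E(H)\}$. If $\phi(w)=v'_0$, then $w\notin Z$, which is consistent because loops are non-edges.
\item Given a target $Y\subseteq X$, shattering yields $v\in V(G)$ with $N_G(v)\cap X = Y\symdiff Z$.
\item For $w \in X$, $w$ lies in $M^0_{c,\phi}(v,v'_0)$ exactly when $[w\in Y\symdiff Z]$ xor $[w\in Z]$ holds, i.e.\ when $w\in Y$. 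So this set picks out exactly $\{(w,\phi(w))\mid w\in Y\}$.
\end{enumerate}

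If instead $d=d_H$, the argument is symmetric.
\begin{enumerate}
\item Take $X'\subseteq V(H)$ shattered by $\mathcal N_H$ and let $X \coloneqq \phi^{-1}(X')$.
\item Fix $v_0\in V(G)$ and set $Z\coloneqq\{w\in X\mid v_0w\in E(G)\}$.
\item Given $Y\subseteq X$, choose $v'$ with $N_H(v')\cap X' = \phi(Y\symdiff Z)$; this is possible since $\phi$ is a bijection.
\item Then $M^0_{c,\phi}(v_0,v')\cap \graph(\phi|_X) = \{(w,\phi(w))\mid w\in Y\}$.
\end{enumerate}

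I expect the only delicate point to be checking that the diagonal cases ($w=v$, or $\phi(w)=v'$) behave correctly. These are handled by the simple-graph convention that $vv\notin E(G)$ together with the fact that shattering specifies $N_G(v)\cap X$ exactly. The rest is routine.
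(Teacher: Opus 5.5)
Your proposal is correct and follows the paper's proof: the middle inequality comes from \eqref{eqn:htcphi_le_htc} and the upper bound from \cref{lem:general-10d-lemma}. The lower bound uses the same XOR-shift as the paper: fix one coordinate and use shattering in the other graph to cancel that coordinate's contribution. You write out the $d=d_H$ case explicitly, where the paper instead swaps the graphs through $\phi$.

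One detail you make explicit that the paper leaves implicit is the identification of \eqref{eq:1} with \eqref{eq:graph-qap-cost-function} under unit weights. In that $0/1$ setting at $t=0$, $c>0$ is exactly the XOR of the two adjacencies, and this is precisely what makes the trace-counting step of \cref{lem:general-10d-lemma} valid here.
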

\begin{proof}
  The second inequality holds by \eqref{eqn:htcphi_le_htc}, and~the~third
  inequality comes from \cref{lem:general-10d-lemma}. For the first inequality,
  let $\phi: V \to W$ be a~bijection.
  Without loss of~generality, suppose that $d_G \geq d_H$. (Otherwise, swap
  the~graphs through the~bijection $\phi$.) Let
  $\bar{\phi} : 2^V\to 2^{V\times W}$ defined by
  $\bar\phi(X) := \{ (x,\phi(x))\mid x\in X\}$. Note that $\bar\varphi$ is
  a~bijection between subsets of~$V$ and~subsets of~$\bar\phi(V)$.
  Let $X\subseteq V$ be shattered of~size $d$ by $\mathcal N_G$.
  We claim that $\bar X \coloneqq \bar\phi(X)$ is shattered by
  $\mathcal M^0_\phi$. Note that
  $M^0_\phi(v,w) = \bar\phi(N_G(v)) \symdiff \bar\phi(\phi^{-1}(N_H(w)))$.
  Let $\bar X' \subseteq \bar X$. Let $w \in W$ be arbitrary.

  There is $X'\subseteq X$ such that $\bar\varphi(X') = \bar X'$.
  Since $X$ is shattered by $\mathcal N_G$, we have a~vertex $v\in V$
  such that $N_G(v) \cap X = X' \symdiff \left(\phi^{-1}(N_H(w)) \cap X\right)$.
  Then
  \begin{align*}
      M^0_\phi(v,w) \cap \bar X &=
          \left(\bar\phi(N_G(v)) \symdiff \bar\phi(\phi^{-1}(N_H(w)))\right)
                \cap \bar X\\
      &= \left(\bar\phi(N_G(v)) \cap \bar X\right)
            \symdiff \left(\bar\phi(\phi^{-1}(N_H(w))) \cap \bar X\right)\\
      &= \bar\phi\left(N_G(v) \cap X\right)
              \symdiff \left(\bar\phi(\phi^{-1}(N_H(w))) \cap \bar X\right)\\
      &= \bar X' \symdiff \left(\bar\phi(\phi^{-1}(N_H(w))) \cap X\right)
                 \symdiff \left(\bar\phi(\phi^{-1}(N_H(w))) \cap X\right)\\
      &= \bar X'
  \end{align*}
  Hence, $d \leq \VC(\mathcal H^0_{c,\phi})$.
\end{proof}

\section{Approximation with bounded VC~dimension}
\label{sec:approximation-with-bounded-VC-dimension}
As mentioned in~the~introduction, Arora et al.~\cite{AroraFK02} show
that optimal solutions to~$\Bbound$-bounded QAPs can be approximated
in quasipolynomial time. As a~result, they obtain
a quasipolynomial time approximation for GED. Building
on this work, we show in~this section that under the~additional
assumption that the~QAP instance under consideration has size $n$ and
VC~dimension at~most $d$, its optimal solution can be approximated
up to~an~additive error of~$\varepsilon n^2$ in~time
$n^{O\left(\Bbound^2\epsilon^{-2}(d + \log \Bbound\epsilon^{-1})\right)}$
(\cref{thm:quadratic-assignment-bounded-vc-approximation}).
Because the~reduction from the~GED problem to~QAP preserves VC~dimension
(see \cref{lem:general-10d-lemma}), this result yields an~algorithm
approximating GED up to~an~additive error of~$\varepsilon n^2$ for graphs
of VC~dimension bounded by $d$, which runs in~time $n^{O(\varepsilon^{-2}d)}$
(proving \cref{thm:main-edit-distance}).

Let us rephrase the~quasipolynomial time approximation algorithm for QAPs
provided in~\cite{AroraFK02} in~our setting. For any injective partial function
$\alpha: [n] \to [n]$, define
\begin{align*}
  b_{\alpha}(v, v') &\coloneqq
                      \frac{n}{|\alpha|} \sum_{(w,w') \in \alpha} c(v,v',w,w').
\end{align*}
Although not stated explicitly, \cite[Section 3]{AroraFK02} is devoted to
the proof of~the~following lemma.
\begin{lemma}
  \label[lemma]{arora-lemma}
    There is an~algorithm that, given any $\Bbound$-bounded QAP and
    $\epsilon > 0$, finds a~perfect matching $x$ such that
    $\cost(x) \leq \cost(x^*) + \epsilon n^2$, where $x^*$ is a~fixed optimal
    solution to~the~input QAP. The~algorithm runs in~time $n^{O(m)}$,
    where $m$ is the~smallest integer such that there is
    an~$\alpha \subseteq \graph(\phi_{x^*})$ of~size $m$, such that
    for all $v,v'\in [n]$
    \begin{equation}
      \label{eq:approximation-of-b}
      b_{\phi_{x^*}}(v,v') \in
          \left[b_{\alpha}(v,v') \pm \frac13 \epsilon n\right].
    \end{equation}
\end{lemma}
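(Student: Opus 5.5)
We follow the Arora--Frieze--Kaplan scheme. Write $\phi^* \coloneqq \phi_{x^*}$ and $b^* \coloneqq b_{\phi^*}$, where $\phi^*$ is identified with $\graph(\phi^*)$, so $|\phi^*| = n$ and $b^*(v,v') = \sum_{w} c(v,v',w,\phi^*(w))$. The cost can be written as $\cost(x) = \sum_{v,v'} x(v,v')\,\sum_{w,w'} c(v,v',w,w')x(w,w')$. For $x = x^*$ the inner sum equals $b^*(v,v')$. Hence $\cost(x^*) = \sum_{v,v'} x^*(v,v')\, b^*(v,v')$, and $x^*$ is a feasible point of the linear program described below.

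\textbf{Algorithm.} First, we guess $m$ and an injective partial function $\alpha$ of size $m$; there are $n^{O(m)}$ choices. For the correct guess, $\alpha \subseteq \graph(\phi^*)$ satisfies \eqref{eq:approximation-of-b}. For each guess, we compute $b_\alpha$ and solve the LP over fractional perfect matchings $y$ (constraints \eqref{eq:qap1}, \eqref{eq:qap2}) with the additional linear constraints
\[\sum_{w,w'} c(v,v',w,w')\, y(w,w') \in \left[b_\alpha(v,v') \pm \tfrac13\epsilon n\right] \quad \text{for all } v,v',\]
and objective $\min \sum_{v,v'} b_\alpha(v,v')\, y(v,v')$.

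By \eqref{eq:approximation-of-b}, $x^*$ is feasible for the correct guess. Its LP objective is within $\frac13\epsilon n \cdot n$ of $\cost(x^*)$, so the LP optimum $y$ satisfies $\sum_{v,v'} b_\alpha(v,v')\, y(v,v') \le \cost(x^*) + \frac13\epsilon n^2$.

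\textbf{Rounding.} Next, we round $y$ to an integral perfect matching $x$. The quadratic cost of $x$ is compared to the linearised cost through the slack constraints: we need $\sum_{w,w'} c(v,v',w,w')\,x(w,w') \approx b_\alpha(v,v')$ for the pairs $(v,v')$ with $x(v,v') = 1$, with total error $O(\epsilon n^2)$. In addition, the linear objective $\sum_{v,v'} b_\alpha(v,v')\, x(v,v')$ should not exceed that of $y$ by much.

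This is exactly the randomized rounding of \cite[Section 3]{AroraFK02}. The rounding decomposes $y$ into (a convex combination of) perfect matchings, or rounds row by row with dependent randomisation. It then uses Chernoff/Hoeffding concentration for the $B$-bounded sums $\sum_{w,w'} c(v,v',w,w')\,x(w,w')$, which have $n$ terms each, so each deviates by $O(B\sqrt{n\log n}) = o(\epsilon n)$ with high probability. Finally, it repairs the few violated rows at cost $O(Bn)$ per changed row.

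Then
\[\cost(x) \le \sum_{v,v'} x(v,v')\left(b_\alpha(v,v') + \tfrac13\epsilon n + o(\epsilon n)\right) \le \cost(x^*) + \epsilon n^2,\]
after derandomisation or repetition. Among all guesses, we output the $x$ of least actual cost. Since cost is computed exactly, choosing the minimum is safe even for wrong guesses. The running time is $n^{O(m)}$ guesses times polynomial LP and rounding time.

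\textbf{Main obstacle.} The hardest step is the rounding. The rows must be rounded while preserving all $n^2$ linear constraints approximately and simultaneously keeping a perfect-matching structure. The plan is to invoke the Arora--Frieze--Kaplan rounding lemma essentially verbatim. The only new point is that our sample $\alpha$ need not be a uniformly random set: correctness only uses the inequality \eqref{eq:approximation-of-b}, not how $\alpha$ was obtained.
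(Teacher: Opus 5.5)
Your proposal is correct and follows the paper's proof essentially step for step. You enumerate $\alpha$ and solve the same LP, which minimises $\sum_{v,v'} b_\alpha(v,v')x(v,v')$ under the $\pm\frac13\epsilon n$ constraints, so $x^*$ is feasible and $\zeta \le \cost(x^*)+\frac13\epsilon n^2$. You then round with the Arora--Frieze--Kaplan theorem to a near-perfect matching, complete it with $o(n)$ edges, and accumulate error $\frac23\epsilon n^2+o(n^2)$. The differences are cosmetic: the paper makes your requirement that the linear objective not grow precise by adding $\mathbf b\cdot x \le \zeta/n$ as an extra APEC constraint before rounding, and you make explicit the selection of the cheapest output over all guesses, which the paper leaves implicit.
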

For completeness, we provide a~proof of~\cref{arora-lemma}
in~\cref{app:assignment-problem}. Arora et al.\ obtain a~quasipolynomial time
additive approximation algorithm for QAP by additionally proving the~existence
of~such $\alpha$ of~size $m = O\left(\Bbound^2\epsilon^{-2}\log n\right)$.
In~the~following, we show that $m$ can be bounded by the~VC~dimension
of~the~QAP at~hand, yielding a polynomial-time algorithm for bounded
VC~dimension QAPs.

Let $x^*$ be a~fixed optimal integral solution to~the~input QAP,
and let $\phi^* \coloneqq \phi_{x^*}$ be the~bijection $[n]\to [n]$
corresponding to~$x^*$. Let $\mathcal H^t_{\phi^*}$ be the~hypergraph
for $\phi^*$ as defined in~\eqref{eqn:def-of-htcphi}.

In order to~ease exposition, let us first consider the~restricted case where
the coefficient function $c$ is $\{0,1\}$ valued.
Note that for any $\alpha \subseteq \graph(\phi^*)$, we have
\begin{equation}
  \label{eq:b-same-as-M}
  b_\alpha(v,v') = \frac{n}{|\alpha|} \sum_{(w,w') \in \alpha} c(v,v',w,w')
                 = \frac{n}{|\alpha|} |M^0_\alpha(v,v')|.
\end{equation}
where $M^t_\alpha(v,v') := M^t_{\varphi^*}(v,v')\cap \alpha$.
Now, by \cref{lem:epsilon-approximation-whp} there is an~$\alpha$
of size $O\left(d/\epsilon^2\right)$ such that
\[ b_{\phi^*}(v,v') = |M^0_{\phi^*}(v,v')| \in
    \left[ \frac{n}{|\alpha|}|M^0_\alpha(v,v')| \pm \frac{\epsilon}{3} n \right]
    = \left[ b_\alpha(v,v') \pm \frac{\epsilon}{3} n \right] \]
which concludes the~argument.

In order to~extend this proof to~any $\Bbound$-bounded QAP of~VC~dimension
$d$, we discretise $[-\Bbound,\Bbound]$ using a~set $T$ of~thresholds which
enable a~suitable approximation of~$b_{\phi^*}$. Divide the~interval
$[-\Bbound, \Bbound]$ into $k \coloneqq \ceil{\frac{24\Bbound}{\epsilon}}$
intervals of~length $\frac{2\Bbound}{k}$. Let
$T = \{ -\Bbound, -\Bbound + \frac{2\Bbound}k, \ldots,
    \Bbound - \frac{2\Bbound}k\}$
be the~set of~(left) boundaries of~these intervals.

The following lemma provides a~statement similar to~\eqref{eq:b-same-as-M}.
While in~the~general case we do not obtain equality, we show that averaging
$|M^t_\alpha(v,v')|$ over $t\in T$ yields a~good approximation
of $b_\alpha(v,v')$.

\begin{lemma}
  \label[lemma]{lem:approx-b-alpha-by-mean}
 For any $\alpha \subseteq \graph(\phi^*)$ and any $v,v' \in [n]$, we have
 \[  b_{\alpha}(v,v') \in \left[ \frac{2\Bbound}{k}
          \sum_{t\in T} \frac{n}{|\alpha|} \left|M^{t}_{\alpha}(v,v')\right|
              - \Bbound n \pm \frac{2\Bbound}{k} n \right].\]
\end{lemma}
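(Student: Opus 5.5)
The plan is a pointwise "layer-cake" (discretised integral) identity for each coefficient, summed over $\alpha$. Fix $v,v'$ and a pair $(w,w')\in\alpha$, and write $y \coloneqq c(v,v',w,w')\in[-\Bbound,\Bbound]$. The key elementary claim is
\[ y \in \left[ \frac{2\Bbound}{k}\,\bigl|\{t\in T \mid y > t\}\bigr| - \Bbound \pm \frac{2\Bbound}{k}\right]. \]
To see this, the thresholds in $T$ are $t_j = -\Bbound + j\cdot\frac{2\Bbound}{k}$ for $j=0,\dots,k-1$. The number $N(y)$ of $j$ with $t_j<y$ is exactly the number of left endpoints strictly below $y$. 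If $y\in(t_{j}, t_{j}+\frac{2\Bbound}{k}]$, then $N(y)=j+1$, so
\[ \frac{2\Bbound}{k}N(y)-\Bbound = t_j+\frac{2\Bbound}{k}, \]
which lies within $\frac{2\Bbound}{k}$ of $y$. The boundary case $y=-\Bbound$ gives $N=0$ and value $-\Bbound$, which is exact. This pointwise bound is where care is needed, in particular with the strict inequality $>t$ and the endpoints $\pm\Bbound$; otherwise it is routine.

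Next, I sum over $(w,w')\in\alpha$. Since $\alpha\subseteq\graph(\phi^*)$, each pair has the form $(w,\phi^*(w))$. Hence $(w,w')\in M^t_\alpha(v,v') = M^t_{\phi^*}(v,v')\cap\alpha$ iff $c(v,v',w,w')>t$, and therefore
\[ \sum_{(w,w')\in\alpha} N\bigl(c(v,v',w,w')\bigr) = \sum_{t\in T}\bigl|M^t_\alpha(v,v')\bigr|, \]
by swapping the order of summation. Summing the pointwise bound over the $|\alpha|$ pairs gives
\[ \sum_{\alpha} c \in \left[ \frac{2\Bbound}{k}\sum_{t\in T}|M^t_\alpha(v,v')| - \Bbound|\alpha| \pm \frac{2\Bbound}{k}|\alpha| \right]. \]

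Finally, multiplying by $\frac{n}{|\alpha|}$ turns the left-hand side into $b_\alpha(v,v')$, the term $\Bbound|\alpha|$ into $\Bbound n$, and the error into $\frac{2\Bbound}{k}n$. This is exactly the claimed interval. I expect no real obstacle beyond checking the off-by-one in the threshold count.
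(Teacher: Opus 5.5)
Your proof is correct and follows essentially the same route as the paper: a pointwise bound on the number of thresholds in $T$ strictly below each coefficient, then exchanging the sums over $t\in T$ and $(w,w')\in\alpha$ and rescaling by $n/|\alpha|$. Your case analysis gives the exact count $\lceil k(y+\Bbound)/(2\Bbound)\rceil$, whereas the paper writes a floor; both lie within $\pm 1$ of $k(y+\Bbound)/(2\Bbound)$, so the conclusion is unaffected.
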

\begin{proof}
  Let $v,v'\in[n]$. For each $t\in T$ and $(w,\alpha(w)) \in \alpha$, we define
  the~``bit''
  \[ \beta(w,t) \coloneqq \begin{cases}
    1 & \text{if $t < c(v,v',w,\alpha(w))$,}\\
    0 & \text{otherwise.} \end{cases}\]
  By definition, we have
  \[ \left| M^{t}_{\alpha}(v,v') \right| =
            \sum_{(w,\alpha(w)) \in \alpha} \beta(w,t),\]
  and for fixed $w\in [n]$,
  \[ \sum_{t\in T} \beta(w,t) =
            \floor{\frac{k \cdot c(v,v',w,\alpha(w)) + k\Bbound}{2\Bbound}}
            \in \left[\frac{k \cdot c(v,v',w,\alpha(w))}{2\Bbound} + \frac{k}{2}
                    \pm 1\right].\]
  Then we have
  \begin{align*}
      \frac{1}{k}\sum_{t\in T} \frac{n}{|\alpha|}
        \left|M^{t}_{\alpha}(v,v')\right| &=
          \frac{n}{k|\alpha|}\sum_{t\in T} \sum_{(w,\alpha(w))\in \alpha}
                                                                  \beta(w,t)\\
      &= \frac{n}{k|\alpha|}\sum_{(w,\alpha(w))\in \alpha} \sum_{t\in T}
                                                                  \beta(w,t)\\
      &\in \left[\frac{n}{k|\alpha|}\sum_{(w,\alpha(w))\in \alpha}
                \left(\frac{k \cdot c(v,v',w,\alpha(w))}{2\Bbound} + \frac{k}{2}
                      \pm 1\right)\right]\\
      &= \left[\frac{1}{2\Bbound}b_{\alpha}(v, v') + \frac{n}{2}
               \pm \frac{1}{k} n\right].
    \end{align*}
    Rearranging concludes the~proof of~the~lemma.
\end{proof}
 
Let $K \coloneqq \min\{ k, |\{c(v,v',w,w') \mid v,v',w,w'\in [n]\}|\}$.
We now provide an~$\left(\frac{\epsilon}{12B}\right)$-approx\-imation of~the
set systems $\mathcal H^t_{\phi^*}$ for all $t$. Intuitively, we use a~union
bound over all values $t$ on the~probability that a~sample is not
$\left(\frac{\epsilon}{12B}\right)$-approximation and get the~following result. 
\begin{lemma}
  \label[lemma]{lem:approximation-existence-for-all-t}
  There is an~$\alpha$ of~size $O\left(\frac{B^2}{\epsilon^2}(d + \ln K)\right)$
  such that for all $t \in T$ and all $v,v'$,
  \[ \left|M^{t}_{\phi^*}(v,v')\right| \in
          \left[ \frac{n}{|S|} \left|M^{t}_{\phi^*}(v,v') \cap \alpha\right|
              \pm \frac{\epsilon}{12B} n\right]. \]
\end{lemma}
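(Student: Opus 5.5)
We sample $\alpha$ uniformly at random from $\graph(\phi^*)$ and show with a union bound that, with positive probability, it is simultaneously a good approximation for all the set systems $\mathcal H^t_{\phi^*}$, $t\in T$. (In the statement, $S$ stands for $\alpha$, i.e.\ the normalising factor is $n/|\alpha|$.)

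First we reduce the number of thresholds that matter. Although $T$ has $k$ elements, the set system $\mathcal H^t_{\phi^*}$ only depends on which coefficient values exceed $t$. Since $c$ takes at most $|\{c(v,v',w,w')\}|$ distinct values, the thresholds in $T$ induce at most $\min\{k,\ |\{c(v,v',w,w')\}|+1\}$ distinct set systems. This number is $O(K)$, and for $K\ge 1$ the additive $+1$ is harmless inside the $\ln K$ term. Let $T'\subseteq T$ be a set of representatives, so $|T'| = O(K)$.

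Next we invoke \cref{lem:epsilon-approximation-whp} for each $t\in T'$. The ground set is $\graph(\phi^*)$, which has size $n$, and the set system is $\mathcal H^t_{\phi^*}$. By the weak VC bound (or by \eqref{eqn:htcphi_le_htc}), this system has VC dimension at most $d$. We take accuracy $\epsilon' = \frac{\epsilon}{12\Bbound}$ and failure probability $\gamma = \frac{1}{2|T'|}$. A uniform sample $\alpha\subseteq\graph(\phi^*)$ of size
\[ O\left(\epsilon'^{-2}(d+\log(1/\gamma))\right) = O\left(\frac{\Bbound^2}{\epsilon^2}(d+\ln K)\right) \]
then fails to be an $\epsilon'$-approximation for a fixed $t$ with probability at most $\gamma$.

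A union bound over $T'$ gives failure probability at most $1/2$, so some $\alpha$ of this size is an $\epsilon'$-approximation for every $\mathcal H^t_{\phi^*}$, $t\in T'$. Since every $t\in T$ induces the same set system as some $t'\in T'$, the property holds for all $t\in T$. Unfolding the definition of $\epsilon'$-approximation for the sets $M^t_{\phi^*}(v,v')\in\mathcal H^t_{\phi^*}$, with $|V|=n$, yields exactly the claimed inclusion for all $v,v'$.

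The main subtlety is making the sampling model match \cref{lem:epsilon-approximation-whp}. The lemma is typically stated for samples with replacement, but $\alpha$ must be an injective partial function, hence a set. We can either sample without replacement, for which the standard bounds still hold, or allow a multiset and note that $b_\alpha$ and the counts $|M^t\cap\alpha|$ make sense with multiplicities. Either choice preserves the size bound.
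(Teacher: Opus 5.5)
Your proposal is correct and follows the paper's proof: reduce $T$ to the few distinct set systems $\mathcal H^t_{\phi^*}$, apply \cref{lem:epsilon-approximation-whp} to each with accuracy $\frac{\epsilon}{12\Bbound}$ and failure probability inversely proportional to their number, and take a union bound. Your $+1$ in the count of distinct systems and your remark on sampling without replacement (so that $\alpha$ is a genuine subset of $\graph(\phi^*)$) are slightly more careful than the paper, which claims at most $K$ systems and glosses over the sampling model.
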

\begin{proof}
  Observe first that the~set system $\mathcal H^t_{\phi^*}$ only changes when
  $t$ passes a~value of~the~form $c(v,v',w,w')$ for some $v,v',w,w' \in [n]$.
  Hence, the~family $\{\mathcal H^t_{\phi^*} \mid t \in T\}$ contains at~most
  $K$ distinct set systems.
  For fixed $t$ inducing a~distinct set system, we have by
  \cref{lem:epsilon-approximation-whp} that a~uniform sample $\alpha$ of~size
  $O\left(\frac{c^2}{\epsilon^2}\left(d + \log\frac{1}{\gamma} \right)\right)$
  is an~$\frac{\epsilon}{12\Bbound}$-approximation for $\mathcal H_{\phi^*}^{t}$
  with probability at~least $1-\gamma$.
  By the~union bound on the~complement event and setting
  $\gamma \coloneqq \frac1{2K}$, we have that the~probability
  of~a~random sample $\alpha$ of~size
  $O\left(\frac{c^2}{\epsilon^2}(d + \ln K)\right)$
  being an~$\frac{\epsilon}{12\Bbound}$-approx\-imation for
  $\mathcal H_{\phi^*}^{t}$ for all $t \in T$ is positive, in~particular, such
  set exists.
\end{proof}

Similarly to~the~0-1-valued case, the~following result follows from
\cref{lem:approx-b-alpha-by-mean} and
\cref{lem:approximation-existence-for-all-t}.
\begin{lemma}
  \label[lemma]{cor:approximation-existence-for-bs}
  There is an~$\alpha$ of~size
  $O\left( \frac{\Bbound^2}{\epsilon^2}(d + \log K)\right)$ such that for all
  $v, v'$,
  \[ b_{\phi^*}(v,v') \in
        \left[ b_{\alpha}(v,v') \pm \frac{\epsilon}{3}n\right]\]
\end{lemma}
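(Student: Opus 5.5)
We take $\alpha$ to be the set provided by \cref{lem:approximation-existence-for-all-t}. It has the required size $O\bigl(\frac{\Bbound^2}{\epsilon^2}(d+\ln K)\bigr)$ and, for every $t\in T$ and all $v,v'$,
\[
\left|M^t_{\phi^*}(v,v')\right|\in\left[\frac{n}{|\alpha|}\left|M^t_\alpha(v,v')\right|\pm\frac{\epsilon}{12\Bbound}n\right].
\]
Here $M^t_\alpha(v,v')=M^t_{\phi^*}(v,v')\cap\alpha$. The strategy is to apply \cref{lem:approx-b-alpha-by-mean} twice: once to $\alpha$, and once to the full graph $\graph(\phi^*)$. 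This expresses both $b_\alpha(v,v')$ and $b_{\phi^*}(v,v')$ as the same affine function of the averaged threshold counts, up to small error. The difference of the two averaged counts is then controlled by the approximation property.

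Concretely, \cref{lem:approx-b-alpha-by-mean} with $\alpha$ replaced by $\graph(\phi^*)$ gives
\[
b_{\phi^*}(v,v')\in\left[\frac{2\Bbound}{k}\sum_{t\in T}\left|M^t_{\phi^*}(v,v')\right|-\Bbound n\pm\frac{2\Bbound}{k}n\right],
\]
since $|\graph(\phi^*)|=n$ and the normalising factor is $1$. The same lemma for $\alpha$ gives
\[
b_\alpha(v,v')\in\left[\frac{2\Bbound}{k}\sum_{t\in T}\frac{n}{|\alpha|}\left|M^t_\alpha(v,v')\right|-\Bbound n\pm\frac{2\Bbound}{k}n\right].
\]
Subtracting, the $\Bbound n$ terms cancel. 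The main term of the difference is bounded by
\[
\frac{2\Bbound}{k}\sum_{t\in T}\left|\,\left|M^t_{\phi^*}(v,v')\right|-\frac{n}{|\alpha|}\left|M^t_\alpha(v,v')\right|\,\right|\le\frac{2\Bbound}{k}\cdot k\cdot\frac{\epsilon}{12\Bbound}n=\frac{\epsilon}{6}n,
\]
using $|T|=k$. The two discretisation errors contribute at most $\frac{4\Bbound}{k}n\le\frac{\epsilon}{6}n$, because $k\ge 24\Bbound/\epsilon$. In total $|b_{\phi^*}(v,v')-b_\alpha(v,v')|\le\frac{\epsilon}{3}n$, as required.

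The main obstacle is bookkeeping rather than substance. One point to check is that \cref{lem:approx-b-alpha-by-mean} genuinely applies with $\alpha=\graph(\phi^*)$: its proof uses only $\alpha\subseteq\graph(\phi^*)$, so this holds. The other is to confirm that the constants $k=\lceil 24\Bbound/\epsilon\rceil$ and $\frac{\epsilon}{12\Bbound}$ are exactly tuned so that the two error sources each contribute $\frac{\epsilon}{6}n$.

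For the size bound, we note that $K\le k=O(\Bbound/\epsilon)$. Hence $\log K$ can also be written as $O(\log(\Bbound\epsilon^{-1}))$, which matches the form needed in \cref{thm:quadratic-assignment-bounded-vc-approximation-main}.
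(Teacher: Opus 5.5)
Your proof is correct and is essentially the paper's argument. You take $\alpha$ from \cref{lem:approximation-existence-for-all-t}, apply \cref{lem:approx-b-alpha-by-mean} to both $\graph(\phi^*)$ and $\alpha$, and bound the total error by $\frac{4\Bbound}{k}n+\frac{\epsilon}{6}n\le\frac{\epsilon}{3}n$; the paper chains the interval inclusions where you subtract, which is only a cosmetic difference.
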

\begin{proof}
  By \cref{lem:approximation-existence-for-all-t} we get an $\alpha$
  of~the~given size such that
  \begin{equation}
    \label{eq:half-aprox}
      \left|M^{t}_{\phi^*}(v,v')\right| \in
          \left[ \frac{n}{|\alpha|} \left|M^{t}_{\alpha}(v,v') \right|
                \pm \frac{\epsilon}{12\Bbound} n \right]
  \end{equation}
  for all $t\in T$.
  Then
  \begin{align*}
    b_{\phi^*}(v,v') &\in \left[ \frac{2\Bbound}{k}
        \sum_{t \in T} \left|M^{t}_{\phi^*}(v,v')\right| - \Bbound n
        \pm \frac{2\Bbound}{k} n \right]
            & \text{\cref{lem:approx-b-alpha-by-mean}}\\
    &\subseteq \left[\frac{2\Bbound}k
        \sum_{t \in T}\left(\frac{n}{|\alpha|} \left|M^{t}_{\alpha}(v,v')\right|
        \pm \frac{\epsilon}{12\Bbound} n\right) - \Bbound n
          \pm \frac{2\Bbound}k n \right]
            & \eqref{eq:half-aprox}\\
    &= \left[\frac{2\Bbound}k \sum_{t \in T} \frac{n}{|\alpha|}
        \left|M^{t}_{\alpha}(v,v')\right| - \Bbound n
        \pm \left(\frac{2\Bbound}k n + \frac{\epsilon}{6} n \right)\right]
            & \\
    &\subseteq \left[ \left(b_{\alpha}(v,v') \pm \frac{2\Bbound}{k}n \right)
        \pm \left(\frac{2\Bbound}{k} n + \frac{\epsilon}{6} n \right)\right]
            & \text{\cref{lem:approx-b-alpha-by-mean}}\\
    &= \left[ b_{\alpha}(v,v')
        \pm \left(\frac{4\Bbound}{k}n + \frac{\epsilon}{6} n\right)\right]
            & \\
    &\subseteq \left[ b_{\alpha}(v,v') \pm \frac{\epsilon}{3} n\right],
            & \text{since $k = \ceil{\frac{24\Bbound}{\epsilon}}$.}\\
  \end{align*}
\end{proof}
Hence, \cref{cor:approximation-existence-for-bs} guarantees the~existence of~an
$\alpha$ of size $m = O\left(\frac{\Bbound^2}{\epsilon^2}(d + \log K)\right)$
satisfying the~requirements of~\cref{arora-lemma}, implying that we
can find a~perfect matching $x$ with
\mbox{$\cost(x) \leq \cost(x^*) + \epsilon n^2$} in time $n^{O(m)}$.
Since $K \leq k = \ceil{\frac{12\Bbound}{\varepsilon}}$, we obtain the following
result.
\begin{theorem}
   \label{thm:quadratic-assignment-bounded-vc-approximation}
    There is an~algorithm that, given $\varepsilon > 0$ and a~$\Bbound$-bounded
    QAP instance of~VC~dimension weakly bounded by $d$, outputs a~perfect
    matching $x$ such that \mbox{$\cost(x) \leq \cost(x^*) + \epsilon n^2$}
    in~time
    $n^{O\left(\Bbound^2\epsilon^{-2}(d + \log \Bbound\epsilon^{-1})\right)}$,
    where $x^*$ denotes an~optimum solution.
\end{theorem}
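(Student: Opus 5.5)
The proof combines the existence result \cref{cor:approximation-existence-for-bs} with the algorithmic \cref{arora-lemma}.

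\textbf{Step 1: fix an optimum and its set systems.} Fix an optimal integral solution $x^*$ and let $\phi^* \coloneqq \phi_{x^*}$. Since the instance has VC~dimension weakly bounded by $d$, for every $t \in \mathbb R$ the set system $\mathcal H^t_{\phi^*}$ has VC~dimension at most $d$. This is exactly the hypothesis used in \cref{lem:approximation-existence-for-all-t}. That lemma applies \cref{lem:epsilon-approximation-whp} to each of the at most $K$ distinct systems $\mathcal H^t_{\phi^*}$ with $t\in T$, and takes a union bound with failure probability $\gamma = 1/(2K)$ for each.

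The uniform sample in that lemma should be drawn from $\graph(\phi^*)$, which is the ground set of $\mathcal H^t_{\phi^*}$. This guarantees $\alpha \subseteq \graph(\phi^*)$, as both \cref{lem:approx-b-alpha-by-mean} and \cref{arora-lemma} require. We also read $|S|$ in that lemma as $|\alpha|$.

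\textbf{Step 2: apply the existence result.} By \cref{cor:approximation-existence-for-bs}, there is an $\alpha \subseteq \graph(\phi^*)$ of size $m = O(\Bbound^2\epsilon^{-2}(d+\log K))$ with $b_{\phi^*}(v,v') \in [b_\alpha(v,v') \pm \frac{\epsilon}{3} n]$ for all $v,v'$. This is precisely condition \eqref{eq:approximation-of-b}.

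\textbf{Step 3: bound $K$ and conclude.} We have $K \le k = \ceil{24\Bbound/\epsilon}$, so $\log K = O(\log(\Bbound\epsilon^{-1}))$. Hence the smallest integer $m$ in \cref{arora-lemma} is $O(\Bbound^2\epsilon^{-2}(d + \log \Bbound\epsilon^{-1}))$. \cref{arora-lemma} then returns a perfect matching $x$ with $\cost(x)\le \cost(x^*)+\epsilon n^2$ in time $n^{O(m)}$, which is the claimed bound.

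\textbf{Remarks on the main obstacle.} The algorithm never needs to know $\phi^*$ or $\alpha$. In the proof of \cref{arora-lemma} (\cref{app:assignment-problem}), one enumerates all injective partial functions of size $m$, i.e.\ $n^{O(m)}$ candidates. If $m$ is not known in advance, one tries increasing sizes up to the bound above and outputs the best matching found.

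The only genuine care point is the constant-chasing in the discretisation. We need $\frac{4\Bbound}{k}n + \frac{\epsilon}{6}n \le \frac{\epsilon}{3}n$, which holds since $k \ge 24\Bbound/\epsilon$. We also need the accuracy in \cref{lem:epsilon-approximation-whp} set to $\epsilon/(12\Bbound)$, so that the sample size acquires the factor $\Bbound^2/\epsilon^2$; the constant written as $c^2$ in \cref{lem:approximation-existence-for-all-t} should be read as $\Bbound^2$.

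Finally, the statement only assumes the weak bound. By \eqref{eqn:htcphi_le_htc}, this also covers \cref{thm:quadratic-assignment-bounded-vc-approximation-main}.
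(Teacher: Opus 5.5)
Your proposal is correct and is essentially the paper's own argument. You obtain an $\alpha \subseteq \graph(\phi^*)$ of size $O(\Bbound^2\epsilon^{-2}(d+\log K))$ from \cref{cor:approximation-existence-for-bs}, feed it into \cref{arora-lemma}, and use $K \le k = \ceil{24\Bbound/\epsilon}$ to get the stated exponent. Your side remarks also correctly patch small typos in the paper's presentation: the sample is drawn from $\graph(\phi^*)$, $|S|$ is read as $|\alpha|$, $c^2$ is read as $\Bbound^2$, and the constant is $24$ rather than the $12$ written in the paper.
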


In particular, we obtain
\cref{thm:quadratic-assignment-bounded-vc-approximation-main} via
\eqref{eqn:htcphi_le_htc}. In~the~special case where the~coefficient function
takes a~moderate number of~different values, we have the following corollary.

\begin{corollary}
  \label[corollary]{cor:qap_vc_approx_bounded_values}
  There is an~algorithm that, given $\varepsilon > 0$ and a~$\Bbound$-bounded
  QAP instance of~VC~dimension weakly bounded by $d$, outputs a~perfect matching
  $x$ such that \mbox{$\cost(x) \leq \cost(x^*) + \varepsilon n^2$}, where $x^*$
  denotes an~optimum solution. If the~coefficient function of~the~QAP instance
  takes $\kappa$~different values, the~algorithm runs in~time
  $n^{O(\Bbound^2\varepsilon^{-2}(d + \log \kappa))}$.
\end{corollary}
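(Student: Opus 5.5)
The corollary follows from the same chain of lemmas as \cref{thm:quadratic-assignment-bounded-vc-approximation}. The only change is to keep the quantity $K$ explicit in the bound instead of replacing it by $k$.

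Recall that $K \coloneqq \min\{k, |\{c(v,v',w,w') \mid v,v',w,w'\in[n]\}|\}$. If the coefficient function takes only $\kappa$ distinct values, then $K \le \kappa$. The plan has three steps.

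\textbf{Step 1: a small sample exists.} Invoke \cref{cor:approximation-existence-for-bs}. It gives an $\alpha \subseteq \graph(\phi^*)$ of size
\[ m = O\left(\frac{\Bbound^2}{\epsilon^2}(d+\log K)\right) \subseteq O\left(\frac{\Bbound^2}{\epsilon^2}(d+\log\kappa)\right) \]
that satisfies \eqref{eq:approximation-of-b}.

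\textbf{Step 2: only $\kappa$ thresholds matter.} The use of $K$ in \cref{lem:approximation-existence-for-all-t} is legitimate because the set systems $\mathcal H^t_{\phi^*}$ change only when $t$ crosses a value of $c$. Consequently, the family $\{\mathcal H^t_{\phi^*} \mid t\in T\}$ contains at most $\kappa+1$ distinct members, which is $O(\kappa)$.

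\begin{itemize}
\item The union bound with $\gamma = 1/(2(\kappa+1))$ then only adds $O(\log\kappa)$ to the sample size.
\item The weak VC-dimension bound $d$ applies to each $\mathcal H^t_{\phi^*}$ by definition, since $\phi^*$ is a bijection.
\end{itemize}

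\textbf{Step 3: run the algorithm.} Apply \cref{arora-lemma} with this $m$. It returns a perfect matching $x$ with $\cost(x)\le\cost(x^*)+\epsilon n^2$ in time $n^{O(m)} = n^{O(\Bbound^2\epsilon^{-2}(d+\log\kappa))}$.

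One subtlety is that the algorithm does not know $\phi^*$ or $m$. This is handled exactly as in \cref{arora-lemma}: the algorithm enumerates all partial injections of the prescribed size, or increases the size until the bound is met.

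There is essentially no obstacle here. The only point needing care is the edge case $\kappa = 1$, where $\log\kappa = 0$. This is absorbed by taking $\log(\kappa+1)$ or by the assumption $d\ge 1$; when $c$ is constant, every matching is optimal anyway.
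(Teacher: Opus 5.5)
Your proposal is correct and follows the paper's route exactly. The corollary is \cref{cor:approximation-existence-for-bs} (via \cref{lem:approximation-existence-for-all-t}) combined with \cref{arora-lemma}, using only that $K\le\kappa$; your count of at most $\kappa+1$ distinct threshold systems is in fact slightly more accurate than the paper's ``at most $K$''. One point to make precise: your ``increase the size until the bound is met'' option does not work as stated, because the guarantee $\cost(x)\le\cost(x^*)+\epsilon n^2$ cannot be tested to decide when to stop. Instead, enumerate all $\alpha$ up to the explicit size bound $O(\Bbound^2\epsilon^{-2}(d+\log\kappa))$ and output the cheapest matching produced.
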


Let us now turn the~problem GED of~computing the~edit distance between two
graphs. The~objective is to~find a~bijection $\phi$ between the~vertex
sets of~the~two input graphs that minimises the~cost, defined as the~number
of edge mismatches. We have seen in~\cref{ex:ged_qap} how to~translate this
into a~QAP. Furthermore, by \cref{lem:general-10d-lemma}, if the~VC~dimension
of the~two input graphs is bounded by $d$ then the~VC~dimension of~the~QAP is
weakly bounded by $O(d)$. Thus, as a~corollary to
\cref{thm:quadratic-assignment-bounded-vc-approximation}, we obtain
the following result.

\begin{corollary}\label{cor:ged-weighted}
  There is an~algorithm that, given $\epsilon > 0$ and weighted
  $n$-vertex graphs $G,H$ of~VC~dimension at most $d$,
  outputs a~bijection $\phi:V(G)\to V(H)$
  such that \mbox{$\cost(\phi) \leq \ged + \epsilon n^2$} in~time
  $n^{O\left(B^2\epsilon^{-2}(d + \log B\epsilon^{-1})\right)}$,
  where $B$ bounds the edge weights (in absolute value) in $G$ and $H$.
\end{corollary}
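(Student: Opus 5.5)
We reduce \cref{cor:ged-weighted} to \cref{thm:quadratic-assignment-bounded-vc-approximation} through the QAP encoding of \cref{ex:ged_qap}. Identify $V(G)$ and $V(H)$ with $[n]$ via arbitrary bijections, and use the coefficient function \eqref{eq:graph-qap-cost-function}, $c(v,v',w,w') = |\omega_G(v,w) - \omega_H(v',w')|$. For every bijection $\phi$, the integral solution $x$ with $\phi_x = \phi$ has $\cost(x) = \sum_{v,w} |\omega_G(vw) - \omega_H(\phi(v)\phi(w))| = \cost_{G,H}(\phi)$. Hence the optimum QAP cost equals $\ged(G,H)$ (up to the fixed factor $2$ if edges are counted once in the graph cost). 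It therefore suffices to run the QAP algorithm with accuracy $\epsilon$ (or $\epsilon/2$ to absorb that factor) and return $\phi_x$.

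We check the hypotheses of \cref{thm:quadratic-assignment-bounded-vc-approximation}.

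\emph{Boundedness.} All edge weights lie in $[-B,B]$, and non-edges have weight $0$. So every coefficient lies in $[0,2B]$, and the instance is $2B$-bounded.

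\emph{VC dimension.} By \cref{lem:general-10d-lemma}, $\VC(\mathcal H^t_c) \le 10d$ for every $t$, assuming $d \ge 1$; if $d = 0$, we replace $d$ by $1$, which does not affect the asymptotics. By \eqref{eqn:htcphi_le_htc}, $\VC(\mathcal H^t_{c,\phi}) \le 10d$ for every bijection $\phi$, so the VC dimension is weakly bounded by $10d$.

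Plugging in $2B$ and $10d$, the running time is
\[
n^{O\left((2B)^2\epsilon^{-2}(10d + \log(2B\epsilon^{-1}))\right)} = n^{O\left(B^2\epsilon^{-2}(d + \log B\epsilon^{-1})\right)},
\]
as claimed. The returned matching satisfies $\cost(x) \le \cost(x^*) + \epsilon n^2$, which translates directly into $\cost(\phi_x) \le \ged(G,H) + \epsilon n^2$.

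The only point needing care is the VC bound. \Cref{lem:general-10d-lemma} is stated for weighted graphs on a common vertex set $V$, and its proof asserts that $X \cap M^t(v,w)$ is determined by the traces of the threshold neighbourhoods $N^t_G(v)$ and $N^t_H(w)$ on the projections of $X$. For the absolute-difference coefficient this needs slightly more than a single threshold $t$ of each graph: $|a-b| > t$ depends on $a$ and $b$ jointly. To avoid depending on that step, one can instead invoke \cref{lem:general-10d-lemma} as stated, since the corollary only needs its conclusion. If a self-contained argument were required, we would redo the Sauer--Shelah counting with a bounded number of thresholds of $G$ and $H$, namely those relevant after discretising weights to the grid $T$. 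This still gives a linear bound $O(d)$, possibly up to a $\log(B/\epsilon)$ term, which is absorbed into the stated exponent.
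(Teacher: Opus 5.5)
\paragraph{Verdict.} You follow the paper's own derivation exactly: encode via \eqref{eq:graph-qap-cost-function}, get a weak VC bound from \cref{lem:general-10d-lemma} and \eqref{eqn:htcphi_le_htc}, and apply \cref{thm:quadratic-assignment-bounded-vc-approximation}. Your boundedness and cost bookkeeping are fine. You are also right that the trace argument in the proof of \cref{lem:general-10d-lemma} fails for weighted graphs. It is valid for $0/1$ weights, so \cref{thm:main-edit-distance} is unaffected. However, neither of your workarounds is sound, so the VC step, which is the only substantive one, remains a genuine gap. The paper's one-line derivation has the same gap.

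\paragraph{Why the lemma cannot be invoked ``as stated''.} Its statement is false for weighted graphs. Let $G$ have vertices $y_1,\dots,y_m$ and $v_P$ for every $P\subseteq[m]$, with $\omega_G(v_Py_j)=2j+\mathbf 1[j\in P]$ and no other edges. Let $H$ have vertices $u,y'_1,\dots,y'_m$ plus isolated padding, with $\omega_H(uy'_j)=2j$.
\begin{itemize}
\item For $s\ge 0$, every threshold neighbourhood lies on one side of the bipartition, and on each side the traces form a chain. For example, for $s\in[2j,2j+1)$, $N^s_G(v_P)$ is $\{y_i\mid i>j\}$ or $\{y_i\mid i\ge j\}$, and $N^s_G(y_i)$ is $\emptyset$, $\{v_P\mid j\in P\}$, or all $v_P$. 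Hence both graphs have VC~dimension $1$.
\item Yet $c(v_P,u,y_j,y'_j)=\mathbf 1[j\in P]$. So for any bijection $\phi$ with $\phi(y_j)=y'_j$, the set $\{(y_j,y'_j)\mid j\in[m]\}$ is shattered by $\mathcal H^{1/2}_{c,\phi}$. Even the weak VC~dimension is therefore $m\approx\log n$.
\item Rescale so that consecutive levels are $\Theta(\epsilon)$ apart and take $m=\Theta(B/\epsilon)$. The example survives rounding to the grid $T$, so some $\mathcal H^t_{c,\phi}$ with $t\in T$ has VC~dimension $\Omega(B/\epsilon)$ although $d=1$.
\end{itemize}

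\paragraph{Why the fallback fails.} Redoing the Sauer--Shelah count over all $|T|$ levels only gives $2^s\le(es/d)^{2d|T|}$, i.e.\ $s=O(d|T|\log|T|)$. This is multiplicative in $B/\epsilon$, not an additive $\log(B/\epsilon)$. The example shows that no bound of the form $O(d+\log(B/\epsilon))$ holds. The black-box route thus yields at best an exponent of order $B^3\epsilon^{-3}d$ up to logarithms.

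\paragraph{The missing idea.} Threshold $\omega_G$ and $\omega_H$ \emph{separately} instead of thresholding $c$. For $a,b\in[-B,B]$,
\[
|a-b|=\int_{-B}^{B}\bigl|\mathbf 1[a>s]-\mathbf 1[b>s]\bigr|\,ds,
\qquad\text{so}\qquad
b_{\phi^*}(v,v')=\int_{-B}^{B}\bigl|N^s_G(v)\symdiff(\phi^*)^{-1}(N^s_H(v'))\bigr|\,ds,
\]
and the analogous formula holds for $b_\alpha$.
\begin{itemize}
\item At a single level $s$, the trace of $N^s_G(v)\symdiff(\phi^*)^{-1}(N^s_H(v'))$ on a set $Y$ is determined by the traces of $N^s_G(v)$ and $(\phi^*)^{-1}(N^s_H(v'))$ on $Y$. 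So the Sauer--Shelah count from \cref{lem:general-10d-lemma} is valid here and bounds the VC~dimension by $10d$.
\item Replace the integral by a Riemann sum with step $\Theta(\epsilon)$; this costs at most $O(\epsilon n)$ per pair $(v,v')$.
\item Take $\alpha\subseteq\graph(\phi^*)$ to be a simultaneous $\Theta(\epsilon/B)$-approximation for these $O(B/\epsilon)$ set systems, using \cref{lem:epsilon-approximation-whp} and a union bound. This gives $|\alpha|=O(B^2\epsilon^{-2}(d+\log(B/\epsilon)))$ satisfying \eqref{eq:approximation-of-b}.
\item \Cref{arora-lemma} then yields the claimed running time.
\end{itemize}
The corollary must therefore be proved through \cref{arora-lemma} directly. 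It cannot come from \cref{thm:quadratic-assignment-bounded-vc-approximation} as a black box, because that theorem's weak-VC hypothesis genuinely fails here.
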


For unweighted graphs, GED translates to~a~0-1-valued QAP, so we can
apply  \cref{cor:qap_vc_approx_bounded_values} and obtain the~slightly
better bound stated as \cref{thm:main-edit-distance} in~the~introduction.

\begin{remark}\label{rem:higher-order}
  It is also possible to~generalise our proof for higher-degree assignment
  problems. Similar to~\cite[Section 3]{AroraFK02}, we can proceed by induction
  on the~degree $r$. For $r > 2$, we have to~find $\alpha$ such that
  $b_{\phi^*}(v_2, v_2', \ldots, v_r,v_r') \in
      \left[b'_{\alpha}(v_2, v_2', \ldots, v_r,v_r')
            \pm \frac{1}{r}\epsilon n\right]$.
  We can define the~set systems accordingly, so that the
  $\epsilon$-approximations will guarantee the~existence of~such small sample
  $\alpha$. Furthermore, the~average will again approximate the~$b$-values.
  An~analogous variant of~\Cref{cor:approximation-existence-for-bs}
  will then be used for $\epsilon/r$ in~place of~$\epsilon/3$. Then we can show
  an~algorithm running in~time
  $n^{O\left(\Bbound^2r^2\epsilon^{-2}(d + \log \Bbound r\epsilon^{-1})\right)}$
  that additively approximates the~solution (up to~$\epsilon n^r$) for every
  $\Bbound$-bounded degree-$r$ assignment problem. Since $d \leq \log n^{2r}$,
  this slightly improves the~corresponding result
  in~\cite[Section 3]{AroraFK02}.
\end{remark}

\section{Robust Graph Isomorphism and Weisfeiler-Leman Algorithm}
\label{sec:wl}

A particularly interesting special case of~GED is the
\emph{robust graph isomorphism} problem. This is a~promise problem, where we are
given two graphs $G,H$, which are promised to be either isomorphic or have large
edit distance, and the~task is to decide which of~the~two options holds.
Formally, for a~fixed $\epsilon > 0$, we consider the~promise problem
$\epsilon$-GI defined as follows:

\medskip
\defpromiseproblem{$\epsilon$-\textsc{GI}}
    {Two $n$-vertex graphs $G,H$.}
    {Either $G \cong H$ or $\ged(G,H) \geq \epsilon n^2$.}
    {Is $G\cong H$?}
\medskip

By \cref{thm:main-edit-distance} the~problem $\epsilon$-\textsc{GI} can
be solved in~time $n^{O(\epsilon^{-2}d)}$ on graphs of~VC~dimension at most $d$.
Still, the~algorithm is fairly complicated relying on, e.g., advanced rounding
methods in~the~proof of~\cref{arora-lemma}.

In this section, we show that, for the~special case of~$\epsilon$-\textsc{GI},
we can instead use the~Weisfeiler-Leman algorithm. For $k \geq 1$, the
\emph{$k$-dimensional Weisfeiler-Leman algorithm ($k$-WL)} is a~standard
heuristic in~the~context of~graph isomorphism testing (see, e.g.,
\cite{Babai16,cai_optimal_1992,GroheN21,Kiefer20}) which tries to distinguish
between two input graphs by propagating information on the~local structure
of~\mbox{$k$-tuples} of~vertices. In~their seminal work, Cai, Fürer and Immerman
\cite{cai_optimal_1992} show that dimension $\Omega(n)$ is required for WL to
distinguish between all non-isomorphic pairs of~$n$-vertex graphs $G,H$
(in~particular, $k$-WL is an incomplete isomorphism test for every $k \geq 1$).
However, the~graph pairs constructed by Cai, Fürer and Immerman only have small
edit distance, and thus are not a~valid input for $\epsilon$-\textsc{GI} (and
the same holds for other known constructions). In~the~following, we show that
much better upper bounds on the~WL dimension can be achieved if we are
guaranteed that the~inputs graphs have large edit distance.

\subsection{The Weisfeiler-Leman Algorithm}

We start by giving a~brief description of~the~WL algorithm. Fix some $k \geq 1$
and let $G$ be a~graph. For colourings
$\lambda_1,\lambda_2\colon (V(G))^k \to C$, we say that $\lambda_1$
\emph{refines} $\lambda_2$, denoted by $\lambda_1 \preceq \lambda_2$,
if $\lambda_2(\bar v) = \lambda_2(\bar w)$ for all $\bar v,\bar w \in (V(G))^k$
for which $\lambda_1(\bar v) = \lambda_1(\bar w)$.
The two colourings are \emph{equivalent}, denoted $\lambda_1 \equiv \lambda_2$,
if $\lambda_1 \preceq \lambda_2$ and $\lambda_2 \preceq \lambda_1$.

The $k$-WL algorithm iteratively computes finer and finer colourings
of $k$-tuples of~vertices.
For $i \geq 0$ we describe the~colouring $\WLit{k}{i}{G}\colon (V(G))^k \to C$
computed in~the~$i$-th round of~the~algorithm.
For $i = 0$, each tuple is coloured with the~isomorphism type of~the~underlying
ordered subgraph.
More formally, if $H$ is a~second graph, $\bar v = (v_1,\dots,v_k) \in (V(G))^k$
and \mbox{$\bar w = (w_1,\dots,w_k) \in (V(H))^k$}, then
$\WLit{k}{i}{G}(\bar v) = \WLit{k}{i}{H}(\bar w)$ if and only if, for every
$i,j \in [k]$, it holds that $v_i = v_j \iff w_i = w_j$ and
$v_iv_j \in E(G) \iff w_iw_j \in E(H)$.

Suppose $k \geq 2$. For $i > 0$, we set
$\WLit{k}{i}{G}(\bar v) \coloneqq
    \big(\WLit{k}{i-1}{G}(\bar v);\mathcal{M}_i(\bar v)\big)$ where
\[\mathcal{M}_i(\bar v) \coloneqq
      \Big\{\!\Big\{ \big(\WLit{k}{i-1}{G}(\bar v[w/1]),\dots,
          \WLit{k}{i-1}{G}(\bar v[w/k])\big) ~\Big|~ w \in V(G)\Big\}\!\Big\}\]
and $\bar v[w/i]$ denotes the~$k$-tuple
$(v_1,\dots,v_{i-1},w,v_{i+1},\dots,v_k)$
obtained from $\bar v$ by replacing the~$i$-th entry with $w$.
For $k = 1$, the~definition is analogous, but we set
\[\mathcal{M}_i(v) =
      \Big\{\!\Big\{ \WLit{k}{i-1}{G}(w) ~\Big|~ w \in N_G(v)\Big\}\!\Big\}.\]
By definition, we obtain that $\WLit{k}{i+1}{G} \preceq \WLit{k}{i}{G}$ for
every $i \geq 0$.
Hence, there is a~minimal $i_\infty$ such that
$\WLit{k}{i_\infty}{G} \equiv \WLit{k}{i_\infty+1}{G}$, and we denote
$\WL{k}{G} \coloneqq \WLit{k}{i_\infty}{G}$ the~\emph{$k$-stable colouring}.
The $k$-dimensional Weisfeiler-Leman algorithm takes as input a~graph $G$ and
outputs (a colouring that is equivalent to) $\WL{k}{G}$.
This can be done in~time $O(k^2n^{k+1} \log n)$ \cite{ImmermanL90}.
 
Let $H$ be a~second graph.
The $k$-dimensional Weisfeiler-Leman algorithm \emph{distinguishes} $G$ and $H$
if there is a~colour $c \in C$ such that
\[\Big|\Big\{ \bar v \in (V(G))^k \Bigmid \WL{k}{G}(\bar v) = c \Big\}\Big|
\neq \Big|\Big\{\bar w \in (V(H))^k \Bigmid \WL{k}{H}(\bar w) = c \Big\}\Big|.\]

\paragraph{Bijective $k$-pebble game.}

We will need a~well-known game characterisation of~$k$-WL
\cite{cai_optimal_1992,Hella96}.

Let $k \geq 1$.
For graphs $G$ and $H$ on the~same number of~vertices, we define the
\emph{bijective $k$-pebble game} $\BP_{k}(G,H)$ as follows:
\begin{itemize}
 \item The~game has two players called \emph{Spoiler} and \emph{Duplicator}.
 \item The~game proceeds in~rounds, each of~which is associated with a~pair
       of~positions $(\bar v,\bar w)$ with~$\bar v \in \big(V(G)\big)^k$
       and~$\bar w \in \big(V(H)\big)^k$.
 \item To determine the~initial position, Duplicator plays a~bijection
       $f\colon \big(V(G)\big)^k \rightarrow \big(V(H)\big)^k$ and Spoiler
       chooses some $\bar v \in \big(V(G)\big)^k$.
       The~initial position of~the~game is then set to $(\bar v,f(\bar v))$.
 \item Each round consists of~the~following steps.
  Suppose the~current position of~the~game is
  $(\bar v,\bar w) = ((v_1,\ldots,v_k),(w_1,\ldots,w_k))$.
  \begin{itemize}
   \item[(S)] Spoiler chooses some $i \in [k]$.
   \item[(D)] Duplicator picks a~bijection $f\colon V(G) \rightarrow V(H)$.
   \item[(S)] Spoiler chooses $v \in V(G)$ and sets $w \coloneqq f(v)$.
      Then the~game moves to position $\big(\bar v[i/v], \bar w[i/w]\big)$
      where $\bar v[i/v] \coloneqq (v_1,\dots,v_{i-1},v,v_{i+1},\dots,v_k)$
      is the~tuple obtained from $\bar v$ by replacing the~$i$-th entry by $v$.
  \end{itemize}

  If mapping each $v_i$ to $w_i$ does not define an isomorphism of~the~induced
  subgraphs of~$G$ and $H$, Spoiler wins the~play.
  More precisely, Spoiler wins if there are~$i,j\in [k]$ such
  that~$v_i = v_j \nLeftrightarrow w_i =w_j$
  or~$v_iv_j \in E(G) \nLeftrightarrow w_iw_j \in E(H)$.
  If vertices are coloured, then $v_i$ and $w_i$ also need to receive the~same
  colour for every $i \in [k]$.
  If there is no position of~the~play such that Spoiler wins, then Duplicator
  wins.
\end{itemize}

We say that Spoiler (and Duplicator, respectively) \emph{wins $\BP_k(G,H)$} if
Spoiler (and Duplicator, respectively) has a~winning strategy for the~game.
Also, for a~position $(\bar v,\bar w)$ with $\bar v \in \big(V(G)\big)^k$ and
$\bar w \in \big(V(H)\big)^k$, we say that Spoiler (and Duplicator,
respectively) \emph{wins $\BP_k(G,H)$ from position $(\bar v,\bar w)$}
if Spoiler (and Duplicator, respectively) has a~winning strategy for the~game
started at position $(\bar v,\bar w)$.

\begin{theorem}[\cite{cai_optimal_1992,Hella96}]
 \label{thm:eq-wl-pebble}
 Suppose $k \geq 2$ and let $G$ and $H$ be two graphs.
 Then $k$-WL distinguishes $G$ and $H$ if and only if Spoiler wins the~game
 $\BP_{k+1}(G,H)$.
\end{theorem}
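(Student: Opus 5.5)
We would prove the theorem by matching the refinement rounds of $k$-WL with rounds of a pebble game in which $k$ pebble pairs sit on the board between rounds and a $(k+1)$-st pair is placed temporarily during a round. Concretely, for $\bar v\in V(G)^k$, $\bar w\in V(H)^k$ and $i\ge 0$, define the \emph{$i$-round game from $(\bar v,\bar w)$} as follows. In each round Duplicator picks a bijection $f\colon V(G)\to V(H)$ and Spoiler picks $u\in V(G)$. Then the $(k+1)$-tuples $(\bar v,u)$ and $(\bar w,f(u))$ must induce a partial isomorphism, or Spoiler wins. Finally Spoiler chooses $j\in[k]$, and the game continues from $(\bar v[u/j],\bar w[f(u)/j])$.

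This is exactly $\BP_{k+1}(G,H)$ once we normalise which pebble is lifted. In $\BP_{k+1}$, lifting pebble $i$ leaves $k$ pebbles on the board, so Spoiler's choice of $i$ there corresponds to Spoiler's choice of $j$ here. Spoiler's choice of which pebble to lift can be postponed without loss, because Duplicator's bijection is chosen before Spoiler commits to $u$.

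The core claim is that, for all $i$,
\[
\WLit{k}{i}{G}(\bar v)=\WLit{k}{i}{H}(\bar w)
\]
holds if and only if Duplicator survives $i$ rounds from $(\bar v,\bar w)$ and $(\bar v,\bar w)$ is itself a partial isomorphism. We prove this by induction on $i$. The base case is the definition of the initial colouring.

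For the step, equality of the multisets $\mathcal M_{i+1}$ is equivalent to the existence of a bijection $f$ such that for every $u$ and every $j$,
\[
\WLit{k}{i}{G}(\bar v[u/j])=\WLit{k}{i}{H}(\bar w[f(u)/j]).
\]
By the induction hypothesis, such an $f$ is precisely a Duplicator move that survives every Spoiler reply.

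The remaining point is the partial-isomorphism check on the $(k+1)$-tuple. Since $k\ge 2$, every pair of positions in $(\bar v,u)$ lies inside some tuple $\bar v[u/j]$: for a pair $(v_a,u)$ take any $j\neq a$, and for a pair $(v_a,v_b)$ use $\bar v$ itself. The atomic type of $(\bar v,u)$ is therefore determined by the atomic types of the tuples $\bar v[u/j]$, and these are refined by their colours. This is the step where $k\ge2$ is needed and which I expect to require the most care. The converse direction, turning a Duplicator strategy into equality of colours, follows by the same equivalences read backwards.

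Next we pass to the stable colouring. Since $G$ and $H$ are finite, the induction yields: Duplicator wins the unbounded game from $(\bar v,\bar w)$ if and only if $\WL{k}{G}(\bar v)=\WL{k}{H}(\bar w)$. Here we refine both graphs jointly, that is, we run $k$-WL on the disjoint union or use a common colour naming, so the stabilisation index is shared.

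Finally we handle the start of the game. Suppose $k$-WL does not distinguish $G$ and $H$. Then each stable colour class has the same size in both graphs, so there is a colour-preserving bijection between $k$-tuples, and Duplicator can play it initially and then win from every resulting position. Conversely, suppose some colour $c$ has different class sizes. Then every bijection Duplicator might play maps some $\bar v$ of colour $c$ to a tuple of a different colour, and Spoiler picks that $\bar v$ and wins.

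The remaining bookkeeping is that $\BP_{k+1}$ as defined starts with a bijection on $(k+1)$-tuples rather than $k$-tuples. We bridge this by noting that a $(k+1)$-tuple position immediately reduces to a $k$-tuple position after the first lift. Equivalently, colour-class counts for the $k$-tuple colouring agree exactly when they agree for the induced $(k+1)$-tuple types, which are determined by the multisets $\mathcal M$. This initial-position reconciliation and the atomic-type argument above are the two places I expect to need most care.
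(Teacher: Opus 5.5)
The paper does not prove this theorem; it quotes it from Cai--F\"urer--Immerman and Hella. Your round-by-round induction is the standard argument behind those references, and it is correct. It shows that equality of the round-$i$ colours of $(\bar v,\bar w)$ holds iff Duplicator survives $i$ rounds from that position, with $k\ge 2$ used exactly to recover the atomic type of $(\bar v,u)$ from $\bar v$ and the tuples $\bar v[u/j]$.

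Your two adaptations to the paper's formulation are right and genuinely needed:
\begin{itemize}
\item \emph{Joint stabilisation.} With the per-graph index $i_\infty$ in the paper's definition, the statement would fail. Two strongly regular graphs of the same order and degree but different $\lambda$ are both $2$-WL-stable at round $0$ with equal histograms, yet Spoiler wins $\BP_3$ by counting triangles. So one must stabilise the joint partition of $V(G)^k\sqcup V(H)^k$, or compare histograms at every round. A common colour naming alone does not achieve this, since the paper's nested colours already provide one.
\item \emph{Initial bijection.} The initial bijection on $(k+1)$-tuples is correctly handled by your counting through the stable multisets $\mathcal M$, once you use that the game is invariant under relabelling pebbles.
\end{itemize}

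One small correction: your game is not literally $\BP_{k+1}$. In $\BP_{k+1}$, Spoiler may also lift the pebble he has just placed, i.e.\ return to $(\bar v,\bar w)$, and your game lacks this move. The extra option is harmless, because surviving $i+1$ rounds from a position implies surviving $i$ rounds from it. It is this monotonicity, not any postponement argument, that makes the two games equivalent.
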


\subsection{Upper Bounds on the~WL Dimension}
\label{sec:wl-upper}

Recall that, by the~seminal work of~Cai, Fürer and Immerman
\cite{cai_optimal_1992}, WL dimension $\Omega(n)$ is~required for WL to
distinguish between all non-isomorphic pairs of~$n$-vertex graphs $G,H$.
In fact, this even holds true if we restrict to graphs of~maximum degree $3$,
and hence to graphs of~VC~dimension at most $3$.
In the~following, we show that much better upper bounds on the~WL dimension can
be achieved if we are guaranteed that the~edit distance between the~two graphs
$G,H$ is large.

Towards this end, we first require some additional notation.
Let $G$ be a~graph.
For a~tuple $\bar s = (s_1,\dots,s_\ell) \in (V(G))^\ell$ of~pairwise distinct
vertices we write $(G,\bar s)$ to denote the~coloured graph obtained by
individualising vertices in~$\bar s$.
More formally, we colour the~vertices of~$G$ using the~colouring $\lambda$
defined via $\lambda(s_i) \coloneqq i$ for every $i \in [\ell]$, and
$\lambda(v) = 0$ for all $v \in V(G) \setminus \{s_1,\dots,s_\ell\}$.
Usually, the~ordering of~the~elements in~$\bar s$ is not relevant for our
purposes.
Hence, slightly abusing notation, for $S \subseteq V(G)$, we also write $(G,S)$
for the~graph $(G,\bar s)$ where $\bar s$ is any tuple containing each element
of $S$ exactly once.
We also write $\gamma^{G,S} \coloneqq \WL{1}{(G,S)}$ we denote the~colouring
computed by $1$-WL on the~graph $(G,S)$.
If the~graph $G$ is clear from context, we usually omit it and simply write
$\gamma^S$.

The next definition introduces the~key notion for our upper-bound results.
For $v,w \in V(G)$, we define $M_G(v,w) \coloneqq N_G(v) \symdiff N_G(w)$ to be
the \emph{mixed neighbourhood} of~$v,w$.

\begin{definition}
 Let $G$ be an $n$-vertex graph and $\epsilon > 0$.
 A~set $S \subseteq V(G)$ is \textit{$\epsilon$-homogenising} if
 \begin{equation}
  \label{eq:epsilon_homogenising}
  \gamma^{S}(v) = \gamma^{S}(w)\implies |M_G(v,w)| \le \epsilon n.
 \end{equation}
 for all $v,w \in V(G)$.
\end{definition}

The following lemma is the~key tool to obtain our upper bounds.

\begin{lemma}
  \label[lemma]{if_homogeneous_then_small_edit}
  Let $G,H$ be $n$-vertex graphs, $\epsilon > 0$, and $S \subseteq V(G)$
  a~\mbox{$\left(\epsilon/2\right)$-homogenising} set in~$G$.
  If $(|S|+1)$-WL does not distinguish $G$ and $H$, then
  $\ged(G,H) \leq \epsilon n^2$.
\end{lemma}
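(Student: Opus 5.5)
The plan is to use the pebble game characterisation to transfer the individualisation of $S$ from $G$ to $H$, and then build a bijection that respects 1-WL colours of the individualised graphs.

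Write $S=\{s_1,\dots,s_\ell\}$. Since $(\ell+1)$-WL does not distinguish $G$ and $H$, Duplicator wins $\BP_{\ell+2}(G,H)$ by \cref{thm:eq-wl-pebble}. I would first argue that there is a tuple $\bar t=(t_1,\dots,t_\ell)$ in $H$ such that $1$-WL does not distinguish $(G,\bar s)$ and $(H,\bar t)$. Spoiler places pebbles one by one on $s_1,\dots,s_\ell$, and we let $t_i$ be the responses dictated by Duplicator's winning strategy. Duplicator still wins the game with the remaining two pebbles from position $(\bar s,\bar t)$; these pebbles are the $\ell$ fixed ones plus two free ones. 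The bijective $2$-pebble game on the coloured graphs $(G,\bar s),(H,\bar t)$ characterises $1$-WL (the standard Hella correspondence, $k=1$ case). So the colourings $\gamma^{G,\bar s}$ and $\gamma^{H,\bar t}$ are equivalent: they have the same colour class sizes, and the number of edges from a vertex of colour $c$ into colour class $c'$ is a constant $d(c,c')$, the same in both graphs.

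Next, take any bijection $\pi\colon V(G)\to V(H)$ mapping each colour class $C$ of $\gamma^{G,\bar s}$ to the corresponding class $C'$ in $H$. I then bound $\cost(\pi)$ class pair by class pair. Fix classes $C,D$ in $G$. By $(\epsilon/2)$-homogenisation, any two $v,w\in C$ satisfy $|N_G(v)\symdiff N_G(w)|\le \epsilon n/2$. Hence the rows of the bipartite adjacency between $C$ and $D$ are all within Hamming distance $\epsilon n/2$ of one another, and in particular of a fixed row $N_G(v_0)\cap D$.

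The key comparison is the following. For $v\in C$, $w\in D$ with $\pi(v),\pi(w)$, an edge mismatch occurs when $G$ and $H$ disagree. I would compare both to a "template". In $G$, every $v\in C$ has $N(v)\cap D$ close to a set $A\subseteq D$ of size $d(C,D)$. In $H$, the rows all have the same size $d(C,D)$, but they are not a priori homogeneous, so homogeneity is only known in $G$. This is the main obstacle.

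To handle it, I would instead choose $\pi$ cleverly rather than arbitrarily, and bound the expected cost of a uniformly random colour-preserving bijection. Within $C\times D$, $G$ is close to a rectangle-like structure: each row agrees with $A$ up to $\epsilon n/2$, and by symmetry each column $w\in D$ is, up to error, either in almost all rows or in few. Counting with a random $\pi$: the expected number of mismatches between row $v$ and the image row $\pi(v)$ is at most $|N_G(v)\cap D \symdiff \pi^{-1}(N_H(\pi v))\cap D|$, averaged over a random permutation of $D$. Since $H$-rows have exactly $d(C,D)$ elements but random position, this average is not small in general. So a random $\pi$ fails, and I need to use the pebble game more strongly.

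The fix I would try is to let $\pi$ be Duplicator's bijection played in response to Spoiler picking a further pebble. Choose $\pi$ as the bijection Duplicator offers at position $(\bar s,\bar t)$ when Spoiler moves a free pebble. For every $v$, the pair $(v,\pi(v))$ is again a winning position for Duplicator, so the 1-WL colours of $(G,\bar s v)$ and $(H,\bar t\pi(v))$ agree. Consequently, for every class $D$, the class $D\cap N_G(v)$ refined by adjacency to $v$ has the same size as its counterpart in $H$. Then, for the mismatch count at $v$ against $\pi(w)$, I would use a second bijection $\sigma_v$ (Duplicator's next response) that maps $N_G(v)$ onto $N_H(\pi v)$ within colour classes. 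The cost of $\pi$ is then bounded by comparing $\pi$ with $\sigma_v$ via homogeneity. Vertices $w,\pi^{-1}\sigma_v(w)$ lie in the same $\gamma^S$-class, so their neighbourhoods differ by at most $\epsilon n/2$. Summing $|M_G(w,\pi^{-1}\sigma_v w)|$-type terms over $v$ gives the bound $\epsilon n^2$.
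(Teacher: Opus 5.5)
Your first step (fixing $\bar t$ in $H$ so that colour refinement does not distinguish $(G,\bar s)$ and $(H,\bar t)$, then using a colour-preserving bijection) matches the paper, but the final step has a gap. For a colour-preserving $\pi$ and $v\in C$, the number of mismatches at $v$ is $|N_G(v)\symdiff\pi^{-1}(N_H(\pi(v)))|$, and $\sigma_v$ only rewrites this as $|\{w : v\in M_G(w,\pi^{-1}\sigma_v(w))\}|$. The partner $\pi^{-1}\sigma_v(w)$ of $w$ changes with $v$, so the sum over $v$ does not collapse to $\sum_w|M_G(w,u_w)|$ for a fixed pairing $w\mapsto u_w$. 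Homogeneity bounds each single $|M_G(w,u)|$, but it says nothing about how many $v$ lie in $M_G(w,u)$ for \emph{some} partner $u$. Taking $\pi$ to be Duplicator's move gains nothing: in the $2$-pebble game for colour refinement, every colour-preserving bijection is a winning move. Within a class $D$, the sets $N_G(v)\cap D$ and $\pi^{-1}(N_H(\pi(v)))\cap D$ only share the size $d(C,D)$, so there can be up to $2\min\{d(C,D),|D|-d(C,D)\}$ mismatches there. Your argument never bounds the sum of these quantities. Comparing rows to a fixed row $N_G(v_0)\cap D$ cannot help either, since that template has no counterpart in $H$. Relatedly, your conclusion that a random $\pi$ must fail is wrong.

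The missing idea is to compare every row with the \emph{majority} pattern ($\emptyset$ or all of $D$), which is determined by the degree parameters and therefore transfers to $H$. In the paper, $R_G(v)$ collects the vertices of classes $D$ with $0<d(C,D)<|D|$ whose adjacency to $v$ is the minority one. Hence $|R_G(v)|=\sum_D\min\{d(C,D),|D|-d(C,D)\}=|R_H(\pi(v))|$, and mismatches at $v$ can only occur in $R_G(v)\cup\pi^{-1}(R_H(\pi(v)))$. Proving $|R_G(v)|\le\epsilon n/2$ needs biregularity, not just row similarity. Identical rows each containing the same half of a large class $D$ are pairwise similar; they are excluded only because every $x\in D$ has the same number of neighbours in $C$. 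That biregularity puts each $x$ in $R_G(w)$ for at most half of the $w\in C$. Averaging then yields $w\in C$ with $|R_G(w)\cap R_G(v)|\le|R_G(w)\setminus R_G(v)|$, whence $|R_G(v)|\le|R_G(v)\symdiff R_G(w)|=|M_G(v,w)|\le\epsilon n/2$. With this bound, \emph{every} colour-preserving bijection has at most $\epsilon n$ mismatches per vertex, and no further pebble moves are needed.
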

\begin{proof}
  Let $C_1,\dots,C_\ell$ denote the colour classes of $\gamma^{G,S}$.
  Since $(|S|+1)$-WL does not distinguish $G$ and $H$, there is a set $S' \subseteq V(H)$ such that colour refinement does not distinguish $(G, S)$ and $(H, S')$.
  Let $C'_1,\ldots,C'_{\ell'}\subseteq V(H)$ denote the corresponding~colour classes of~$\gamma^{H,S'}$.
  Note that $\ell' = \ell$ and $|C_i| = |C'_i|$ for all $i \in [\ell]$.
  Furthermore, by the properties of colour refinement,
  there are integers $d_{ij}$ such that every vertex in $C_i$
  has exactly $d_{ij}$ neighbours in $C_j$ for any $i,j\in [\ell]$.
  Let $\pi\colon V(G) \to V(H)$ be an arbitrary colour preserving bijection,
  that is, $\pi(C_i) = C_i'$ for all $i \in [\ell]$.
  We will show that for every $v \in V(G)$,
  \[ |\{w \in V(G) \mid vw\in E(G) \iff \pi(v)\pi(w) \not\in E(H) \}|
      \leq \epsilon n,\]
  which immediately implies $\ged(G,H) \leq \epsilon n^2$.

  Therefore, for the rest of the proof, we fix an arbitrary vertex $v \in V(G)$
  and let $j \in [\ell]$ be the unique index such that $v \in C_j$.
  We define
  \begin{align*}
    I &\coloneqq \{i \in [\ell] \mid
                    0 < d_{ji} < |C_i|\}.\\
    I_+ &\coloneqq \left\{i \in I \Bigmid
                    d_{ji} \leq \frac12 |C_i|\right\}\\
    I_- &\coloneqq I \setminus I_+.
  \end{align*}
  For every $w \in C_j$ we define
  \[R_G(w) \coloneqq 
      \bigcup_{i \in I_+} \left(C_i \cap N_G(w)\right) 
        \quad\cup\quad
      \bigcup_{i \in I_-} \left(C_i \setminus N_G(w)\right).\]
  Intuitively, the set $R_G(w)$ consists of those vertices $x \in C_i$ (for $i \in I$)
  whose adjacency to $w$ disagrees with the majority adjacency between
  $w$ and the colour class $C_i$.
  Observe that for all $w \in C_j$, we have
  \begin{align}
    |R_G(w)| &= |R_G(v)|,\label{eq:equality-of-R}\\
    M_G(v,w) &= N_G(v) \symdiff N_G(w) = R_G(v) \symdiff R_G(w)
        \label{eq:M-in-terms-of-R}
  \end{align}

  \begin{claim}
    \label{claim:1}
    Let $i\in I$ and let $x \in C_i$. Then
    \[|\{ w\in C_j \mid x \in R_G(w)\}| \leq \frac12 |C_j|.\]
  \end{claim}
  \begin{claimproof}
      Fix $x \in C_i$.
      If $i \in I_+$, then $d_{ji} \leq \frac12 |C_i|$ and hence, by definition
      of $R_G$, we have
      \[|\{ w \in C_j \mid x \in R_G(w)\}| = d_{ij}
            = d_{ji} \frac{|C_j|}{|C_i|} \leq \frac12 |C_j|.\]
      Similarly, if $i \in I_-$, then
      \[|\{ w \in C_j \mid x \in R_G(w)\}| = |C_j| - d_{ij}
            = |C_j| - d_{ji} \frac{|C_j|}{|C_i|} < |C_j| - \frac12 |C_j|
            = \frac12 |C_j|.\]
  \end{claimproof}

  \begin{claim}
    \label{claim:existence-w}
    There is a $w \in C_j$ with
    $|R_G(w) \cap R_G(v)| \leq |R_G(w) \setminus R_G(v)|$.
  \end{claim}
  \begin{claimproof}
    Suppose for contradiction, that for all $w \in C_j$, we have
    \[ |R_G(w) \cap R_G(v)| > |R_G(w) \setminus R_G(v)|.\]
    In particular, we have
    \[ \frac12 \sum_{w \in C_j} |R_G(w)| < \sum_{w\in C_j} |R_G(w) \cap R_G(v)|.\]
    Furthermore, by \cref{claim:1},
    \[
      \sum_{w\in C_j} |R_G(w) \cap R_G(v)|
          = \sum_{x\in R_G(v)} |\{ w\in C_j \mid x \in R_G(w) \}|
          \leq \frac12 |C_j| |R_G(v)|.\\
    \]
    Together, we have 
    \[ \sum_{w\in C_j} |R_G(w)|
          < 2\sum_{w\in C_j} |R_G(w) \cap R_G(v)|
          \leq |C_j| |R_G(v)|,\]
    which is a contradiction by \eqref{eq:equality-of-R}.
  \end{claimproof}

  \begin{claim}
    $|R_G(v)| \leq \frac{1}{2}\epsilon n$.
  \end{claim}
  \begin{claimproof}
    Let $w\in C_j$ be as given by \cref{claim:existence-w}. Then
    \begin{align*}
      |R_G(v)|
        &\leq |R_G(v)| + |R_G(w) \setminus R_G(v)| - |R_G(w) \cap R_G(v)|\\
        &=   |R_G(v) \triangle R_G(w)| \\
        &= |M_G(v,w)|  &\eqref{eq:M-in-terms-of-R}\\
        &\leq \frac12 \epsilon n,
    \end{align*}
    as $\gamma^{G,S}(v) = \gamma^{G,S}(w)$ and $S$
    is $(\epsilon/2)$-homogenising.
  \end{claimproof}

  We now want to bound the quantity
  $|\{w \in V(G) \mid vw\in E(G) \iff \pi(v)\pi(w)\notin E(H)\}|$.
  Observe that outside the sets $R_G(v)$ and $R_H(\pi(v))$, 
  the adjacency between $v$ and $w$ is determined by the majority behaviour 
  within the corresponding colour classes, and hence the adjacency must agree.
  Any disagreement can hence only occur if either
  $w \in R_G(v)$ or $\pi(w) \in R_H(\pi(v))$. Furthermore, since colour
  refinement does not distinguish $(G,S)$ and $(H,S')$, we have
  $|R_H(\pi(v))| = |R_G(v)| \leq \tfrac{\epsilon}{2} n$, and hence
  \[
  |\{w \in V(G) \mid vw\in E(G) \iff \pi(v)\pi(w)\notin E(H)\}|
  \leq |R_G(v)| + |R_H(\pi(v))| \leq \epsilon n.
  \]
\end{proof}

In order to construct an $\epsilon$-homogenising set for a~graph $G$, we
consider the~set system \mbox{$\mathcal{M}_G \coloneqq \{M_G(v,w) \mid
v,w\in V(G)\}$} of~mixed neighbourhoods.
Now, the~key observation is that every \mbox{$\epsilon$-net} $S$ for
$\mathcal M_G$ is \mbox{$\epsilon$-homogenising.}
Indeed, if $\gamma^S(v) = \gamma^S(w)$, then, in~particular,
\mbox{$S \cap M_G(v,w) = \emptyset$}, which by the~definition
of~an~\mbox{$\epsilon$-net} implies $|M_G(v,w)| \leq \epsilon n$.
Hence, with \cref{lem:epsilon-net} in~mind, it only remains to bound the
VC~dimension of~$\mathcal M_G$. We can get the following bound as a direct
consequence of \cref{lem:general-10d-lemma}.

\begin{lemma}
 \label[lemma]{lem:vc-dim-mixed}
 Let $G$ be a~graph of~VC~dimension $d$.
 Then $\mathcal{M}_G$ has VC~dimension at most $10d$.
\end{lemma}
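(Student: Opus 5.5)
The plan is to realise $\mathcal M_G$ as (a restriction of) the threshold set system $\mathcal H^0_c$ of the QAP obtained from the pair $(G,G)$ via \eqref{eq:1}, and then quote \cref{lem:general-10d-lemma} (or, more precisely, its proof). Take $H \coloneqq G$ and let $c$ be the coefficient function from \eqref{eq:1}. By definition,
\[ M^0_c(v,w) = \{(x,y) \in V(G)\times V(G) \mid (vx \in E(G)) \not\Leftrightarrow (wy\in E(G))\}. \]
Restricting to the diagonal $\Delta \coloneqq \{(x,x)\mid x\in V(G)\}$, i.e.\ taking $\phi = \mathrm{id}$ in \eqref{eqn:def-of-htcphi}, gives
\[ M^0_{c,\mathrm{id}}(v,w) = \{(x,x) \mid x \in N_G(v)\symdiff N_G(w)\}. \]
So under the bijection $x\mapsto (x,x)$ the set system $\mathcal H^0_{c,\mathrm{id}}$ is exactly $\mathcal M_G$. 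Hence $\VC(\mathcal M_G)=\VC(\mathcal H^0_{c,\mathrm{id}})\le \VC(\mathcal H^0_c)$ by \eqref{eqn:htcphi_le_htc}.

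It remains to bound $\VC(\mathcal H^0_c)\le 10d$. \Cref{lem:general-10d-lemma} is stated for the weighted cost \eqref{eq:graph-qap-cost-function}, but for unweighted graphs with all edge weights $1$ that cost is $|\weight_G(vx)-\weight_G(wy)|$. This equals $1$ exactly in the mismatch case and $0$ otherwise, so it coincides with \eqref{eq:1}, and the threshold-$0$ system is the one above. The weighted VC dimension of $G$ with constant weight $1$ equals its unweighted VC dimension $d$, as remarked in \cref{sec:vc_dimension}. Thus \cref{lem:general-10d-lemma} (as also used in \cref{lem:Htc-lower-bound}) yields $\VC(\mathcal H^0_c)\le 10d$, assuming $d\ge 1$.

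If one prefers a self-contained argument, we can redo the counting directly. Let $X\subseteq V(G)$ with $|X|=s$ be shattered by $\mathcal M_G$. The trace $M_G(v,w)\cap X=(N_G(v)\cap X)\symdiff(N_G(w)\cap X)$ is determined by the two traces $N_G(v)\cap X$ and $N_G(w)\cap X$. Applying \cref{Sauer-Shelah} to each trace gives $2^s\le (es/d)^{2d}$, which forces $s<10d$ exactly as in the proof of \cref{lem:general-10d-lemma}.

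The main obstacle is handling degenerate cases. For $d=0$, all neighbourhoods are equal on any trace, so $\mathcal M_G=\{\emptyset\}$ and its VC dimension is $0$; this case should be treated separately, since Sauer--Shelah is stated for $d\ge1$. We also need to check that \cref{Sauer-Shelah} applies, i.e.\ $s\ge d$, which we may assume because otherwise $s<10d$ holds trivially.
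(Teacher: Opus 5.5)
Your proposal is correct and takes essentially the paper's route: the paper obtains this lemma directly from \cref{lem:general-10d-lemma}, via $H=G$ with the trace of $\mathcal H^0_c$ on the diagonal ($\phi=\mathrm{id}$) being exactly $\mathcal M_G$, and your self-contained Sauer--Shelah count ($2^s\le (es/d)^{2d}$, hence $s<10d$) is the same argument as in that lemma's proof. Working at threshold $0$ with $\{0,1\}$ weights is also the right specialisation, since that is the case in which the trace $M_G(v,w)\cap X$ is genuinely determined by the two neighbourhood traces.
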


\begin{theorem}
  \label{thm:wl-solves-bounded-vc-robust-iso}
  The~$O\left(\frac{d}{\epsilon} \log \frac{1}{\epsilon}\right)$-WL
  distinguishes all pairs of~non-isomorphic $n$-vertex graphs $G,H$ such that
  $G$ has VC~dimension at most $d$ and $\ged(G,H) \geq \epsilon n^2$.
\end{theorem}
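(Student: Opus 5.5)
The plan is to combine the $\epsilon$-net observation with \cref{if_homogeneous_then_small_edit}, arguing by contraposition. Let $G$ have VC~dimension at most $d$ and let $H$ satisfy $\ged(G,H) \geq \epsilon n^2$. We want an $(\epsilon'/2)$-homogenising set $S \subseteq V(G)$, for some $\epsilon' < \epsilon$, of size $O\left(\frac{d}{\epsilon}\log\frac{1}{\epsilon}\right)$. If $(|S|+1)$-WL failed to distinguish $G$ and $H$, \cref{if_homogeneous_then_small_edit} would give $\ged(G,H) \leq \epsilon' n^2 < \epsilon n^2$, which is a contradiction. Since the statement demands the strict inequality $\ged \geq \epsilon n^2$ against the lemma's $\leq$, I will take $\epsilon' = \epsilon/2$ to have slack; constants are absorbed in the $O(\cdot)$.

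\textbf{Constructing $S$.} Consider the set system $\mathcal M_G = \{M_G(v,w)\mid v,w\in V(G)\}$. By \cref{lem:vc-dim-mixed} it has VC~dimension at most $10d$. Applying \cref{lem:epsilon-net} with parameter $\epsilon/4$ yields an $(\epsilon/4)$-net $S$ for $\mathcal M_G$ of size $O\left(\frac{10d}{\epsilon/4}\log\frac{4}{\epsilon}\right) = O\left(\frac{d}{\epsilon}\log\frac1\epsilon\right)$.

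\textbf{$S$ is homogenising.} Suppose $\gamma^S(v)=\gamma^S(w)$. Since the vertices of $S$ are individualised with distinct colours, $v$ and $w$ have the same adjacency to every $s\in S$, so $S\cap M_G(v,w)=\emptyset$. For this step, note that if $v\in S$ then $v$ alone has its colour, so $v=w$ and $M_G(v,w)=\emptyset$. The stable 1-WL colour determines the multiset of neighbour colours, and therefore which singleton classes $\{s\}$ are adjacent. Hence the net property forces $|M_G(v,w)| \leq \frac{\epsilon}{4} n$, so $S$ is $(\epsilon/4)$-homogenising.

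\textbf{Conclusion.} Applying \cref{if_homogeneous_then_small_edit} with $\epsilon/2$ in place of $\epsilon$, if $(|S|+1)$-WL does not distinguish $G,H$ then $\ged(G,H)\le \frac{\epsilon}{2}n^2<\epsilon n^2$, a contradiction. Hence $k$-WL with $k=|S|+1=O\left(\frac d\epsilon\log\frac1\epsilon\right)$ distinguishes $G$ and $H$. One remaining check is that larger WL dimension is at least as strong, so we may round up to the claimed dimension. This is standard: $k$-WL distinguishing implies $k'$-WL distinguishing for $k'\ge k$, for instance via the pebble game of \cref{thm:eq-wl-pebble}. No step is a real obstacle here.
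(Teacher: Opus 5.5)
Your proposal is correct and follows essentially the same route as the paper. You take a small-fraction $\epsilon$-net for $\mathcal M_G$, which has VC dimension at most $10d$. You note that it is homogenising, and then apply \cref{if_homogeneous_then_small_edit} with constant-factor slack to contradict $\ged(G,H)\geq\epsilon n^2$; the paper uses $\epsilon/3$ where you use $\epsilon/4$.

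Your explicit checks are correct. These are why individualised net vertices force $S\cap M_G(v,w)=\emptyset$, and monotonicity of WL in the dimension so that a single uniform $k$ works. The paper leaves both implicit.
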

\begin{proof}
 Let $\epsilon' \coloneqq \frac{1}{3}\epsilon$.
 Let $G,H$ be non-isomorphic $n$-vertex graphs such that $G$ has VC~dimension
 at most $d$ and $\ged(G,H) \geq \epsilon n^2$.
 By \cref{lem:epsilon-net,lem:vc-dim-mixed}, there is an $\epsilon'$-net $S$
 for the~set system $\mathcal{M}_G$ of~size
 $O(\frac{d}{\epsilon'}\log \frac{1}{\epsilon'})
      = O(\frac{d}{\epsilon}\log \frac{1}{\epsilon})$.
 Since $S$ is $\epsilon'$-homogenising, we conclude by
 \cref{if_homogeneous_then_small_edit} that $(|S|+1)$-WL distinguishes $G,H$.
\end{proof}

Replacing \cref{lem:epsilon-net} by \cref{lem:epsilon-net-general} in~the~last
proof, we obtain a~slightly improved bound for general $n$-vertex graphs.

\begin{theorem}
  \label{thm:log-wl-solves-robust-iso}
  The~$O\left(\frac{1}{\epsilon}\log n\right)$-WL algorithm distinguishes
  all pairs of~non-isomorphic $n$-vertex graphs $G,H$ such that
  $\ged(G,H) \geq \epsilon n^2$.
\end{theorem}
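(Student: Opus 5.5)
The plan is to repeat the proof of \cref{thm:wl-solves-bounded-vc-robust-iso} verbatim, replacing the VC-based $\epsilon$-net bound of \cref{lem:epsilon-net} by the cardinality-based bound of \cref{lem:epsilon-net-general}. Set $\epsilon' \coloneqq \frac{1}{3}\epsilon$; any constant strictly below $\frac12$ in front of $\epsilon$ would do. Let $G,H$ be non-isomorphic $n$-vertex graphs with $\ged(G,H)\ge \epsilon n^2$.

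The first step is to bound the size of the mixed-neighbourhood set system. By definition, $\mathcal M_G=\{M_G(v,w)\mid v,w\in V(G)\}$ has at most $n^2$ members, so $\ln|\mathcal M_G|\le 2\ln n$. Applying \cref{lem:epsilon-net-general} gives an $\epsilon'$-net $S$ for $\mathcal M_G$ with $|S|\le 2\ln n/\epsilon' = O(\epsilon^{-1}\log n)$. No VC dimension bound is needed at this step.

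The second step is to check that $S$ is $\epsilon'$-homogenising, using the observation made just before \cref{lem:vc-dim-mixed}. Suppose $\gamma^S(v)=\gamma^S(w)$. Then $v$ and $w$ have identical adjacency to every individualised vertex of $S$, since colour refinement on $(G,S)$ already separates vertices by their adjacency to each singleton colour class. Hence $M_G(v,w)\cap S=\emptyset$. The net property of $S$ then forces $|M_G(v,w)|\le\epsilon' n$.

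The final step applies \cref{if_homogeneous_then_small_edit}. Since $S$ is $\epsilon'$-homogenising, it is in particular $(\epsilon/2)$-homogenising, because $\epsilon'\le\epsilon/2$. So if $(|S|+1)$-WL failed to distinguish $G$ and $H$, we would get $\ged(G,H)\le\epsilon n^2$. This is not yet a contradiction with $\ged(G,H)\ge\epsilon n^2$ at the boundary, which is why we use the slack. Applying the lemma with parameter $2\epsilon'=\frac23\epsilon$ instead gives $\ged(G,H)\le\frac23\epsilon n^2<\epsilon n^2$, a genuine contradiction. Therefore $(|S|+1)$-WL, whose dimension is $O(\epsilon^{-1}\log n)$, distinguishes $G$ and $H$.

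Every step is routine, and the only point requiring care is this boundary issue in the final comparison. The strict slack $\epsilon'<\epsilon/2$ resolves it, just as in the proof of \cref{thm:wl-solves-bounded-vc-robust-iso}.
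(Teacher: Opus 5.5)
Your proposal is correct and follows the paper's own argument: you rerun the proof of \cref{thm:wl-solves-bounded-vc-robust-iso} with $\epsilon'=\epsilon/3$, using \cref{lem:epsilon-net-general} and $|\mathcal M_G|\le n^2$ to obtain an $\epsilon'$-net of size $O(\epsilon^{-1}\log n)$, which is $\epsilon'$-homogenising, and then you invoke \cref{if_homogeneous_then_small_edit}. Applying that lemma with parameter $2\epsilon'$ to get $\ged(G,H)\le\frac23\epsilon n^2<\epsilon n^2$ just makes explicit the slack the paper leaves implicit.
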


Finally, for coloured graphs of~colour class size at most $s$, we can also find
$\epsilon$-homogenising~sets of~small size via a~different argument.
In this case, we proceed inductively: if a~set $S$ is not
$\epsilon$-homogenising because $|M_G(v,w)| > \epsilon n$ for some $v,w$,
individualising $S$ and $v$ splits a~linear number of~colour classes. Since
such a~split may happen only constant number of~times, we get the~existence
of a~small $\epsilon$-homogenising set.

\begin{lemma}
 \label[lemma]{lem:homogeneity-for-bounded-colour_class}
 Let $(G,\lambda_G)$ be a~coloured graph such that all colour classes have size
 at most $s$. Then there is an $\epsilon$-homogenising set $S \subseteq V(G)$
 of~size $|S| \leq \frac{s-1}{\epsilon}$.
\end{lemma}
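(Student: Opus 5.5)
We build $S$ greedily and use a potential function: the number of colour classes of the stable colouring $\gamma^S$ (computed by $1$-WL on $(G,\lambda_G)$ with $S$ individualised). Start with $S_0 \coloneqq \emptyset$. If $S_i$ is $\epsilon$-homogenising we stop. Otherwise there are $v,w$ with $\gamma^{S_i}(v)=\gamma^{S_i}(w)$ and $|M_G(v,w)|>\epsilon n$, and we set $S_{i+1}\coloneqq S_i\cup\{v\}$.

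The key step is that adding $v$ increases the number of colour classes by at least $\epsilon n$. Let $P \coloneqq \gamma^{S_i}$ and $Q \coloneqq \gamma^{S_{i+1}}$. Then $Q$ refines $P$, and $P$ refines the input colouring $\lambda_G$, so every class of $P$ has size at most $s$.

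Since $v$ is individualised in $Q$, the colour $Q(x)$ determines whether $x$ is adjacent to $v$. Take $x\in M_G(v,w)$. If $v\ne w$, then $v$ and $w$ differ in their adjacency to $x$. Now $x\ne v,w$, because $x\notin N(x)$, so $x\in M(v,w)$ implies $x$ is adjacent to exactly one of $v,w$. Note $v\ne w$ automatically, since $M(v,v)=\emptyset$.

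We would like a $P$-class $C\ni x$ that is split by $Q$. This needs a $P$-class containing both a neighbour and a non-neighbour of $v$. Because $P$ is stable and $P(v)=P(w)$, the vertices $v$ and $w$ have the same number $d_C$ of neighbours in each $P$-class $C$. So for each $C$ meeting $M(v,w)$, we have $N(v)\cap C\ne N(w)\cap C$ with $|N(v)\cap C|=|N(w)\cap C|$. Hence $0<d_C<|C|$, and $v$ has both neighbours and non-neighbours in $C$. Thus $C$ is split into at least two $Q$-classes.

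Each class has size at most $s$, so at least $|M(v,w)|/s>\epsilon n/s$ classes get split. Each split class gains at least one class. Hence the number of classes increases by more than $\epsilon n/s$.

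For the counting, note the number of classes starts at $\ge n/s$ (classes have size at most $s$) and never exceeds $n$. So the number of steps is less than $(n-n/s)/(\epsilon n/s)=(s-1)/\epsilon$. This gives $|S|\le (s-1)/\epsilon$, as claimed.

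The main obstacle is the splitting argument: verifying that $d_C$ is well-defined by stability and that individualising $v$ genuinely separates neighbours from non-neighbours within $C$. Both follow directly from the stable-colouring properties, as used in the proof of \cref{if_homogeneous_then_small_edit}.
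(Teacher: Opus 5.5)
Your proof is correct and takes essentially the paper's approach: individualise one endpoint of a bad pair, use stability of colour refinement to show that every $\gamma^{S_i}$-class meeting $M_G(v,w)$ splits, and count colour classes (at least $n/s$ initially, at most $n$, with more than $\epsilon n/s$ new classes per step). One small slip is the aside that $x\neq v,w$, which is false: if $vw\in E(G)$ then $v,w\in M_G(v,w)$. Nothing depends on it, since your $0<d_C<|C|$ argument applies to every class meeting $M_G(v,w)$.
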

\begin{proof}
  Let $S_0 = \emptyset$.
  Consider some $i \geq 0$ for which $S_i$ is defined, and suppose that $S_i$ is not \mbox{$\epsilon$-homogenising}.
  Then there are $v,w$ such that $\gamma^{S_i}(v) = \gamma^{S_i}(w)$ and
  $|M_G(v,w)| > \epsilon n$.
  We set $S_{i+1} \coloneqq S_i \cup \{w\}$.

  Let $\ell$ denote the index at which the iterative process is stopped, i.e., the set $S_\ell$ is $\epsilon$-homogenising.
  We argue that $\ell \leq \frac{s-1}{\epsilon}$.
  Since only one element is added in each iteration, this implies the desired bound on the size of $S$.

  \begin{claim}
    \label[claim]{claim:individualising_large_degree_splits_colour_classes}
    Let $0 \leq i < \ell$, and let $v,w$ be a~pair such that $\gamma^{S_i}(v) = \gamma^{S_i}(w)$.
    Let $C$ be a~colour class of~$\gamma^{S_i}$ such that
    $M_G(v,w) \cap C \neq \emptyset$.
    Then there are $x,y \in C$ such that
    $\gamma^{S_{i+1}}(x) \neq \gamma^{S_{i+1}}(y)$.
  \end{claim}
  \begin{claimproof}
    Let $v,w$ be such a~pair and $C$ such a~colour class.
    Let $x\in M_G(v,w)\cap C$.
    Without loss of~generality, suppose that
    $x \in N_G(v)\setminus N_G(w)$. By the~properties of~colour
    refinement, there is a~$y\in N_G(w)\cap C$ and hence $y \neq x$.
    But this means that by individualising $w$, the~vertices $x$ and $y$
    get different colours, that is,
    $\gamma^{S_{i+1}}(x) \neq \gamma^{S_{i+1}}(y)$.
  \end{claimproof}

  \begin{claim}
    For every $i \in [\ell]$, there are at least $\frac{1+i\epsilon}{s}n$ colour classes in~$\gamma^{S_i}$.
  \end{claim}
  \begin{claimproof}
    We prove the statement by induction on $i$. For $i = 0$, since each colour class has size
    at most $s$, their number is at least $n/s$.

    Suppose that the~statement holds for $i < \ell$.
    Consider the~set $S_i$ and let $v,w$ be a~pair
    such that $\gamma^{S_i}(v) = \gamma^{S_i}(w)$, $|M_G(v,w)| > \epsilon n$
    and $w \in S_{i+1}\setminus S_i$.

    Clearly, there are at least $\frac1s |M_G(v,w)|$ colour classes $C$
    (of the~colouring $\gamma^{S_i}$) such that $C \cap M_G(v,w) \neq \emptyset$.
    Hence, by \cref{claim:individualising_large_degree_splits_colour_classes},
    individualising $w$ splits $\frac1s |M_G(v,w)|$ colour classes.
    This means that $\gamma^{S_{i+1}}$ has at least
    $\frac{1+i\epsilon}{s}n + \frac1s |M_G(v,w)| \geq \frac{1+(i+1)\epsilon}{s}n$ colour classes.
  \end{claimproof}

  Since there can be at most $n$ colour classes in $\gamma^{S_\ell}$, we conclude that $\ell \leq \frac{s-1}{\epsilon}$ as desired.
\end{proof}

Using the~same arguments as before, we obtain the~following.
\begin{theorem}
    \label{thm:wl-solves-bounded-colour-class-size-robust-iso}
  The~$O(\frac{s}{\epsilon})$-WL algorithm distinguishes all
  pairs of~non-isomorphic $n$-vertex graphs $G,H$ of~colour class size at most
  $s$ such that $\ged(G,H) \geq \epsilon n^2$.
\end{theorem}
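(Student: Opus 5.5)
The plan is to reproduce the argument of \cref{thm:wl-solves-bounded-vc-robust-iso}, replacing the $\epsilon$-net with the homogenising set from \cref{lem:homogeneity-for-bounded-colour_class}. Let $G,H$ be non-isomorphic coloured $n$-vertex graphs with colour classes of size at most $s$ and $\ged(G,H) \geq \epsilon n^2$. Set $\epsilon' \coloneqq \epsilon/3$. By \cref{lem:homogeneity-for-bounded-colour_class} applied to $(G,\lambda_G)$, there is an $\epsilon'$-homogenising set $S \subseteq V(G)$ with $|S| \leq \frac{s-1}{\epsilon'} = O(s/\epsilon)$. Here homogenising is understood with respect to $\gamma^S$, the colour refinement colouring of $(G,S)$ that also takes the initial colouring $\lambda_G$ into account.

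Next I apply \cref{if_homogeneous_then_small_edit} in its coloured version, with the edit distance restricted to colour-preserving bijections. Since $\epsilon' \leq \epsilon/2$, the set $S$ is also $(\epsilon/2)$-homogenising. Suppose $(|S|+1)$-WL does not distinguish $G$ and $H$. Then there is a colour-preserving bijection $\pi$ with $\cost(\pi) \leq \epsilon' \cdot 2 n^2 < \epsilon n^2$. This contradicts $\ged(G,H)\geq\epsilon n^2$, so $(|S|+1)$-WL, which is $O(s/\epsilon)$-WL, distinguishes $G$ and $H$. Because higher-dimensional WL refines lower-dimensional WL, the fixed dimension $O(s/\epsilon)$ suffices.

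The main point to verify is that the proof of \cref{if_homogeneous_then_small_edit} carries over to coloured graphs. Three facts are needed. First, the pebble-game or WL argument gives $S'$ such that colour refinement on $(H,S')$ with initial colouring $\lambda_H$ matches that on $(G,S)$. This holds because $k$-WL on coloured graphs includes vertex colours in the atomic types, so \cref{thm:eq-wl-pebble} still applies. Second, the stable colour classes refine $\lambda_G$, so the arbitrary bijection $\pi$ with $\pi(C_i)=C'_i$ is automatically colour-preserving. Third, the regularity numbers $d_{ij}$ and the majority-set argument bounding $|R_G(v)|$ and $|R_H(\pi(v))|$ use only the equitable partition property and homogeneity, never the absence of colours.

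The same reasoning must hold for \cref{lem:homogeneity-for-bounded-colour_class}. Its claim that some $y \in N_G(w)\cap C$ exists again uses only that $v,w$ have equal colour-refinement colours, which holds in the coloured setting. With these checks done, the theorem follows.
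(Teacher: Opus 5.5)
Your proposal is correct and is exactly the argument the paper intends by ``using the same arguments as before'': you take the $\epsilon/3$-homogenising set of size at most $3(s-1)/\epsilon$ from \cref{lem:homogeneity-for-bounded-colour_class} and apply \cref{if_homogeneous_then_small_edit} with parameter $2\epsilon/3$, which gives $\ged(G,H)\le\frac23\epsilon n^2<\epsilon n^2$, just as in the proof of \cref{thm:wl-solves-bounded-vc-robust-iso}. Your check that both lemmas carry over to coloured graphs is the right one: colour refinement must start from $\lambda_G$, and $\pi(C_i)=C_i'$ then makes $\pi$ colour-preserving; the only nit is that the remark ``$S$ is also $(\epsilon/2)$-homogenising'' by itself would only give $\ged(G,H)\le\epsilon n^2$, so the contradiction comes from the $2\epsilon'$ version you actually use.
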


\subsection{Lower bounds on the~WL dimension}
\label{sec:wl-lower}

We complement our upper bound results for the~WL algorithm with some lower
bounds, which in~particular show that the~dependence on $\epsilon$ on the~above
results cannot be improved by much.
Our lower bounds build on existing constructions from
\cite{cai_optimal_1992,OdonnellWWZ14}.
While the~pairs of~graphs constructed in~\cite{cai_optimal_1992,OdonnellWWZ14}
only have small edit distance, we can use blowups to increase their edit
distance.
Let $G = (V,E,\lambda)$ be a~coloured graph and $\ell \geq 1$.
We define the~\emph{$\ell$-blowup} of~$G$ to be the~graph $G^{(\ell)}$ obtained
by replacing each edge of~$G$ by a~complete bipartite graph $K_{\ell,\ell}$.
More formally, $G^{(\ell)}$ has vertices
$V(G^{(\ell)}) = \{(v, i) \mid v \in V, i \in [\ell]\}$ and edges
$E(G^{(\ell)}) = \{ (v,i)(w,j) \mid vw \in E, i,j \in [\ell]\}$.
We equip $G^{(\ell)}$ with the~colouring $\lambda^{(\ell)}$ defined via
$\lambda^{(\ell)}(v,i) = (\lambda(v),i)$.
The following lemma states that large blowups have quadratic edit distance.

\begin{lemma}[\cite{pikhurko_analytic_2010}]
  \label{lem:edit-distance-blowup}
  Let $G, H$ be coloured graphs and $\ell \geq 1$. Then
  $\ged(G^{(\ell)},H^{(\ell)}) \geq \frac13 \cdot \ell^2 \cdot \ged(G,H)$.
\end{lemma}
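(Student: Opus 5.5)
The colouring $\lambda^{(\ell)}(v,i)=(\lambda(v),i)$ forces any colour-preserving bijection $\pi\colon V(G^{(\ell)})\to V(H^{(\ell)})$ to map layer $i$ onto layer $i$. So $\pi$ decomposes into $\ell$ colour-preserving bijections $\pi_1,\dots,\pi_\ell\colon V(G)\to V(H)$. The plan is to express $\cost(\pi)$ through these layer maps and bound it below by $\frac13\ell^2\ged(G,H)$.

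Since $(v,i)(w,j)$ is an edge iff $vw\in E(G)$, the cost splits into contributions from pairs of layers. For bijections $\sigma,\tau$ let
\[D(\sigma,\tau)\coloneqq\bigl|\{(v,w)\in V(G)^2 \mid vw\in E(G)\nLeftrightarrow \sigma(v)\tau(w)\in E(H)\}\bigr|.\]
Since edges are symmetric, $D(\sigma,\tau)=D(\tau,\sigma)$. Up to the constant $2$ from ordered versus unordered pairs, we have $2\cost(\pi)=\sum_{i,j\in[\ell]}D(\pi_i,\pi_j)$. The diagonal terms satisfy $D(\pi_i,\pi_i)=2\cost(\pi_i)\ge 2\ged(G,H)$, but there are only $\ell$ of them, so the bulk of the $\ell^2$ bound must come from the off-diagonal terms.

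The key step is therefore a lower bound on a single cross term $D(\sigma,\tau)$ with $\sigma\ne\tau$. The natural idea is to construct a single bijection from the pair $(\sigma,\tau)$ whose edit cost is bounded by a constant combination of $D(\sigma,\tau)$, $D(\sigma,\sigma)$ and $D(\tau,\tau)$, and then average over all pairs $(i,j)$. The constant $\frac13$ suggests aiming for an inequality of the form
\[2\ged(G,H)\;\le\; D(\pi_i,\pi_j)+D(\pi_j,\pi_k)+D(\pi_k,\pi_i)\]
over triples of layers, or a variant with two cross terms and one diagonal term. Summing such an inequality over all ordered triples $(i,j,k)\in[\ell]^3$ counts each $D(\pi_i,\pi_j)$ about $3\ell$ times. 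This gives $\ell^3\cdot 2\ged\le 3\ell\sum_{i,j}D(\pi_i,\pi_j)=6\ell\cost(\pi)$, which is exactly the claimed bound $\cost(\pi)\ge\frac13\ell^2\ged(G,H)$.

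I expect the main obstacle to be proving the triangle-type inequality. A naive matrix triangle inequality does not work: the entry $[\sigma(v)\sigma(w)\in E(H)]$ is not controlled by the mixed entries $[\sigma(v)\tau(w)\in E(H)]$ and $[\tau(v)\sigma(w)\in E(H)]$. My first attempt would be a rounding argument. Regard the three layers $i,j,k$ as a fractional matching, and apply the standard fact that for colour-preserving bijections, a uniformly random choice of which layer each vertex uses, corrected to a bijection along the cycles of $\pi_i^{-1}\pi_j$, yields a bijection whose expected number of mismatched pairs is an average of the cross terms. If this fails, I would fall back on the analytic route of Pikhurko, which is presumably how the cited proof goes. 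There one passes to the graphon/cut-norm limit, where blowups of $G$ and $H$ converge to step functions, and uses that the $L^1$ edit distance between step graphons is attained up to a factor $3$ by measure-preserving maps coming from actual vertex bijections.
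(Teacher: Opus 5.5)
Your reduction is sound. The layer colours force $\pi(v,i)=(\pi_i(v),i)$ with colour-preserving $\pi_i$. The identity $2\cost(\pi)=\sum_{i,j}D(\pi_i,\pi_j)$ is correct, with the pairs $(v,i),(v,j)$ appearing as the terms $v=w$ of $D(\pi_i,\pi_j)$. Summing $2\ged(G,H)\le D(\pi_i,\pi_j)+D(\pi_j,\pi_k)+D(\pi_k,\pi_i)$ over $[\ell]^3$ then gives exactly $\frac13\ell^2$.

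\textbf{The gap.} The proposal stops at the one step that carries all the content: this triangle inequality is never proved. The rounding sketch is not an argument. It also cannot rest on a single pair of layers, because $\ged(G,H)$ is not bounded by any multiple of one cross term. For example, take $V(G)=V(H)=L\cup R$ with $|L|=|R|=m$ even, $G$ the two cliques on $L$ and $R$, $H$ complete bipartite between $L$ and $R$, $\sigma=\mathrm{id}$, and $\tau$ any bijection exchanging $L$ and $R$. Then $D(\sigma,\tau)=2m$ (only the pairs $(v,v)$ mismatch), while $\ged(G,H)=m^2-m$. Your fallback to Pikhurko's graphon route just re-invokes the inequality $\delta_1\le 3\hat\delta_1$ that constitutes the cited lemma, so it is circular. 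The paper itself gives no proof here and only cites Pikhurko.

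\textbf{The missing idea.} You dismissed the matrix triangle inequality too early: it works once you re-index a column. Take bijections $\sigma,\tau,\rho\colon V(G)\to V(H)$ and $v,w\in V(G)$. Put $u\coloneqq\tau^{-1}(\sigma(w))$, so $\tau(u)=\sigma(w)$. Let $A,B$ be the adjacency matrices (zero diagonals), so that $D(\sigma,\tau)=\sum_{v,w}|A_{vw}-B_{\sigma(v)\tau(w)}|$. Pass from $A_{vw}$ to $B_{\rho(v)\sigma(w)}=B_{\rho(v)\tau(u)}$, then to $A_{vu}$, then to $B_{\sigma(v)\tau(u)}=B_{\sigma(v)\sigma(w)}$. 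This gives
\[
|A_{vw}-B_{\sigma(v)\sigma(w)}|\le|A_{vw}-B_{\rho(v)\sigma(w)}|+|A_{vu}-B_{\rho(v)\tau(u)}|+|A_{vu}-B_{\sigma(v)\tau(u)}|.
\]
Sum over all $(v,w)$, using that $w\mapsto u$ is a bijection of $V(G)$. This yields $D(\sigma,\sigma)\le D(\rho,\sigma)+D(\rho,\tau)+D(\sigma,\tau)$. Now set $\sigma=\pi_i$, $\tau=\pi_j$, $\rho=\pi_k$, and use $2\ged(G,H)\le 2\cost(\pi_i)=D(\pi_i,\pi_i)$ and the symmetry of $D$. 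This is exactly your inequality, witnessed by $\pi_i$ itself, so no rounding or new bijection is needed. With this lemma inserted, your argument becomes a complete, elementary proof.
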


Also, taking the~blowup of~a graph does not affect the~power of~the~WL algorithm.

\begin{lemma}
  \label{lem:wl-blowup}
  Let $G, H$ be coloured graphs and $\ell \geq 1$, $k \geq 2$.
  Then $k$-WL distinguishes $G$ and $H$ if and only if $k$-WL distinguishes
  $G^{(\ell)}$ and $H^{(\ell)}$.
\end{lemma}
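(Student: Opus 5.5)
The plan is to use the bijective pebble game characterisation (\cref{thm:eq-wl-pebble}) and show that Duplicator wins $\BP_{k+1}(G,H)$ if and only if she wins $\BP_{k+1}(G^{(\ell)},H^{(\ell)})$. Both directions work by transferring strategies between the original and the blown-up game. The key structural observation is that the colouring $\lambda^{(\ell)}(v,i) = (\lambda(v),i)$ records the copy index $i$. Consequently any colour-preserving partial isomorphism between $G^{(\ell)}$ and $H^{(\ell)}$ maps $(v,i)$ to some $(w,i)$ with the same index.

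Adjacency in the blow-up is simple: $(v,i)(v',j)\in E(G^{(\ell)})$ iff $vv'\in E(G)$. The same holds for $H$. So a map $(v_r,i_r)\mapsto(w_r,i_r)$ on pebbled vertices is a partial isomorphism iff three conditions hold:
\begin{itemize}
\item[(a)] $\lambda(v_r)=\lambda(w_r)$ for all $r$;
\item[(b)] $v_rv_s\in E(G)\iff w_rw_s\in E(H)$;
\item[(c)] $(v_r,i_r)=(v_s,i_s)\iff(w_r,i_r)=(w_s,i_s)$.
\end{itemize}
Condition (c) differs from the original equality condition $v_r=v_s\iff w_r=w_s$ only when $i_r\neq i_s$, in which case both sides are false anyway. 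Hence the pebbled configuration is a partial isomorphism in the blow-up iff the projected configuration $(v_r)\mapsto(w_r)$ is a partial isomorphism of $G,H$, given that indices match.

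For the direction ``$k$-WL does not distinguish $G,H$ $\Rightarrow$ it does not distinguish the blow-ups'', take a Duplicator winning strategy in $\BP_{k+1}(G,H)$. Suppose it answers with a bijection $f\colon V(G)\to V(H)$. In the blown-up game Duplicator plays $f^{(\ell)}(v,i)\coloneqq(f(v),i)$, which is a bijection, and does the same for the initial tuple bijection. Whatever Spoiler picks, the projected position is a legal position of the original game reached by following Duplicator's strategy. By the observation above, that position is winning-consistent.

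For the converse, take a Duplicator winning strategy in the blow-up. She may answer Spoiler's move in $G$ as follows. Simulate the blow-up game with each original vertex $v$ identified with $(v,1)$. A bijection $F$ from the blown-up strategy restricts to a bijection between the index-1 layers, provided $F$ preserves indices. We may assume $F$ preserves indices: otherwise Spoiler wins immediately by placing a pebble on a vertex whose image has a different colour. Since we are in a winning strategy, no such $F$ is ever played. So $F$ maps layer $1$ onto layer $1$, giving $f(v)\coloneqq\pi_1(F(v,1))$ as a bijection $V(G)\to V(H)$. The same restriction works for the initial $(k+1)$-tuple bijection. Projections of winning positions are again partial isomorphisms by the observation.

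The main obstacle is this converse step. Duplicator's bijections in the blow-up game are arbitrary, and we must justify that without loss of generality they preserve layers; the colouring $\lambda^{(\ell)}$ is exactly what makes this work. For the initial bijection on $(k+1)$-tuples the same colour argument applies tuple-wise, using colours of all entries. The hypothesis $k\ge2$ is needed only to invoke \cref{thm:eq-wl-pebble}.
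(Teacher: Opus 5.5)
Your proposal is correct and follows the paper's proof. The first direction lifts Duplicator's strategy via $f^{(\ell)}(v,i)=(f(v),i)$ exactly as the paper does. Your converse is the contrapositive of the paper's second direction: the paper transfers Spoiler's winning strategy into the layer $\{(v,1)\}$, while you restrict Duplicator's blow-up strategy to that layer. Both rest on the same point, that the colouring $\lambda^{(\ell)}$ forces any surviving bijection to preserve layers.

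One small imprecision: your observation is not an ``iff'' in general. If $i_r\neq i_s$, $v_r=v_s$, $w_r\neq w_s$ and $w_rw_s\notin E(H)$, the blow-up position is a partial isomorphism but its projection is not. This does not break the argument. You only need the implication from $(G,H)$ to the blow-up in the first direction, and the reverse implication when all indices equal $1$, where it does hold.
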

\begin{proof}
  First suppose that $k$-WL does not distinguish $G$ and $H$, i.e., Duplicator
  has a~winning strategy for $\BP_{k+1}(G,H)$.
  We give a~winning strategy for Duplicator in~$\BP_{k+1}(G^{(\ell)},H^{(\ell)})$.

  Let $f\colon (V(G))^{k+1} \to (V(H))^{k+1}$ be the~bijection chosen by
  Duplicator in~the~initial round of~$\BP_{k+1}(G,H)$.
  We define the~bijection
  $f^{(\ell)}\colon (V(G^{(\ell)}))^{k+1} \to (V(H^{(\ell)}))^{k+1}$ via
  \[f^{(\ell)}((v_1,j_1),\dots,(v_{k+1},j_{k+1}))
            = ((w_1,j_1),\dots,(w_{k+1},j_{k+1}))\]
  where $(w_1,\dots,w_{k+1}) = f(v_1,\dots,v_{k+1})$.
  Duplicator initially plays the~bijection $f^{(\ell)}$ in~the~game
  $\BP_{k+1}(G^{(\ell)},H^{(\ell)})$.
  Throughout the~game, Duplicator maintains the~invariant that only positions
  of~the~form
  \[\Big(\big((v_1,j_1),\dots,(v_{k+1},j_{k+1})\big),
         \big((w_1,j_1),\dots,(w_{k+1},j_{k+1})\big)\Big)\]
  are reached, where $((v_1,\dots,v_{k+1}),(w_1,\dots,w_{k+1}))$ is a~winning
  position in~the~game $\BP_{k+1}(G,H)$.
  Clearly, this is the~case after the~initial round.

  We need to argue that the~invariant can be maintained.
  So suppose the~game is in~such a~position, and Spoiler chooses $i \in [k+1]$
  in~the~game $\BP_{k+1}(G^{(\ell)},H^{(\ell)})$.
  Let $f\colon V(G) \to V(H)$ be Duplicator's response in~the~game
  $\BP_{k+1}(G,H)$ if Spoiler chooses the~same index $i$ in~the~position
  $((v_1,\dots,v_{k+1}),(w_1,\dots,w_{k+1}))$.
  Then, in~the~game $\BP_{k+1}(G^{(\ell)},H^{(\ell)})$, Duplicator responds
  with the~bijection $f^{(\ell)}\colon V(G^{(\ell)}) \to V(H^{(\ell)})$ defined
  via
  \[f^{(\ell)}(v,j) = (f(v),j)\]
  for all $v \in V(G)$ and $j \in [\ell]$.
  It is easy to see that Duplicator maintains the~invariant no matter which
  vertex is chosen by Spoiler to complete the~round.

  Finally, note that Duplicator can never loose if the~invariant is maintained,
  which completes the~first direction.

  Now, suppose that $k$-WL distinguishes $G$ and $H$, i.e., Spoiler has a~winning strategy for $\BP_{k + 1}(G,H)$.
  Let $X_G \coloneqq \{(v,1) \mid v \in V(G)\}$ and $X_H \coloneqq \{(w,1) \mid w \in V(H)\}$.
  Observe that $X_G$ and $X_H$ are unions of corresponding color classes in $G$ and $H$, respectively.
  Moreover, $G^{(\ell)}[X_G] \cong G$ and $H^{(\ell)}[X_H] \cong H$.
  Hence, a winning strategy for Spoiler in $\BP_{k + 1}(G,H)$ immediately translates to a winning strategy in $\BP_{k + 1}(G^{(\ell)},H^{(\ell)})$, by identifying a vertex $v$ in $G$ or $H$ with the vertex $(v,1)$ in $G^{(\ell)}$ or $H^{(\ell)}$, respectively.
  Observe that, due the coloring of vertices, Spoiler immediately wins whenever Duplicator chooses a bijection that does not bijectively map vertices from $W_G$ to $W_H$.
\end{proof}

Now, we obtain the~following using the~standard CFI construction from
\cite{cai_optimal_1992}.

\begin{theorem}
 \label{thm:wl-lower-bound-via-cfi}
 For every $k \geq 1$, $n \geq k$ there are non-isomorphic (coloured) graphs
 $G,H$ such that
 \begin{itemize}
  \item $|V(G)| = |V(H)| = \Theta(n)$,
  \item $k$-WL does not distinguish $G,H$,
  \item $\ged(G,H) \geq \frac{n^2}{k^2}$,
  \item $G$ and $H$ have VC~dimension at most $3$, and
  \item each colour class in~$G$ and $H$ has size at most $4$.
 \end{itemize}
\end{theorem}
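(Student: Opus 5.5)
We start from the standard CFI construction of \cite{cai_optimal_1992} and then enlarge the edit distance with the $\ell$-blowup, using \cref{lem:edit-distance-blowup,lem:wl-blowup}.

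\textbf{Base pair.} Take a $3$-regular base graph $B$ on $m = \Theta(k)$ vertices with treewidth (or separator number) larger than $k+1$, for instance a constant-degree expander. Form the CFI graphs $G_0 = \mathrm{CFI}(B)$ and $H_0 = \widetilde{\mathrm{CFI}}(B)$ (one edge twisted), coloured by the vertex and edge of $B$ they come from. By \cite{cai_optimal_1992}:
\begin{itemize}
\item $G_0 \not\cong H_0$;
\item $k$-WL does not distinguish them, since the cops-and-robber argument gives Duplicator a winning strategy in $\BP_{k+1}$ whenever the treewidth of $B$ exceeds $k$;
\item they have $\Theta(k)$ vertices and colour classes of size at most $4$.
\end{itemize}
Since $G_0\not\cong H_0$ and colour-preserving bijections are required, $\ged(G_0,H_0)\ge 1$.

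\textbf{Blowup.} Put $\ell \coloneqq \lceil n/|V(G_0)|\rceil = \Theta(n/k)$ and set $G \coloneqq G_0^{(\ell)}$, $H \coloneqq H_0^{(\ell)}$. Then:
\begin{itemize}
\item $|V(G)| = |V(H)| = \ell\,|V(G_0)| = \Theta(n)$.
\item For $k\ge 2$, \cref{lem:wl-blowup} shows $k$-WL still fails on $G,H$. The case $k=1$ follows by monotonicity, choosing $B$ for $k=2$.
\item \cref{lem:edit-distance-blowup} gives $\ged(G,H)\ge \tfrac13\ell^2 = \Omega(n^2/k^2)$. To get exactly $n^2/k^2$, either choose $B$ with $m\le c k$ for a small constant $c$, or run the construction for a slightly larger $k'=O(k)$, absorbing constants into the $\Theta(n)$.
\end{itemize}

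\textbf{Colour classes.} The blowup colouring $\lambda^{(\ell)}(v,i)=(\lambda(v),i)$ keeps colour classes of size at most $4$, since each class of $G_0$ is copied once for each $i$.

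\textbf{VC dimension.} This is the step needing the most care. In $G$, the neighbourhood of $(v,i)$ is $N_{G_0}(v)\times[\ell]$, so $\mathcal N_G$ is in effect $\mathcal N_{G_0}$ lifted to blocks. A shattered set therefore uses at most one point per block, and yields a shattered set of the same size in $\mathcal N_{G_0}$. Hence $\VC(G)\le \VC(G_0)$, and it suffices to bound $\VC(G_0)$ by $3$.

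In the CFI gadgets over a $3$-regular base, every vertex has degree at most $3$ in the standard variant: each middle vertex $a^S$ is adjacent to one of $a_{e,0},a_{e,1}$ for each of the three incident edges. So every neighbourhood has size at most $3$, and a set of $4$ elements cannot be shattered, because the full set is never cut out. This gives $\VC \le 3$.

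If the chosen variant of the construction has degree $4$ somewhere, for example with edge vertices adjacent across gadgets, I would instead check directly that no $4$-set is shattered. Alternatively, I would pick the variant in which edge-pairs are identified, so that degrees stay at most $3$ and the bound follows.
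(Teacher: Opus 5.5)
Your route is the paper's: a CFI pair over a $3$-regular base of size $\Theta(k)$, the $\ell$-blowup via \cref{lem:wl-blowup,lem:edit-distance-blowup}, the VC bound by projecting a shattered set (at most one point per block) back to $G_0$ and using maximum degree $3$, and colour classes unchanged by the blowup. Your reduction of $k=1$ to $k=2$ by monotonicity is slightly cleaner than the paper, which applies \cref{lem:wl-blowup} (stated for $k\ge 2$) to all $k\ge 1$. Two steps need repair, though.

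\textbf{The constant in the edit distance.} With $\ell=\lceil n/|V(G_0)|\rceil$ you get $\ged(G,H)\ge\frac13\ell^2\approx n^2/(3|V(G_0)|^2)$. Each degree-$3$ gadget has $4$ middle and $6$ edge vertices, so $|V(G_0)|=10|V(B)|\ge 10(k+3)$. For large $n$ this gives only about $n^2/(300k^2)$. Neither of your remedies works. A base graph of treewidth $>k+1$ needs at least $k+3$ vertices, so $m\le ck$ with small $c$ is impossible. Passing to a larger $k'$ enlarges $G_0$, shrinks $\ell$, and makes the bound worse. The parameter to scale is $\ell$ itself. The paper takes $\ell=\lceil 3n/k\rceil$, so $\frac13\ell^2\ge 3n^2/k^2\ge n^2/k^2$. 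Meanwhile $|V(G)|=\ell\cdot\Theta(k)=\Theta(n)$ still holds because $n\ge k$. For $k=1$ this means $\ell=3n$ with a constant-size base pair.

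\textbf{The degree bound.} You only check middle vertices, and your fallback has the two variants swapped. Take the standard CFI construction, where $a^v_{e,i}$ is joined to $a^w_{e,i}$ (to $a^w_{e,1-i}$ on the twisted edge). Each edge vertex has exactly two middle neighbours in its gadget: for $a^v_{e,1}$ the two even subsets of $E(v)$ containing $e$, for $a^v_{e,0}$ the two avoiding it. It also has one neighbour across, so $G_0,H_0$ are $3$-regular, which is what the paper uses. If you instead identify $a^v_{e,i}$ with $a^w_{e,i}$, that vertex gets two middle neighbours in each of the two gadgets, i.e.\ degree $4$. That variant does not give ``degree at most $3$'', and the VC bound would then need a separate argument. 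So use the standard variant and verify the edge vertices as well.
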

\begin{proof}
  Let $k \geq 1$.
  By \cite{cai_optimal_1992}, there are non-isomorphic 3-regular coloured graphs
  $G',H'$ of~size $\Theta(k)$, with colour class size at most $4$, that are
  not distinguished by $k$-WL.
  In~particular, $\ged(G', H') \geq 1$.

  Let $n \geq k$ and set $\ell \coloneqq \ceil{3n/k}$.
  Consider the~graphs $G \coloneqq (G')^{(\ell)}$ and
  $H \coloneqq (H')^{(\ell)}$.
  Then $|V(G)| = |V(H)| = \Theta(n)$, and $k$-WL does not distinguish
  $G$ and $H$ by \cref{lem:wl-blowup}.
  Also, by \cref{lem:edit-distance-blowup}, we have that
  \[\ged(G,H) \geq \frac13 \cdot \ell^2 \cdot \ged(G',H')
              \geq \frac{n^2}{k^2}.\]

  Let $S \subseteq V(G)$ be a~set shattered by $\mathcal N_G$. Clearly, we
  cannot have $(v,i), (v,j) \in S$ for some $v \in V(G)$
  and $i\neq j \in [n]$, since the~two vertices have exactly the~same
  neighbours.
  Hence the~set
  $S' = \{v \in V(G') \mid (v,i) \in V(G) \text{ for some }i\in [n]\}$ has the
  same cardinality as $S$.
  Furthermore, since $S$ is shattered by $\mathcal N_G$, the~set $S'$ is
  shattered in~$\mathcal N_{G'}$.
  However, since every vertex in~$G'$ has at most degree $3$, we conclude that
  $|S| = |S'| \leq 3$.
  The~same reasoning holds true for $H$.

  Lastly, the~colour class size of~$G'$ and $H'$ is at most $4$, and the
  definition of~$n$-th blowup does not increase the~colour class sizes.
\end{proof}

In particular, the~theorem implies that $o(1/\sqrt{\epsilon})$-WL cannot
distinguish between graphs of~edit distance $\epsilon n^2$, even for graphs
of constant VC~dimension and constant colour class size.
We can improve on the~dependence on $\epsilon$ by instead using a~construction
from \cite{OdonnellWWZ14}.

\begin{theorem}
 \label[theorem]{thm:linear_lower_bound}
 For every $k \geq 1$, $n \geq k$ there are non-isomorphic graphs $G,H$ such
 that
 \begin{itemize}
  \item $|V(G)| = |V(H)| = \Theta(n)$,
  \item $k$-WL does not distinguish $G,H$, and
  \item $\ged(G,H) \geq \frac{n^2}{k}$.
 \end{itemize}
\end{theorem}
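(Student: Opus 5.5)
We follow the template of the proof of \cref{thm:wl-lower-bound-via-cfi}, but replace the CFI base pair by the construction of O'Donnell, Wright, Wu and Zhou \cite{OdonnellWWZ14}.

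Their construction is based on random 3XOR (or 3LIN) instances and the associated CFI-type graphs. It yields, for every $k$, non-isomorphic graphs $G',H'$ on $m=\Theta(k)$ vertices (possibly coloured) that $k$-WL does not distinguish. The key difference from the plain CFI pair is that $G'$ and $H'$ are far apart \emph{even in the sparse sense}: $\ged(G',H') \geq \delta\, m$ for some absolute constant $\delta>0$. Concretely, an unsatisfiable 3XOR instance is far from satisfiable, so any bijection leaves a linear number of gadget constraints violated, and each violated constraint forces a mismatched edge. The WL lower bound comes from Ben-Sasson–Wigderson-type expansion arguments, which give that $\Omega(m)$ pebbles are needed.

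Given this, set $\ell\coloneqq\ceil{n/k}$ and take $G\coloneqq(G')^{(\ell)}$, $H\coloneqq(H')^{(\ell)}$.
\begin{itemize}
\item Size: both graphs have $m\ell=\Theta(k)\cdot\ceil{n/k}=\Theta(n)$ vertices, since $n\geq k$.
\item Indistinguishability: by \cref{lem:wl-blowup}, $k$-WL still fails to distinguish $G$ and $H$ (this needs $k\geq 2$; the case $k=1$ can be absorbed by monotonicity, using the pair for $k=2$).
\item Edit distance: by \cref{lem:edit-distance-blowup},
\[
\ged(G,H)\geq \tfrac13\,\ell^2\,\delta m=\Omega\!\left(\frac{n^2}{k^2}\cdot k\right)=\Omega\!\left(\frac{n^2}{k}\right).
\]
\end{itemize}
Rescaling $\ell$ by a constant factor absorbs both $\delta$ and $\tfrac13$, giving the bound $n^2/k$ exactly.

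The theorem statement does not mention colours, so the colours must be removed. The blowup already produces coloured graphs, and the O'Donnell et al.\ graphs may be coloured as well. I would remove colours by the standard trick of attaching to each colour class a distinct gadget: for instance, join it completely to an independent set or clique whose size is unique to that class and large enough. Such gadgets are identified by 1-WL, so they preserve non-distinguishability, and they only increase the edit distance. The added vertices must number $O(n)$ in total, which is delicate because there are $\Theta(n)$ colour classes. An alternative I would try first is to note that O'Donnell et al.\ already give uncoloured graphs, and to blow up without the index colouring. This requires rechecking \cref{lem:wl-blowup}: the Duplicator direction goes through verbatim, and for the Spoiler direction we can use that twins (the copies $(v,i)$ for fixed $v$) are recognised by 2-WL.

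The main obstacle is extracting from \cite{OdonnellWWZ14} the precise statement that the base pair has linear edit distance together with a linear WL lower bound, with constants uniform in $k$, in the uncoloured setting.
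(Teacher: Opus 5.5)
Your proposal matches the paper's proof. Both take a base pair $G',H'$ on $\Theta(k)$ vertices that $k$-WL fails to distinguish, with $\ged(G',H')=\Omega(k)$; the paper simply cites \cite{OdonnellWWZ14,AtseriasM13,roberson_lasserre_2024} for this. Both then blow up by $\ell=\Theta(n/k)$ (the paper takes $\ell=\lceil 3n/k\rceil$) and conclude with \cref{lem:wl-blowup} and \cref{lem:edit-distance-blowup}.

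Your extra care about $k=1$ and about the colours introduced by the blowup goes beyond the paper, which passes over both points. If you drop the index colouring, cite Pikhurko's original uncoloured form of \cref{lem:edit-distance-blowup}, since the lemma as stated is for coloured blowups.
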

\begin{proof}
 Let $k \geq 1$.
 By \cite{OdonnellWWZ14,AtseriasM13,roberson_lasserre_2024}, there are graphs
 $G',H'$ of~size $\Theta(k)$ that are not distinguished by
 $k$-WL, but $\ged(G', H') \geq k$.

 Let $n \geq k$ and set $\ell \coloneqq \ceil{3n/k}$.
 Consider the~graphs $G \coloneqq (G')^{(\ell)}$ and
 $H \coloneqq (H')^{(\ell)}$.
 Then $|V(G)| = |V(H)| = \Theta(n)$, and $k$-WL does not distinguish
 $G$ and $H$ by \cref{lem:wl-blowup}.
 Also, by \cref{lem:edit-distance-blowup}, we have that
 \[\ged(G,H) \geq \frac13 \cdot \ell^2 \cdot \ged(G',H')
             \geq \frac{n^2}{k}.\qedhere\]
\end{proof}

In particular, we get that $o(1/\epsilon)$-WL cannot distinguish between graphs
of~edit distance~$\epsilon n^2$.

\section{Concluding Remarks}
We identified VC dimension as a~key parameter governing the~runtime
of~additive approximation algorithms for the~graph edit distance problem
and the~quadratic assigment problem: for a~fixed approximation error
$\epsilon$ the~runtime is $n^{O_\epsilon(d)}$, where $d$ is
an appropriately defined VC dimension of the~input instance.
Though we did not discuss them here except briefly in
Remark~\ref{rem:higher-order}, similar results can be obtained
for related assignement problems (see~\cite{AroraFK02}).

To solve the~robust graph isomorphism problem, which may be viewed as
a~special case of graph edit distance, we can even use the
$O_\epsilon(d)$-dimensional WL algorithm, which implies that for all
graphs the~$O_\epsilon(\log n)$-dimensional WL algorithm suffices to
solve $\epsilon$-\GI, whereas exact graph isomorphism
requires WL dimension $\Omega(n)$.

The question whether for every $\epsilon$ there is a~polynomial time
algorithm solving $\epsilon$-$\GI$ remains open. We believe this
a~natural and interesting question that has received surprisingly
little attention. Related to this is the~question whether for all
$\epsilon$ there is a~$k$ (not depending on the~order $n$ of the~input
graphs or their VC dimension $d$) such that $k$-WL solves
$\epsilon$-\GI. Note that our lower bound says that for every $k$ there
is an $\epsilon$ such that $k$-WL does not solve $\epsilon$-\GI.

\printbibliography

\appendix

\section{Assignment problems}
\label{app:assignment-problem}

Let us consider the following generalisation of~the assignment problem,
called \textit{Assignment Problem with Extra Constraints}, or APEC:
\begin{align}
  \label{apec}
  x \in \mathcal A~&  \\
  a_k x \geq b_k & \hspace{15pt} k \in [K], \nonumber
\end{align}
where $\mathcal A$ is the standard set of~matching constraints:
$\sum_i x(i,j) = 1$, $\sum_j x(i,j) = 1$, $x(i,j) \geq 0$.
Such APEC is called $B$-bounded if all coefficients $a_k$ are between $-B$
and $B$.

The following theorem is a~subcase of~\cite[Theorem 3]{AroraFK02} for
$B$-bounded APEC. They show a~randomised algorithm
in \cite[Section 2]{AroraFK02}
and state it is possible to derandomise it in~a~standard way
using the method of~conditional probabilities.
\begin{theorem}[\cite{AroraFK02}]
  \label{thm:arora-rounding-theorem}
  Let $B$ be constant. There is an algorithm that, given a~$B$-bounded APEC
  as in~\eqref{apec} and a~fractional solution $x^*$, produces a~matching $x$
  that contains at~least $n - o(n)$ edges that satisfies
  \[ a_k x \geq (1 - o(1))b_k - \tilde O(B\sqrt n),\]
  The running time of~the algorithm is $\tilde O(N)$, where $N$ is the number
  of~non-zero entries in~$x^*$.
\end{theorem}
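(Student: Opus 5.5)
The statement is the rounding theorem of Arora, Frieze and Kaplan, specialised to $B$-bounded APECs. I would prove it by a dependent randomized rounding of the fractional solution $x^*$ that outputs an integral matching, followed by a concentration argument for each extra constraint and a union bound over $k\in[K]$. I would not use independent row-wise rounding. There, each row $i$ picks a column with probability $x^*(i,\cdot)$, but column collisions then discard a constant fraction of the rows, which is too many. I would instead use \emph{dependent (pipage) rounding} on the bipartite support graph of $x^*$. This is a choice I am making, not the original AFK scheme.

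The procedure is as follows. Let $E^*$ be the set of pairs $(i,j)$ with $0<x^*(i,j)<1$. While $E^*\neq\emptyset$, find either a cycle or a maximal path in $E^*$. Decompose it into two alternating edge classes. Shift $x$ by $+\delta$ on one class and $-\delta$ on the other, where the sign is random and $\delta$ is chosen so that the expectation of every coordinate is preserved. Stop the shift at the first moment some coordinate becomes $0$ or $1$. Since $x^*$ satisfies the matching constraints with equality, the row and column sums stay $1$ at every internal vertex of the path or cycle. Each step fixes at least one coordinate, so the number of steps is at most $N$. With standard bookkeeping each step costs amortised $\tilde O(1)$, giving total time $\tilde O(N)$.

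The resulting integral $x$ satisfies three properties: (i) $\mathbb E[x(i,j)]=x^*(i,j)$; (ii) $x$ is a perfect matching, so in particular it has at least $n-o(n)$ edges; (iii) the variables $x(i,j)$ are negatively correlated in the sense of Gandhi et al. For (iii), it suffices that any edge set inside the star of a single vertex is negatively correlated, and this is exactly what pipage rounding guarantees.

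For the extra constraints, write $a_k=a_k^+-a_k^-$ with $a_k^\pm\ge0$ entrywise and bounded by $B$. Property (iii) allows Chernoff--Hoeffding bounds for nonnegative weighted sums. For $a_k^+x$ this gives
\[
\Pr\bigl[a_k^+x\le a_k^+x^*-\lambda\bigr]\le \exp\bigl(-\lambda^2/(2B\,a_k^+x^*)\bigr)\le \exp\bigl(-\lambda^2/(2B^2n)\bigr),
\]
using $\sum_{ij}x^*(i,j)=n$. The same bound holds for the upper tail of $a_k^-x$. Take $\lambda=O(B\sqrt{n\log K})$ and use that $K$ is polynomial in $n$, as in our applications. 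A union bound over the $2K$ events then shows that, with positive probability, every constraint satisfies $a_kx\ge a_kx^*-\tilde O(B\sqrt n)\ge b_k-\tilde O(B\sqrt n)$. This is stronger than the claimed $(1-o(1))b_k-\tilde O(B\sqrt n)$.

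Derandomisation is by the method of conditional probabilities. The pessimistic estimator is the sum over $k$ of the exponential-moment (Raghavan-style) bounds. At each pipage step there are two choices of sign, and we choose the one that does not increase the estimator.

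The main obstacle is the concentration step. The difficulty is that Chernoff bounds under negative correlation are only available for sums with nonnegative weights, and one must check that the pipage-rounding correlation structure really supports them. Splitting into $a_k^\pm$ handles the sign issue. The remaining problem is the upper tail of $a_k^-x$, which needs negative correlation of the products $\prod_{e\in S}x_e$ as well as of the complements $\prod_{e\in S}(1-x_e)$. I would first try the original Gandhi et al.\ argument that both hold for edge sets within a star. If that does not extend to arbitrary edge sets, I would fall back to bounding the variance via a martingale along the pipage steps and applying Freedman's inequality. Each step changes $a_k x$ by at most $O(B\delta)$ times the path length, which requires a careful bound on the total quadratic variation, and I expect this to be the technical heart of the proof.
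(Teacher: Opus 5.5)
The paper does not prove this statement itself. It imports it from Arora, Frieze and Kaplan, who give a randomised rounding in their Section~2 and derandomise it by conditional probabilities. Your pipage-based argument has a genuine gap at exactly the step you flagged. It cannot be closed by a sharper analysis, because the output distribution of your scheme is not concentrated.

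The negative correlation of Gandhi et al.\ holds only for edge sets inside the star of a single vertex. So it gives Chernoff bounds only for sums over one star, whereas $a_k^{+}x$ ranges over all of $[n]\times[n]$. Here is a concrete counterexample.
\begin{itemize}
\item Let $M_1=\{(i,i)\mid i\in[n]\}$ and $M_2=\{(i,i+1 \bmod n)\mid i\in[n]\}$. Their union is a single cycle of length $2n$.
\item Let $x^*=\frac12(\mathbf{1}_{M_1}+\mathbf{1}_{M_2})$, with the two constraints $\mathbf{1}_{M_1}\cdot x\ge n/2$ and $\mathbf{1}_{M_2}\cdot x\ge n/2$ (so $B=1$, $K=2$).
\item Pipage rounding never leaves the support of $x^*$ and always returns a perfect matching. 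The only perfect matchings of an even cycle are its two alternating classes.
\item Hence your output is $M_1$ or $M_2$, and it violates one of the two constraints by $n/2$. In particular, all variables $x(i,i)$ are equal in the output, i.e.\ perfectly positively correlated.
\end{itemize}
So your tail bound $\exp(-\lambda^2/(2B^2n))$ is false, and the Freedman fallback cannot succeed. Your pessimistic estimators have nothing valid to be built on. The ``stronger'' conclusion you claim, a perfect matching inside the support with additive error $\tilde O(B\sqrt n)$, fails in general.

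What is missing is the idea around which the weaker form of the statement is designed. You must give up $o(n)$ vertices to cut long-range dependencies, so that each constraint becomes a sum of many \emph{independent} contributions of bounded size. In the example, do the following:
\begin{itemize}
\item cut the cycle at randomly placed points into arcs of polylogarithmic length $L$;
\item on each arc, independently choose the edges of $M_1$ or of $M_2$;
\item leave $O(1)$ vertices unmatched at each junction.
\end{itemize}
By Hoeffding this gives deviation $O(\sqrt{nL\log K})$. The cost is $O(n/L)=o(n)$ unmatched vertices and an $o(1)$-fraction loss of mass. This is exactly the slack the theorem allows: $n-o(n)$ edges and $(1-o(1))b_k$. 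A proof has to achieve such a localisation for an arbitrary support of $x^*$. A scheme like yours, which preserves perfectness and stays inside the support, cannot do so.
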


Using this theorem, we can prove \cref{arora-lemma}.

\begin{proof}[Proof of~\Cref{arora-lemma}]
    The proof follows \cite{AroraFK02}. Suppose a~$B$-bounded QAP,
    and let $x^*$ be the optimal (integer) solution,
    $\phi^* = \phi_{x^*}$ be the corresponding optimal bijection.
    We will do the~following~procedure for all partial injective maps
    $\alpha:[n] \to [n]$ of~size at~most $m$.
    For each of~them, the procedure will be in~polynomial time, and
    the overall running time will be in~$n^{O(m)}$.

    Let $\alpha: [n] \to [n]$ be of~size $m$ and let $\epsilon > 0$.
    We construct the following linear program. For any $v,v' \in [n]$,
    we have a~variable $x(v,v')$.
    \begin{align}
      \label{proof-linear-program}
        \text{minimise} \hspace{40pt}  \cost'(x) \coloneqq
                    \sum_{v,v'} \hspace{5pt} &b_{\alpha}(v,v') x(v,v') & \\
        \text{s.t.} \hspace{40pt} x &\in \mathcal A_G & \nonumber \\
        \mathbf{a}(v,v') \cdot x &
            \geq b_{\alpha}(v,v') - \frac13\epsilon n & \forall v,v' \nonumber\\
        \mathbf{a}(v,v') \cdot x &
            \leq b_{\alpha}(v,v') + \frac13\epsilon n & \forall v,v' \nonumber
    \end{align}
    where $\mathcal A_G$ is the set of~assignment constraints
    \mbox{$\sum_{v}x(v,v') = 1$}, \mbox{$\sum_{v'}x(v,v') = 1$},
    \mbox{$x(v,v') \geq 0$},
    and~$\mathbf{a}(v,v')$ is a~vector containing the value $c(v,v',w,w')$
    at~position $w,w'$. We think of~these vectors as
    ``linearised matrices'' and then $\cdot$ is a~scalar product.

    Solve (fractionally) the linear program \eqref{proof-linear-program} and
    denote the value of~the optimal solution by~$\zeta$.
    We create an APEC as follows:
    \begin{align*}
        x &\in \mathcal A_G\\
        \mathbf b \cdot x &\leq \frac{1}{n}\zeta, \nonumber \\
        \mathbf{a}(v,v') \cdot x &
            \geq b_{\alpha}(v,v') - \frac13\epsilon n & \forall v,v' \nonumber\\
        \mathbf{a}(v,v') \cdot x &
            \leq b_{\alpha}(v,v') + \frac13\epsilon n & \forall v,v' \nonumber
    \end{align*}
    where $\mathbf b$ is a~vector containing $\frac{1}{n}b_{\alpha}(i,j)$ at
    position $i,j$.

    By~\Cref{thm:arora-rounding-theorem}, there is a~polynomial time algorithm,
    which with high probability, produces a~matching $\hat{z}$ that contains at
    least $n-o(n)$ edges and satisfies
    \begin{align}
        \label{proof:properties-of-z}
        \hat{z} &\in \mathcal A_G\\
        \mathbf b \cdot \hat{z} &
            \leq (1+o(1))\frac{1}{n}\zeta + \tilde{O}(B\sqrt n),\nonumber\\
        \mathbf{a}(v,v') \cdot \hat{z} &
            \geq (1-o(1)) \left(b_{\alpha}(v,v') - \frac13\epsilon n\right)
                                -\tilde{O}(B\sqrt{n}) & \forall v,v',\nonumber\\
        \mathbf{a}(v,v') \cdot \hat{z} &
            \leq (1+o(1))\left(b_{\alpha}(v,v') + \frac13\epsilon n\right)
                                +\tilde{O}(B\sqrt{n}) & \forall v,v'. \nonumber
    \end{align}
    Note that the maximal coefficient in~$\mathbf b$ is at~most $1$ and
    the maximal coefficient in~$\mathbf a(v,v')$ is at~most $B$.
    Extend $\hat{z}$ arbitrarily to a~perfect matching $z$, adding $o(n)$ edges,
    and output $z$.

    Now, suppose that we have the $\alpha$ with the condition given by
    the lemma, that is, for all $v,v'\in[n]$ we have
    \eqref{eq:approximation-of-b}. Note that
    $\mathbf a(v,v') \cdot x^* = b_{\bar \phi}(v,v')$ and hence,
    by~\eqref{eq:approximation-of-b}, $x^*$ is a~feasible solution to the
    linear program \eqref{proof-linear-program}. Therefore, we get
    \begin{align}
      \label{zeta-bound}
      \zeta \leq \cost'(x^*) = \sum_{v,v'} b_{\alpha}( v,v')x^*(v,v')
            \leq \sum_{v,v'} b_{\bar \phi}(v,v') x^*(v,v') + \frac\epsilon3 n^2
            \leq \cost(x^*) + \frac\epsilon3 n^2.
    \end{align}

    Furthermore, we have that the matching $z$ which
    by~\eqref{proof:properties-of-z} satisfies
    \[\mathbf b \cdot z
            \leq (1 + o(1)) \frac{1}{n} \zeta + \tilde{O}(B \sqrt n) + o(n)\]
    and multiplying by~$n$ we get
    \begin{equation}
        \label{approx-cost-upper-bound}
        \sum_{v,v'} b_{\alpha}(v,v') z(v,v') \leq (1+o(1))\zeta + o(n^2).
    \end{equation}
    Furthermore, \eqref{proof:properties-of-z} implies
    \begin{align}
        \label{upper-bound-bij}
        \sum_{k,l} c(v,v',k,l) z(k,l) &
            \leq (1+o(1))\left(b_{\alpha}(v,v') + \frac13\epsilon n\right)
                        + \tilde{O}(B \sqrt n) + o(Bn) \nonumber \\
        &\leq b_{\alpha}(v,v')+\frac13\epsilon n + o(Bn).
    \end{align}
    We consider $B$ to be a~constant. We can bound the cost of~$z$ as follows:
    \begin{align*}
        \cost(z) &= \sum_{v,v'}\sum_{k,l}  c(v,v',k,l) z(k,l)z(v,v')
                                            & \text{definition of~}\cost\\
        &\leq \sum_{v,v'} (b_{\alpha}(v,v') + \frac13\epsilon n + o(n))z(v,v')
                                            & \eqref{upper-bound-bij}\\
        &\leq \frac13\epsilon n^2 + o(n^2) + \sum_{v,v'} b_{\alpha}(v,v')z(v,v')
                                            &  \\
        &\leq \frac13\epsilon n^2 + (1+o(1))\zeta + o(n^2)
                                            & \eqref{approx-cost-upper-bound}\\
        &\leq \zeta + \frac13\epsilon n^2 + o(n^2) & \\
        &\leq \cost(x^*) + \frac23\epsilon n^2 + o(n^2)
                                            & \text{\eqref{zeta-bound}}\\
        &\leq \cost(x^*) + \epsilon n^2     & \text{for $n$ large enough,}\\
    \end{align*}
    which concludes the proof of~the lemma.
\end{proof}

\end{document}